\documentclass[sigconf]{acmart}
\usepackage{color,colortbl}
\usepackage{cleveref}

\crefformat{section}{\S#2#1#3} 
\crefformat{subsection}{\S#2#1#3}
\crefformat{subsubsection}{\S#2#1#3}

\usepackage{multirow}
\usepackage{hhline}
\usepackage{amssymb}
\usepackage{array}
\usepackage{pifont}
\usepackage{enumerate}

\usepackage{graphicx,epsfig,amsfonts,bbm,epstopdf,tabularx}
\usepackage{subfigure}

\usepackage{latexsym}
\usepackage{url}

\usepackage{xcolor}
\hyphenation{op-tical net-works semi-conduc-tor}
\usepackage{algorithm,algpseudocode}
\usepackage{algorithmicx}
\usepackage{stmaryrd}
\usepackage{bm}
\usepackage{multirow}
\usepackage{graphicx}
\usepackage{arydshln}
\usepackage{mathtools}
\usepackage{tabulary}
\usepackage{booktabs}
\usepackage{subfigure}
\usepackage{xspace}

\newtheorem{myDef}{Definition}
\newtheorem{myObs}{Observation}
\algnewcommand{\LeftComment}[1]{\Statex \(\triangleright\) #1}

\newcolumntype{L}[1]{>{\raggedright\let\newline\\\arraybackslash\hspace{0pt}}m{#1}}
\newcolumntype{C}[1]{>{\centering\let\newline\\\arraybackslash\hspace{0pt}}m{#1}}
\newcolumntype{R}[1]{>{\raggedleft\let\newline\\\arraybackslash\hspace{0pt}}m{#1}}

{\hspace*{\fill}$\Box$\par\vspace{4mm}}

\newboolean{showcomments}
\setboolean{showcomments}{true}

\newcommand{\mohammad}[1]{  \ifthenelse{\boolean{showcomments}}
	{\textcolor{red}{(Mohammad says:  #1)}}{}}

\ifodd 1
\newcommand{\com}[1]{\textbf{\color{blue} (COMMENT: #1)}}
\else

\newcommand{\com}[1]{}
\fi

\newcommand{\be}{\begin{equation}}
\newcommand{\ee}{\end{equation}}
\def\bee#1\eee{\begin{align}#1\end{align}}
\newcommand{\bse}{\begin{subequations}}
	\newcommand{\ese}{\end{subequations}}

\newcommand{\ocm}{\texttt{BatMan}\xspace}
\newcommand{\ocmrate}{\texttt{BatManRate}\xspace}

\newcommand{\prob}{\textsc{OLIM}\xspace}

\newcommand{\probeqs}{\textsc{OLIM}\xspace}
\newcommand{\ofa}{\texttt{OPT}\xspace}

\newcommand{\fon}{\texttt{OnFix}\xspace}
\newcommand{\lyp}{\texttt{LypOpt}\xspace}
\newcommand{\bon}{\texttt{NoSTR}\xspace}
\newcommand{\preday}{\texttt{PreDay}\xspace}

\newcommand{\costopt}{\textsf{cost}(\ofa)\xspace}
\newcommand{\costocm}{\textsf{cost}(\ocm)\xspace}
\newcommand{\costocmrate}{\textsf{cost}(\ocmrate)\xspace}

\newcommand{\dt}{d(t)\xspace}

\newcommand{\xt}{x(t)\xspace}

\newcommand{\xit}{x_{i}(t)\xspace}

\newcommand{\bt}{b(t)\xspace}
\newcommand{\bd}{b_{\texttt{d}}(t)\xspace}

\newcommand{\pt}{p(t)\xspace}

\addtolength{\abovedisplayskip}{-1mm}
\addtolength{\belowdisplayskip}{-1mm}

%

\makeatletter
\def\@copyrightspace{\relax}
\makeatother

\copyrightyear{2019}
\acmYear{2019}
\setcopyright{rightsretained}
\acmConference{Submitted}{2019}



\begin{document}
	%
	\title{Online Inventory Management \\
	with Application to Energy Procurement in Data Centers}


	\author{Lin Yang}
	\authornote{Both authors contributed equally to this research.}
	\affiliation{%
	\institution{Chinese University of Hong Kong}}
	\email{yl015@ie.cuhk.edu.hk}

	\author{Mohammad H. Hajiesmaili}
	\authornotemark[1]
	\affiliation{%
		\institution{UMass Amherst}}
	\email{hajiesmaili@cs.umass.edu}
	
	\author{Ramesh Sitaraman}
	\affiliation{%
		\institution{UMass Amherst and Akamai}}
	\email{ramesh@cs.umass.edu}
	
	\author{Enrique Mallada}
	\affiliation{%
		\institution{Johns Hopkins University}}
	\email{mallada@jhu.edu}
	
	\author{Wing S. Wong}
	\affiliation{
		\institution{Chinese University of Hong Kong}}
	\email{wswong@ie.cuhk.edu.hk}
	
	\author{Adam Wierman}
	\affiliation{%
		\institution{California Institute of Technology}}
	\email{adamw@caltech.edu}
	
	
	%


\renewcommand{\shortauthors}{Yang and Hajiesmaili, et al.}	
\renewcommand{\shorttitle}{Online Inventory Management ...}		
	
\begin{abstract}
	Motivated by the application of energy storage management in electricity markets, this paper considers the problem of online linear programming with inventory management constraints.  Specifically, a decision maker should satisfy some units of an asset as her demand, either form a market with time-varying price or from her own inventory. The decision maker is presented a price in slot-by-slot manner, and must immediately decide the purchased amount with the current price to cover the demand or to store in inventory for covering the future demand. The inventory has a limited capacity and its critical role is to buy and store assets at low price and use the stored assets to cover the demand at high price. The ultimate goal of the decision maker is to cover the demands while minimizing the cost of buying assets from the market. 
	
	We propose \ocm, an online algorithm for simple inventory models, and \ocmrate, an extended version for the case with rate constraints. Both \ocm and \ocmrate achieve optimal competitive ratios, meaning that no other online algorithm can achieve a better theoretical guarantee.
	To illustrate the results, we use the proposed algorithms to design and evaluate energy procurement and storage management strategies for data centers with a portfolio of energy sources including the electric grid, local renewable generation, and energy storage systems.

\end{abstract}
	
	
	\begin{CCSXML}
		<ccs2012>
			<concept>
			<concept_id>10003752.10003809.10010047</concept_id>
			<concept_desc>Theory of computation~Online algorithms</concept_desc>
			<concept_significance>500</concept_significance>
			</concept>
			<concept>
			<ccs2012>
			<concept>
			<concept_id>10010583.10010662.10010663</concept_id>
			<concept_desc>Hardware~Energy generation and storage</concept_desc>
			<concept_significance>100</concept_significance>
			</concept>
			</ccs2012>
			<concept>
			<concept_id>10010583.10010662.10010674.10011724</concept_id>
			<concept_desc>Hardware~Enterprise level and data centers power issues</concept_desc>
			<concept_significance>500</concept_significance>
			</concept>
		</ccs2012>
	\end{CCSXML}

	\ccsdesc[500]{Theory of computation~Online algorithms}
	\ccsdesc[500]{Hardware~Energy generation and storage}
	\ccsdesc[500]{Hardware~Enterprise level and data centers power issues}
	
	
	%
	\maketitle

\section{Introduction}
Online optimization and decision making under uncertainty is a fundamental topic that has been studied using a wide range of theoretical tools and in a broad set of applications. On the theoretical side, it has been approached from the perspective of competitive algorithms with the competitive ratio as the performance metric~\cite{Borodin98}, online learning with the regret as the performance metrics~\cite{hazan2016introduction,yang2018optimal}, and reinforcement learning with different modeling techniques such as Markov decision process~\cite{sutton2018reinforcement}. On the application side, recent scenarios where theoretical results have had an impact for real-world design include data center optimization~\cite{tu2013dynamic,lin2013dynamic,albers2017energy,ren2018datum,albers2018optimal}, energy systems~\cite{zhang2018peak,zhou2015online,khezeli2018risk,hajiesmaili2017crowd}, cloud management~\cite{mao2016optimal,zhang2017optimal,kobayashi2007tight},  computer and communication networks~\cite{feldkord2018online,chan2007online,hajiesmaili2017incentivizing,even2018online}, and beyond.

Motivated by storage management problem for data centers procuring energy from the electricity market, this paper studies a generalization of the classical online optimization formulation: \textit{online linear programming with inventory management} (\prob). In this problem, in each slot, a decision maker should satisfy $\dt$ units of an asset, e.g., energy demand, as her demand, either from a market with time-varying price or from her own inventory, e.g., energy storage system. In slot $t$, the decision maker is presented a price $\pt$, and must decide $\xt$ as the procurement amount with the current price to cover the demand $\dt$ or additionally to store in inventory for covering the future demand. The capacity of inventory is $B$ units, and its critical role is to buy and store assets at low price and use those for covering the demand at high price.
The goal is to cover the demands while minimizing the cost of buying from the market. A formal statement of the problem is presented in \S~\ref{sec:formulation}.


\probeqs captures a variety of timely applications in different domains, e.g., booking hotels or flight tickets in advance to for high season by travel agencies or charging energy storage in data centers to use in high price periods.
The application that motivates our interest in studying \probeqs is designing energy procurement and storage management strategies for large-scale electricity customers such as data centers, university campuses, or enterprise headquarters. Usually, these customers can satisfy their energy demand from a portfolio of sources, including the grid, local renewable sources, and on-site energy storage systems. Notable examples are thermal energy storage in Google data center in Taiwan~\cite{taiwan}, Tesla batteries to power Amazon data center in California~\cite{amazontesla}, Google data center with on-site renewable sources in Belgium~\cite{Belgium}, and large-scale batteries in Apple Park's microgrid~\cite{applebattery}. The electricity pricing for large customers is moving toward real-time pricing and the price changes dynamically over time~\cite{zhang2018peak,ghamkharienergy,xu2014reducing}.
The addition of on-site storage systems presents a great opportunity for shifting energy usage over time to reduce the energy cost by purchasing the energy at low price periods.
The severe uncertainty in energy demand and price, however, make designing optimal energy procurement strategies a challenging task and emphasizes the need for \textit{online solution design.} In \S\ref{sec:uncertainty}, we present the detailed energy procurement scenario and the challenges due to uncertainty in the problem.

Note that \prob is a generalization of several classic online algorithmic problems. The first category is online search for the optimum with sequential arrival of online price, known as online search or conversion problems~\cite{mohr2014online}. Notable examples are the time series search and one-way trading problems~\cite{Yaniv01}, the multiple-choice secretary problem \cite{babaioff2008online}, the $k$-search problem~\cite{lorenz2009optimal}, and online linear programs with covering constraints~\cite{buchbinder2009online}. Different from these problems, in \prob, in addition to the uncertainty in the market pricing, we have another uncertainty due to online arrival of the demand.
In other words, \prob comes with two sets of uncertain input parameters, which allow the adversary to have more options in constructing the worst-case input. In terms of constraints, \prob includes inventory management constraints that couples the covering constraints over time. More details are given in~\S\ref{sec:formulation}.



\textbf{Summary of Contributions.}
In this paper, we develop a online algorithms for \prob and show that the algorithm achieves the minimal competitive ratio achievable by an online algorithm.
More specifically,  we propose \ocm\footnote{\ocm is short for Battery Management, inspired from our application of interest in optimizing energy procurement by \textit{battery (energy storage) management}.} (\S\ref{sec:batman}) and \ocmrate (\S\ref{sec:batmanrate}). \ocm is a simpler algorithm that works for the inventory that have no rate constraints, i.e., no limit on input and output rate to/from the inventory at any slot.
\ocmrate works in a more general context where the input and output (a.k.a., charge and discharge, in the application context) rates are bounded.

The high-level intuition behind the design of \ocm is to store assets at cheap and use the stored asset once the price is expensive. However, the dynamic pricing and dynamic demand make this decision making challenging. The main ideas of \ocm are: (i) \textit{adaptive reservation based on storage\footnote{Throughout the paper, \textit{inventory} and \textit{storage} are used interchangeably.} utilization} that tackles the challenges due to price dynamics; and (ii) \textit{construction of virtual storages} that tackle the challenges due to demand dynamics.
The idea of using adaptive pricing function is adapted from the online algorithms for $k$-search problem in~\cite{lorenz2009optimal}.

\textit{The main novelty in the algorithm design is introducing the novel notion of virtual storages to tackle the additional demand uncertainty.} In particular, given some back-up assets in inventory, one can see satisfying the demand in each slot as as the buying (minimization) version of an optimal search problem with the current demand as the target amount. However, dynamic arrival of demands makes these online search problems coupled over time, and exacerbates the competitive analysis of the algorithms.
%

Our main technical results provide an analysis of the competitive ratios of \ocm and \ocmrate. These results are summarized in Theorems~\ref{thm:cr} and~\ref{thm:batmanrate}.

\begin{theorem}
	\label{thm:cr}
	With the following reservation function
	\begin{equation}
	\label{eq:g_fun}
	G_{B_i}(p)=\alpha B_i\ln \left[\left(1-\frac{p}{p_{\max}}\right)\frac{\alpha}{\alpha-1}\right],~p\in\left[p_{\min},\frac{p_{\max}}{\alpha}\right],
	\end{equation}
	\emph{\ocm} achieves the optimal competitive ratio of $\alpha$ defined as
	\begin{equation}
	\label{eq:alpha}
	\alpha = \left(W\left(-\frac{\theta -1}{\theta\exp(1)}\right)+1\right)^{-1}.
	\end{equation}
	
\end{theorem}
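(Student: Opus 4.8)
The plan is to establish the two halves of the claim separately: that \ocm\ equipped with the reservation function \eqref{eq:g_fun} is $\alpha$-competitive, and that no online algorithm for \prob\ can beat $\alpha$. For the upper bound I would first perform a worst-case reduction. Since $G_{B_i}$ is monotone decreasing in the price and \ocm's purchasing on a virtual storage depends on the past only through the running minimum price and that storage's current fill level, the adversary gains nothing from (a) price segments that are not new running minima — they trigger no storage purchase — or (b) spreading demand across many slots rather than revealing it at the least favorable price. This lets me restrict attention to instances in which prices form a decreasing staircase from near $p_{\max}$ down to some level $p$ with demand realized at the break points, which is exactly the regime the virtual-storage construction is built for. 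On such an instance I would show that \ocm's schedule on virtual storage $B_i$ is: each time the running minimum price hits a new low $q$, top storage $i$ up to level $G_{B_i}(q)$; serve demand from storage whenever the reservation level permits, and buy any shortfall on the spot at the current price. Hence \ocm's cost attributable to storage $i$ is the Stieltjes integral $\int q\,\lvert dG_{B_i}(q)\rvert$ over the traversed price range plus the spot-purchase terms, all of which evaluate in closed form from \eqref{eq:g_fun}.

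The core of the argument is then a layer-by-layer charging of \ocm's spending against \ofa's. Each infinitesimal layer of storage $i$ that \ocm\ fills at price $q$ is matched to the demand unit it eventually covers; for that unit \ofa\ either also buys cheaply, in which case I need \ocm's accumulated cost for all layers filled down to $q$ to be within a factor $\alpha$ of $q$ times the filled amount, or it forgoes storage and pays a high price later, in which case the ratio is bounded by $p_{\max}/q \le \alpha$ on the domain $q \le p_{\max}/\alpha$. The function \eqref{eq:g_fun} is precisely the solution of the first-order separable differential equation that makes these two bounds coincide for every $q$, subject to the normalization $G_{B_i}(p_{\max}/\alpha)=0$; and the feasibility requirement that storage $i$ be allowed to fill to capacity by the time the price bottoms out, $G_{B_i}(p_{\min})=B_i$, forces $\alpha\ln\!\big[\big(1-\tfrac{p_{\min}}{p_{\max}}\big)\tfrac{\alpha}{\alpha-1}\big]=1$. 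Writing $\theta = p_{\max}/p_{\min}$ and substituting $v=\tfrac1\alpha-1$ turns this into $v e^{v} = -\tfrac{\theta-1}{\theta e}$, i.e.\ $v = W\!\big(-\tfrac{\theta-1}{\theta e}\big)$, which rearranges to \eqref{eq:alpha}. Summing the per-storage bounds, using additivity of \ocm's cost and sub-additivity of the offline optimum over the storage decomposition, gives that \ocm's cost is at most $\alpha$ times the offline optimum.

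For optimality I would use the standard adversary for search-type problems, augmented with a demand spike: reveal a finely decreasing price sequence starting just below $p_{\max}$ and, at an adversarially chosen price $q \in [p_{\min}, p_{\max}/\alpha]$, inject a demand far exceeding $B$ and then terminate. An online algorithm's storage level when the price first reaches $q$ simultaneously fixes its spending at prices above $q$ and its exposure to the spike at $q$; a continuity/averaging argument over the stopping price shows that no online rule can keep the ratio below the value in \eqref{eq:alpha}, and the extremal trade-off it must satisfy is exactly the curve traced by $G_{B_i}$, so \ocm\ attains it. I expect the delicate step to be the worst-case reduction in the upper bound — rigorously ruling out that the online arrival of demand and the coupling of the per-slot ``buy now vs.\ draw from inventory'' subproblems over time can be exploited beyond the canonical staircase-with-spike instances, i.e.\ justifying the virtual-storage construction and that \ocm's greedy reservation discipline is never worse than the idealized schedule analyzed above. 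Once that reduction is secured, the remainder is a careful but routine integral evaluation together with the algebra that produces the Lambert-$W$ form of $\alpha$.
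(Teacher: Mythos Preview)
Your calculus is right --- the function \eqref{eq:g_fun} does satisfy the identity that makes the per-storage ratio equal to $\alpha$ at every fill level (this is exactly Lemma~\ref{lem:a_critical_property_on_g} in the paper), and your derivation of \eqref{eq:alpha} from the boundary condition $G_{B_i}(p_{\min})=B_i$ via the substitution $v=\tfrac{1}{\alpha}-1$ and Lambert-$W$ is correct. But the architecture surrounding that identity differs from the paper's, and your version has a real gap.

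The paper does \emph{not} reduce to canonical decreasing-staircase instances. It bounds the cost of \ocm\ on an \emph{arbitrary} instance directly: the horizon is partitioned into \emph{reservation periods} (from the moment the physical storage begins charging until it fully discharges) and \emph{idle periods} in between; \ocm's cost is upper-bounded by $\sum_{i,j}\int_0^{\hat b_{i,j}}G^{-1}_{B_{i,j}}(b)\,db$ plus idle-period purchases priced at $p_{\max}$ (Lemma~\ref{lem:online_cost}); and \ofa\ is lower-bounded by $\sum_i F_i(\beta_i)$ plus idle purchases at the idle minimum price (Lemma~\ref{lem:offline_cost}), where $F_i(\beta)$ is the minimum cost of buying $\beta$ units during reservation period $i$. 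The comparison is then closed by contradiction (Lemma~\ref{lem:neq}): if the ratio exceeded $\alpha$, one inflates each $\beta_i$ to $\sum_j B_{i,j}$ at marginal cost at most $p_{\max}/\alpha$ per unit (Lemma~\ref{lem:opt_cost}, proved by inserting extra zero-demand slots at price $p_{\max}/\alpha$ into a worst instance), and then the identity of Lemma~\ref{lem:a_critical_property_on_g} forces the ratio back to $\alpha$. For optimality the paper simply observes that setting $d(t)=0$ for $t<T$ and $d(T)=B$ degenerates \prob\ to continuous $k$-min search and invokes the known lower bound from~\cite{lorenz2009optimal}; no fresh adversary is constructed.

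Your gap is precisely the step you flag as delicate. The claim that the adversary gains nothing by spreading demand is not established: demand arrivals spawn new virtual storages, and when the physical storage empties \ocm\ \emph{renews} all virtual storages (Line~\ref{algline:renew}), so the correspondence between demand units and virtual storages is instance-dependent in a way your staircase reduction does not control. Relatedly, your closing step --- ``sub-additivity of the offline optimum over the storage decomposition'' --- is unjustified: \ofa\ solves a single coupled LP and does not decompose over \ocm's virtual storages; the paper needs the $F_i$ functions and Lemma~\ref{lem:opt_cost} precisely to manufacture a lower bound on \ofa\ that is comparable term-by-term to the online upper bound. Without a substitute for that machinery, your layer-by-layer charging does not close, and the additive $Bp_{\max}$ term in the competitive-ratio definition (which absorbs the residual storage $\hat b$) is also missing from your accounting.
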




In Equation~\eqref{eq:alpha}, $\theta$ is the price fluctuation ratio, and $W(.)$ is Lambert-W function, defined as the inverse of $f(z) = z\exp(z)$. 

\textit{Interestingly, the above competitive ratio for \ocm is exactly the same as the optimal competitive ratio for $k$-min search problem (see~\cite[Theorem~2]{lorenz2009optimal}), when $k \rightarrow \infty$}. 
However, \prob involves additional uncertainty on demand as compared to~\cite{lorenz2009optimal} and additional inventory management constraint. The additional demand uncertainty enlarges the design space of the adversary and complicates the competitive analysis. To obtain the performance bounds of online and offline algorithms, we introduce several novel techniques and notions, such as definition of reservation and idle periods. 


\begin{theorem}
	\label{thm:batmanrate}
	\ocmrate achieves the optimal competitive ratio of $\alpha$ as in Equation~\eqref{eq:alpha}.
\end{theorem}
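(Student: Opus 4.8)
The plan is to prove the two directions separately: a matching lower bound showing that no online algorithm for \prob with rate constraints can beat $\alpha$, and an upper bound showing \ocmrate attains it. The lower bound is the easier half, and I would obtain it by reduction to the rate-free model of Theorem~\ref{thm:cr}. Fixing arbitrary charge and discharge rate caps, take the family of adversarial price/demand instances that certifies optimality in Theorem~\ref{thm:cr} and dilate it in time, spreading the same aggregate demand across enough slots that neither \ofa nor any online algorithm is ever forced to charge or discharge faster than the caps allow. On such instances the rate constraints are slack, so the adversary keeps all of its leverage and every online algorithm incurs a ratio of at least $\alpha - o(1)$; taking the limit of the construction parameters gives the lower bound $\alpha$.

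For the upper bound I would follow the architecture of the proof of Theorem~\ref{thm:cr}: argue virtual storage by virtual storage, partition the horizon into the reservation and idle periods introduced there, and then add one extra layer of bookkeeping for the charge/discharge that the rate cap postpones. Concretely, for each virtual storage $i$ I would show that (i) \ocmrate's cumulative purchases still shadow the reservation function $G_{B_i}(\cdot)$ of Equation~\eqref{eq:g_fun}, lagging behind the rate-free schedule by at most one slot's worth of rate; (ii) this lagged quantity, when it is finally transacted, is bought (or displaced from storage) at a price no worse than $\alpha$ times what the rate-free algorithm would have paid, using that $G_{B_i}$ is monotone decreasing and concave in $p$; and (iii) the offline optimum \ofa obeys the same rate caps, so every unit of flexibility the cap strips from \ocmrate is also stripped from \ofa, which only tightens the comparison. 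Aggregating the per-storage estimates exactly as in Theorem~\ref{thm:cr} then yields $\costocmrate \le \alpha\,\costopt$.

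The step I expect to be the real obstacle is (ii): bounding the price at which rate-deferred units change hands. In the rate-free analysis the governing invariant is that virtual storage $i$ holds precisely $G_{B_i}(p)$ reserved units once the running-minimum price has reached $p$; a rate cap can break this invariant in both directions --- \ocmrate may be forced to buy ahead of schedule to keep future demand feasible, or behind schedule because it cannot discharge quickly enough --- and the danger is that the slack compounds over many slots. I would control this with a potential that tracks, per virtual storage, the gap between the ideal reserved amount $G_{B_i}(p)$ and the actual stored amount, prove that this gap never exceeds one slot of rate, and show that the cost of repeatedly closing it telescopes against the marginal prices of $G_{B_i}$ so that only an $O(1)$-slot boundary term survives and is absorbed into the $\alpha$ factor. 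A secondary point I would need to nail down is that the virtual-storage construction remains well defined and the demand stays coverable once charging and discharging are throttled; this I would deduce from feasibility of \ofa together with the rule by which \ocmrate sizes and replenishes its virtual storages.
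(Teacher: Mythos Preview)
Your lower-bound argument by time dilation is fine and matches the spirit of \S\ref{sec:opt_cr}: once the instance is stretched so that neither side ever bumps against $\rho_c$ or $\rho_d$, the rate-constrained problem collapses to the rate-free one and the $\alpha$ lower bound carries over.

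The upper-bound plan, however, diverges from the paper's route and has a concrete gap. The pivot of your argument is the claim that the deficit between the ideal reserved amount $G_{B_i}(p)$ and what \ocmrate actually holds ``never exceeds one slot of rate.'' This is false as stated: if the price drops to $p_{\min}$ in a single slot, the ideal level jumps to $B_i$ while the actual level can rise by at most $\rho_c$, so the gap is $B_i-\rho_c$, which can be many slots' worth of rate. The gap does compound, and your proposed telescoping fix (``the cost of repeatedly closing it telescopes against the marginal prices of $G_{B_i}$'') does not obviously produce only a boundary term, because while the gap is open the algorithm may be buying at the \emph{capped} rate at a sequence of prices that bear no controlled relation to the single price at which the rate-free schedule would have bought. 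Point (iii) (``every unit of flexibility the cap strips from \ocmrate is also stripped from \ofa'') is the right intuition but is not by itself an inequality: \ofa may distribute its rate-limited purchases across a different set of slots than \ocmrate does, and you need a mechanism that ties the two together.

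For comparison, the paper does not track a potential at all. It argues on the \emph{worst-case instance} directly, in two steps. First, it shows by an exchange argument that on a worst instance the output constraint is never binding: if at some slot $t$ the discharge cap forces extra purchase, one can shave the demand at $t$ slightly, lowering \ocmrate's cost by $p(t)\delta$ and \ofa's by at least as much, contradicting worst-case-ness. Second, for the input constraint the paper introduces the effective reservation price $p'(t)$ produced by \texttt{CalRP}, sets $\mu(t)=p'(t)-p(t)$, and shows the online upper bound and the offline lower bound each pick up a correction $\sum_{t\in\mathcal{T}_r}\mu(t)x(t)$ and $\sum_{t\in\mathcal{T}_r}\mu(t)x^\star(t)$ respectively. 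A second exchange argument (raise $p(t)$ by $\delta\le\mu(t)$) shows that on a worst instance $x(t)/x^\star(t)$ is at most the competitive ratio for every $t\in\mathcal{T}_r$; this self-referential bound lets one strip both correction sums and fall back to the rate-free inequality, which Lemma~\ref{lem:a_critical_property_on_g} closes at $\alpha$. If you want to salvage your approach, the missing ingredient is precisely this coupling between $x(t)$ and $x^\star(t)$ on input-capped slots; a pure potential on the storage gap will not give it to you.
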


\ocmrate extends \ocm to the case with rate constraints. Two significant changes in algorithm design are adaptive determination of the capacity of virtual storages to reflect the output rate, and adaptive setting of reservation price to reflect the input constraints. While the general logic for the analysis of \ocmrate is similar to that of \ocm, it comes with a significant result on showing that in worst-case, the output rate constraint is not active.

In addition to providing theoretical analysis of \ocm, in \cref{sec:exp}, we also empirically evaluate the performance of the proposed algorithms using real-world data traces in the data center energy scenario. More specifically, evaluate our algorithms using extensive data traces of electricity prices from several electricity markets (CAISO~\cite{CAISO}, NYISO~\cite{NYISO}, ERCOT~\cite{ERCOT}, DE Market~\cite{GERMAN}), energy demands from multiple data centers of Akamai's CDN~\cite{nygren2010akamai}, and renewable production values from solar~\cite{dobos2014pvwatts} and wind installations~\cite{wind,wind2}. In a broad set of representative scenarios that include different seasons and locations, \ocm achieves a cost reduction of 15\% in comparison with using no energy storage at all, establishing the value of batteries for energy procurement. Further, \ocm achieves an energy procurement cost that is within 21--23\%  of the theoretically smallest achievable cost in an offline setting. In addition,  \ocm outperforms state-of-the-art algorithms for energy procurement by 7.5--10\%. Finally, \ocmrate outperforms all the alternatives in our experiments.



\section{Problem Formulation}
\label{sec:formulation}
We present the system model and formulate the problem. The interpretation of the  model parameters is illustrated by the application of storage-assisted energy procurement in electricity market. 

We assume that the time-slotted model in which the time horizon is divided into $T$ slots, indexed by $t$, each with fixed length, e.g., $5$ minutes in California ISO (CAISO) and New York ISO (NYISO)~\cite{epri2016}. 
We consider the following scenario. At each slot, a demand $\dt$ arrive online that must be satisfied from either the market with the real-time price $\pt$ or from the local inventory, i.e., energy storage system. The decision maker can purchase more from the market and store in the inventory to satisfy the future demand. The ultimate goal is to design an algorithm to determine the value of $\xt$, as the procurement amount in each slot, such that the procurement cost is minimized over a time horizon, and the demand is satisfied. 
In the following, we introduce the inventory management constraints.

\paragraph{Inventory Management Constraints}
Let $B$, $\rho_c$, and $\rho_d$ be the capacity, the maximum input rate, and the maximum output rate of the inventory. Let $\bt \in [0, B]$ be the inventory level at the end of slot $t$ that represents the amount of assets that are already in the inventory. The evolution of inventory level is given by
\begin{equation}
\label{eq:et}
\bt= b(t-1) + \xt - \dt,
\end{equation}
which states that the amount of assets in the inventory at the end of each round is equal to the previous existing amount $\b(t-1)$ and the current procurement amount $\xt$ subtracted by the demand $\dt$. 
Moreover, we have two constraints on the value of $\xt$:
\begin{equation}
\xt \geq \dt - \min \big\{\rho_d, b(t-1)\big\}, \label{eq:discharge}
\end{equation}
which captures the maximum output rate from the inventory and ensures covering the demand, and
\begin{equation}
\xt \leq \dt + \min \big\{\rho_c, B - b(t-1)\big\}, \label{eq:charge}
\end{equation}
which captures the input rate constraint to the inventory. 
Finally, we have the following inventory capacity constraint
$$0 \leq \bd \leq \min \{\rho_d, b(t-1)\},$$

In our example of energy storage, the rate constraints $\rho_c$ and $\rho_d$ are the charge and discharge rate of the energy storage systems. Several examples of actual values of $\rho_c$ and $\rho_d$ for different energy storage technologies, e.g., lead-acid and lithium-ion batteries, and compressed air energy storage are provided in \S\ref{sec:exp_batmanerate}.

\paragraph{Problem Formulation}
We can now summarize the full formulation of online optimization with inventory management (\prob).
If the demands and market prices are known for the entire time horizon in advance, the {\em offline} version of \prob  can be formulated a linear program as follows.
\begin{eqnarray}
\label{eq:ep2eq}
\prob: & \min &  \sum_{t\in\mathcal{T}} \pt \xt \nonumber\\
& \mathrm{s.t.}: & \forall t\in \mathcal{T}: \nonumber\\
&& \xt \geq \dt - \min \big\{\rho_d, b(t-1)\big\}, \label{eq:dt_cover}\\
&& \xt \leq \dt + \min \big\{\rho_c, B - b(t-1)\big\}, \label{eq:dt_pack}\\
&&\bt=b(t-1)+ \xt - \dt, \label{eq:sl}\\
&& 0 \leq \bt \leq B, \label{eq:bat_cap}\\
&\mathrm{vars.}:& \xt \in \mathbb{R}^+. \nonumber
\end{eqnarray}

The objective is to minimize the procurement cost from the market. Constraints~\eqref{eq:dt_cover}-\eqref{eq:dt_pack} ensure covering the demand and rate limits.
Constraint~\eqref{eq:sl} dictates the evolution of the inventory, and~\eqref{eq:bat_cap} enforce the capacity of the inventory.
Since \prob is a linear program it can be solved efficiently in an offline manner. 

In this work, we are interested in developing {\em online algorithms} for \prob that make decisions at each time $t$, knowing the past and current prices and demands, but not knowing those same inputs for the future. The algorithmic challenge is to procure assets from the market and store in the storage in the current time, without knowing if such decisions will work out favorably in the future. The classical approach for evaluating online algorithms is competitive analysis, where the goal is to design algorithms with the smallest {\em competitive ratio}, that is, the cost ratio between the online algorithm and an {\em offline} optimal algorithm that has access to {\em complete} input sequence. In our work, we devise an online algorithm that has provably the best competitive ratio for \prob.
In the design of algorithms, we assume that the values of $p_{\max}$ and $p_{\min}$, as the maximum and minimum prices, are known a priori. This assumption is reasonable since by the historical data, these values could be predicted. Further, related problems makes the similar assumptions~\cite{lorenz2009optimal,mohr2014online,Yaniv01}. Let $\theta = p_{\max}/p_{\min}$ as the price fluctuation ratio. Our analysis characterizes the performance as a function of $\theta$.

\paragraph{The Case Study}
\label{sec:uncertainty}
Our motivation for studying \prob comes from energy storage management in electricity markets.  In such scenarios the demand and price values are highly uncertain and unpredictable. 
To further motivate the online algorithm design, in the following, we demonstrate the uncertainty of these values using real data-traces from electricity markets and Akamai data centers. 

1) The electricity pricing for large customers like data centers is moving toward real-time pricing and the price changes dynamically over time~\cite{urgaonkar2011optimal,huang2012optimal,guo2012TPDS,van2013optimal,chau2016cost}. 
Two examples of real-time energy prices in NYISO and DE Market are demonstrated in Figure~\ref{fig:market_prices}. By comparing the price dynamics in two different electricity market, we can see totally different patterns. While the prices in NYISO highly fluctuate  without any regular pattern, in German Electricity Market, we observe regular daily patterns with low price fluctuations.

\begin{figure}[!h]
	\begin{center}
		\subfigure[NYISO]{\label{fig:ny_prices}
			\includegraphics[angle=0,scale=0.18]{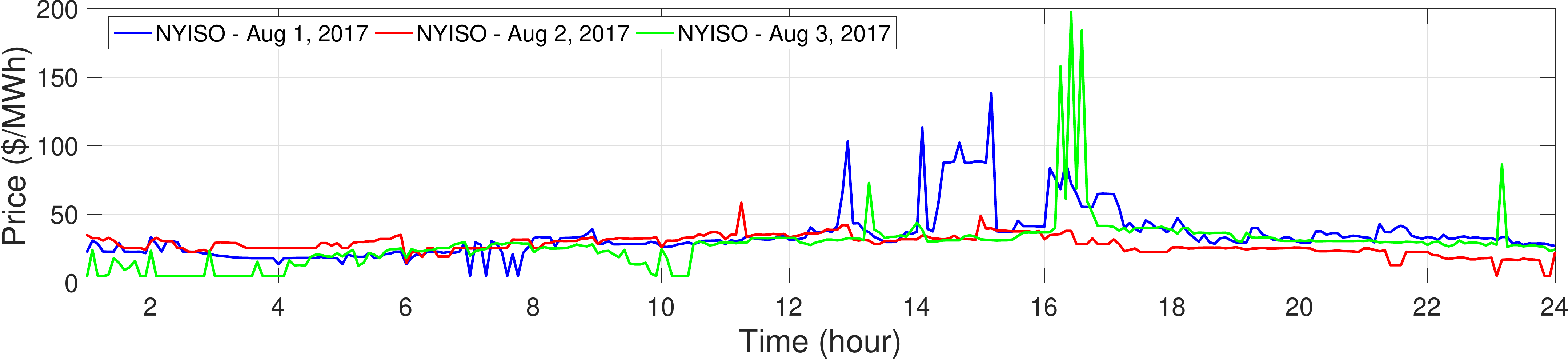}}\vspace{-3mm}
		\subfigure[German Electricity Market]{\label{fig:de_prices}
			\includegraphics[angle=0,scale=0.18]{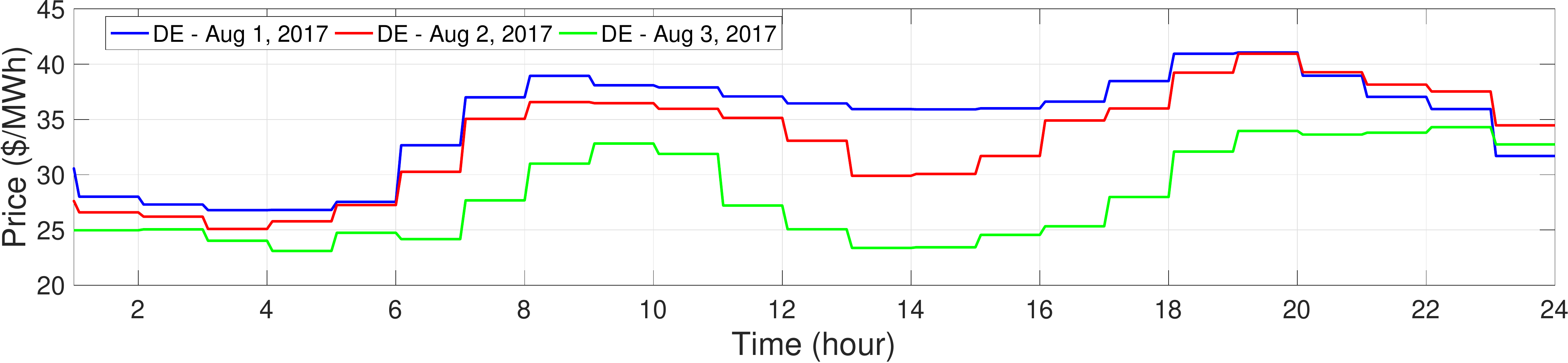}}\hspace{-1mm}
	\end{center}	
	\vspace{-6mm}
	\caption{The energy price dynamics in NYISO and German (DE) Market in three consecutive days in August 2017}
	\label{fig:market_prices}
\end{figure}

2) The data center energy demand is highly unpredictable because user demand for Internet services is extremely variable. 
Further, the sophisticated optimization algorithms used to improve the energy efficiency of data center's internal operations~\cite{DeepMind} can further increase the unpredictable variability of the energy demand. 


In addition, recently, several data centers are equipped with on-site renewable sources, e.g., Google data center in Belgium~\cite{Belgium}. The energy production level of renewable sources is uncertain and intermittent (exhibits high fluctuations)~\cite{liu2012renewable}.  For instance, energy production from  solar panels can change in a matter of  minutes due to cloud cover moving in to obscure the sun. This may lead to an increased uncertainty in net energy demand of data center. 
\begin{figure}[!h]
	\centering
	\begin{center}
		\subfigure[United States]{\label{fig:us_city}
			\includegraphics[angle=0,scale=0.18]{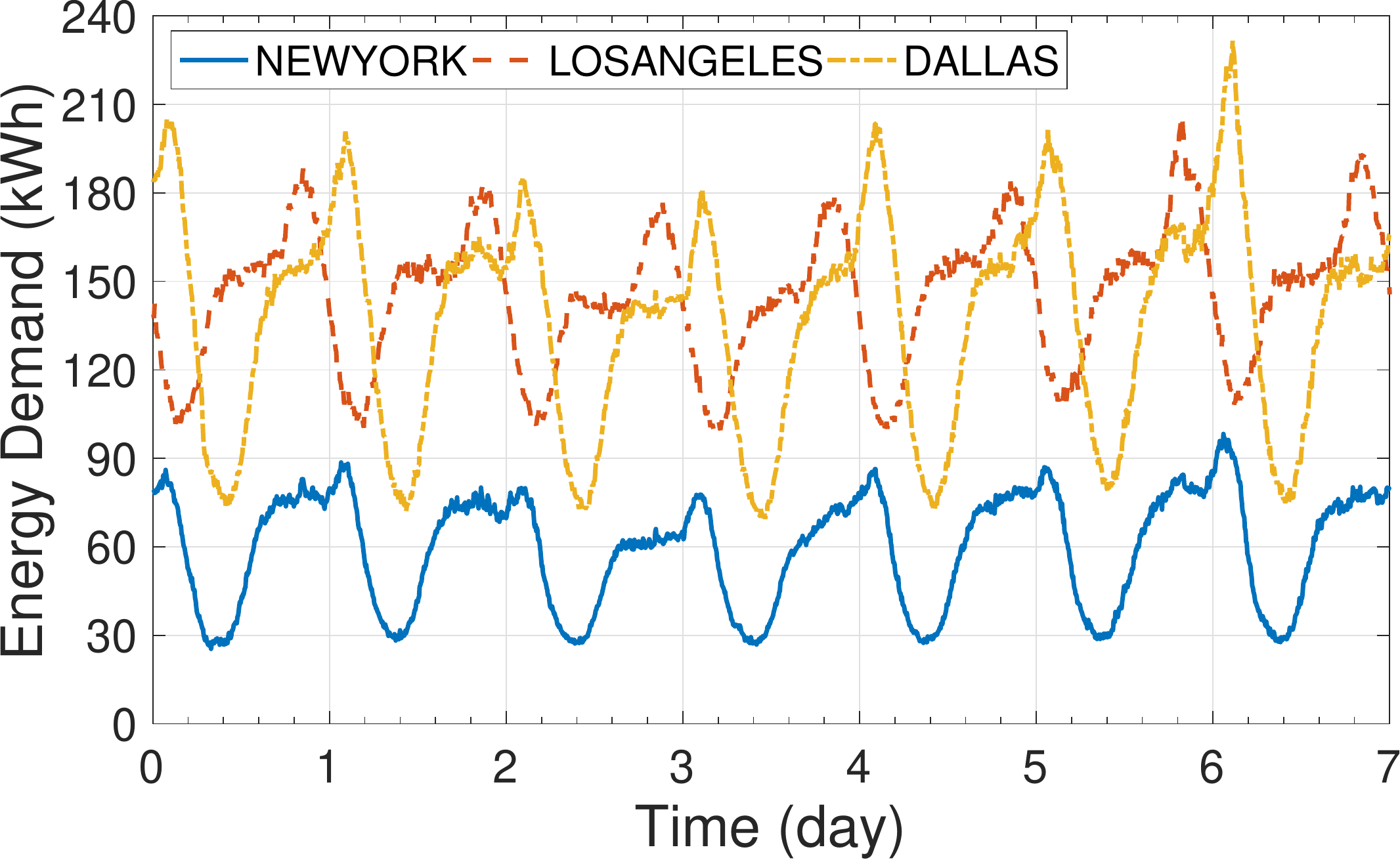}}\hspace{-1mm}
		\subfigure[Net energy demand with wind]{\label{fig:LOSANGELES_wind}
			\includegraphics[angle=0,scale=0.18]{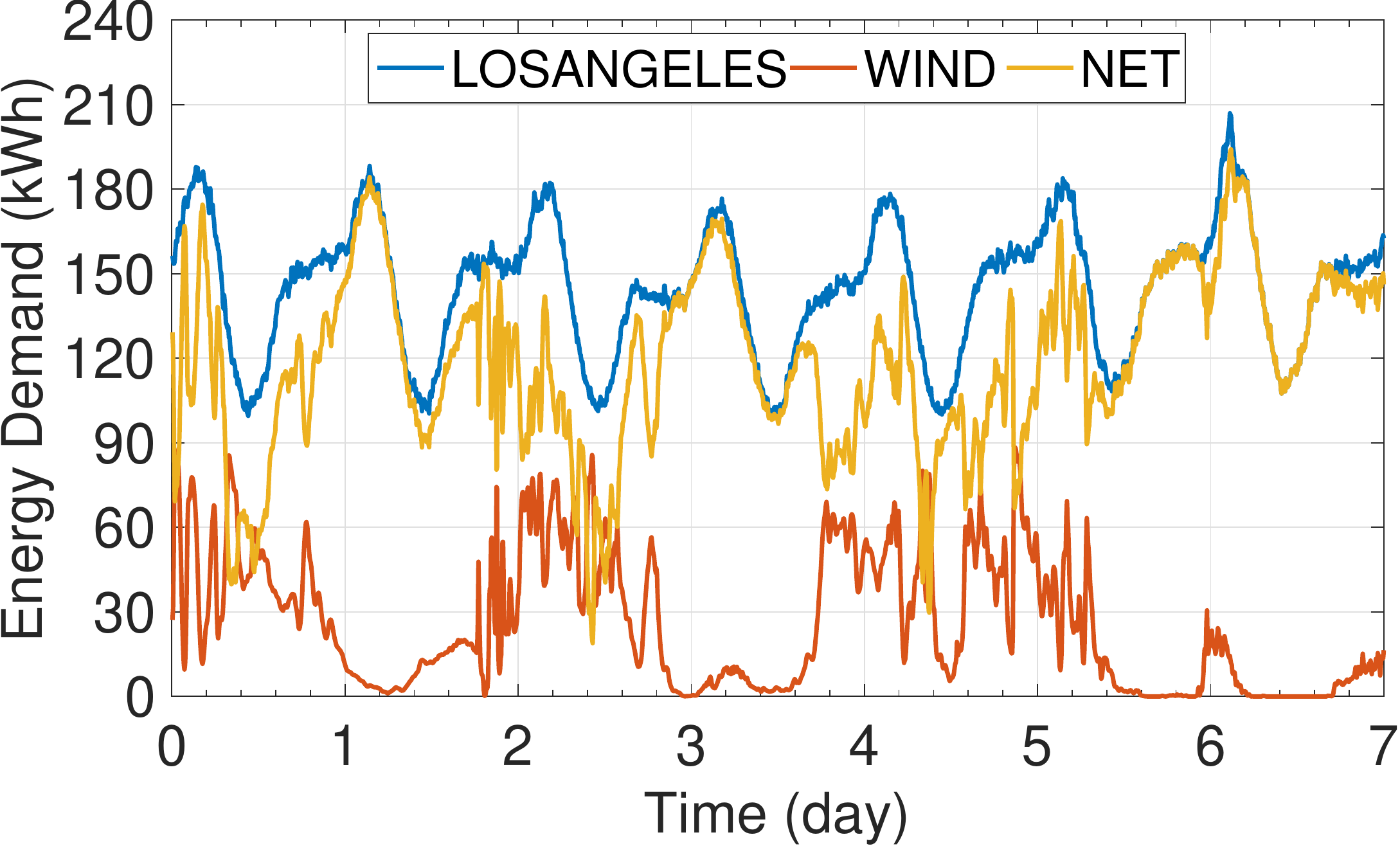}}\hspace{-1mm}
	\end{center}	
	\vspace{-5mm}
	\caption{Data center energy demand in different locations}
	\label{fig:dc_power_demand}
\end{figure}

To represent typical data center energy demand with and without renewable sources, we collected the energy consumption of Akamai's server clusters in different cities, some of which are depicted in Figure~\ref{fig:us_city} (details in Appendix~\ref{app:data}). The server clusters are part of  Akamai's global CDN~\cite{nygren2010akamai} that serves about a quarter of the Web traffic worldwide. We observe that the total energy consumption in different data centers has a relative regular (daily) pattern. However, by injecting renewable generation, the net demand, as depicted in Figure~\ref{fig:LOSANGELES_wind}, exhibits a large degree of variability.

\paragraph{Related Algorithmic Problems}
Note that \prob is related to several algorithmic problem including one-way trading~\cite{Yaniv01}, secretary problem~\cite{babaioff2008online}, $k$-search problem~\cite{lorenz2009optimal}, and online linear programming with covering constraints~\cite{buchbinder2009online}. It is related since in all above problems and \prob, the goal is to search for the optimum. In contrast, \prob is unique since it is the first that tackles inventory management constraints.  

Perhaps the most closely related problem to \prob is the minimization variant of $k$-search, known as $k$-min search problem~\cite{lorenz2009optimal}. In this problem, a player wants to buy $k \geq 1$ units of an asset with the goal of minimizing the cost. At any slot $t = \{1,\ldots,T\}$, the player is presented a price $p(t)$, and must immediately decide whether or not to buy \textit{some integer units} of the asset given $p(t)$. By setting $d(t)=0, t\in \{1,\dots,T-1\}$ and $d(T) = B$ and allowing \textit{fractional purchase}, \prob degenerates to the $k$-min search problem. Hence, \probeqs is an extension of continuous version of $k$-min search problem, or equivalently the minimization version of one-way trading problem since one-way trading problem can be viewed as the $k$-max search problem with $k\rightarrow\infty$. 

The other category is online linear programs with covering constraints~\cite{buchbinder2009online}. As compared to~\cite{buchbinder2009online}, \prob has a more specific category of covering constraints, however, the inventory management constraint in \prob couples the covering constraints across different time slots, which results in more challenging problem than basic online covering linear programs.

\section{Optimal Online Algorithms}
\label{sec:solution}

In \S\ref{sec:batman}, we propose \ocm, an online algorithm for a basic version of \prob without rate constraints. Then in~\S\ref{sec:analysis}, we prove that \ocm achieves the optimal competitive ratio. In \S\ref{sec:batmanrate} and based on the insights from design of \ocm, we propose \ocmrate that tackles the general \prob problem with rate constraints, and prove that it achieves the optimal competitive ratio.


\subsection{\ocm: An Online Algorithm }
\label{sec:batman}
Suppose that the inventory had only capacity constraints, but no rate constraints, i.e.,  $\rho_c = \rho_d = B$. In \prob, one can rewrite constraint~\eqref{eq:dt_cover} as $\xt \geq \dt - b(t-1)$, and remove constraint~\eqref{eq:dt_pack}.
In the next, we propose \ocm that finds a solution to \prob without rate constraints in online manner. 

\subsubsection{The Design of \ocm}
The high-level idea of designing \ocm is to store the asset when the market price is cheap and use the stored asset when the price is expensive. However, the dynamic pricing and dynamic demand make this decision making challenging. The main ideas of  \ocm are: (i) \textit{adaptive reservation based on storage utilization} that tackles the challenges due to price dynamics; and (ii) \textit{construction of virtual storages} that tackle the challenges due to demand dynamics.

\paragraph{Adaptive Reservation Price}
\ocm deals with the price dynamics by defining the notion of \textit{reservation price}. Having a properly constructed reservation price, \ocm stores the asset if the current price is cheaper than the reservation price; otherwise, it releases the asset from the storage.
\ocm adaptively determines the reservation price based on the available storage space. Intuitively, once the storage level is low, it is more eager to store asset, hence, it accepts higher prices. On the other hand, at high storage utilization, it stores the asset if the price is low. This is different from fixed reservation design introduced in~\cite{chau2016cost} that determines the reservation prices without considering the current storage utilization.


\paragraph{Constructing Virtual Storages}
\ocm deals with the demand dynamics by defining the notion of virtual storages.
\ocm views the demand in each time slot as an asset that must be purchased from the market with some degree of freedom obtained by shifting it using the storage. To utilize this opportunity, \ocm constructs several virtual storages to record the satisfied amount of the demand from the market. Specifically, in each slot with $\dt >0$, \ocm initiates a virtual storage whose capacity is equal to the demand, and its initial level is empty. \ocm also renews the virtual storage units once the actual storage becomes empty.

\paragraph{The Details of \ocm}
By summarizing the pseudocode of \ocm in Algorithm~\ref{alg:alg_1}, we proceed to discuss the details.
In all algorithms and analysis, we assume that the initial storage level is zero, i.e., $b(1) = 0$.
For notational convenience, we represent the physical storage as the first virtual storage and define $B_1=B$ as its capacity (Line~\ref{algline:set_cap}) and $v$ is the current number of virtual storages given positive demand that evolves over time (Line~\ref{algline:v_update} indicates creating a new virtual storage for the current slot and Line~\ref{algline:renew} renews the virtual storages). Let $B_{i}$ be the capacity of $i$-th virtual storage and $B_v = d(t)$, i.e., we set the value of new virtual storage to the current demand. Let $\xi_{i}$ be the reservation price associated to the virtual storage $i$ with initial value of $p_{\max}/\alpha$, where $\alpha >0$ is a parameter that will be carefully chosen based on the competitive analysis.

\begin{algorithm}[!t]
	\caption{The \ocm algorithm for each $t\in\mathcal{T}$}
	\label{alg:alg_1}
	\begin{algorithmic}[1]
		\State
        \textsc{// Initialization: at $t =1$}

		\State $B_1 \leftarrow B$; \textsc{// the capacity of physical storage} \label{algline:set_cap}
		\State $v \leftarrow 1$; \textsc{// the number of virtual and physical storages}
		\State $\xi_1\leftarrow [p_{\max}/\alpha]$; \textsc{// reservation price of physical storage}
		
		\textsc{// The main algorithm for $t$}

		\If {$d(t)>0$}
		\State $v\leftarrow v+1$\label{algline:v_update}
		\State $B_v \leftarrow d(t)$
		\State $\xi_v \leftarrow p_{\max}/\alpha$
		\EndIf
		\State $\xit\leftarrow [G_{B_i}(p(t))-G_{B_i}(\xi_{i})]^+, \quad \forall 1\leq i\leq v$ \label{algline:xi}
		\State $\xi_i \leftarrow \min\{\pt,\xi_i\}, \quad \forall 1\leq i\leq v$ \label{algline:xii}
		\State $\xt\leftarrow \max\left\{\sum_{i=1}^{v}\xit,[d(t)-b(t-1)]^{+}\right\}$ \label{algline:xt}
		\State $\bt\leftarrow b(t-1)+\xt-d(t)$ \label{algline:bt}

        \textsc{// Renew virtual storages} \label{algline:renew}

		\State \textbf{if} {$\bt=0$} \textbf{then} $v \leftarrow 1; \xi_1\leftarrow p_{\max}/\alpha$;
	\end{algorithmic}
\end{algorithm}

The cornerstone of \ocm is in Line~\ref{algline:xi} where the procurement amount for each virtual storage, i.e., $x_i(t)$, is set.
\ocm defines $G_{B_i}(\xi_i)$ as the reservation function to determine the amount of asset to be stored in the $i$-th virtual storage in each slot.
This function represents the target amount of stored assets in $i$-th virtual storage when the reservation price is $\xi_i$.
In slot $t$, \ocm stores additional amount of $G_{B_i}(p(t))-G_{B_i}(\xi_i)$ into virtual storage $i$ if the current price, $p(t)$, is less than $\xi_i$; otherwise, it stores nothing. Both situations can be stated in the following compact form
\begin{equation}
\label{eq:xit}
x_i(t)=\left[G_{B_i}(p(t))-G_{B_i}(\xi_i)\right]^+.
\end{equation}
In this way, \ocm logically allocates $x_{i}(t)$ units of asset to storage $i$ (Line~\ref{algline:xi}), and the reservation price will be updated to $\min\{\xi_{i},p(t)\}$ (Line~\ref{algline:xii}). In other words, $\xi_i$ records the minimum seen market price during the lifetime of $i$-th virtual storage.

The main contribution of \ocm is the design function $G_{B_i}(p)$ such that it achieves the optimal competitive ratio. To accomplish this, we choose the following function:
\begin{equation}
G_{B_i}(p)=\alpha B_i\ln \left[\left(1-\frac{p}{p_{\max}}\right)\frac{\alpha}{\alpha-1}\right],~p\in\left[p_{\min},\frac{p_{\max}}{\alpha}\right],
\end{equation}
where
\begin{equation}
\label{eq:alpha2}
\alpha = \left(W\left(-\frac{\theta -1}{\theta\exp(1)}\right)+1\right)^{-1},
\end{equation}
and $W$ denotes \textit{Lembert-W function} defined as inverse of $f(z) = z\exp(z)$, and $\theta = p_{\max}/ p_{\min}$ is the  price fluctuation ratio.
Figure~\ref{fig:procurement_amount} depicts function $G_{B_i}(p) \in [0, B_i],$ and $p \in [p_{\min}, p_{\max}/\alpha]$ as a decreasing function. It also demonstrates how to determine the reservation amount for virtual storage $i$. Further, when the price $p$ is larger than and equal to $p_{\max}/\alpha$, $G_{B_i}(p)=0$. When the price is equal to $p_{\min}$, the reservation amount is equal to $\alpha B_i\ln \left[\left(1-\frac{1}{\theta}\right)\frac{\alpha}{\alpha-1}\right]$. By substituting the value of $\alpha$ from Equation~\eqref{eq:alpha2}, we have $G_{B_i}(p_{\min})=B_i$, which means when the price is minimum, \ocm stores the full capacity of the storage.

The last step is to determine the aggregate procurement quantity $x(t)$ (Line~\ref{algline:xt}). To satisfy the demand constraint, i.e., $x(t) \geq d(t) - b(t-1)$, we calculate $x(t)$ as follows
\begin{equation}
\label{eq:xt}
x(t) = \max\left\{\sum_{i=1}^{v}x_{i}(t),d(t)-b(t-1)\right\}.
\end{equation}
When $\sum_{i=1}^{v}x_{i}(t)<d(t)-b(t)$, \ocm discharges the physical storage completely to meet the demand, hence, we have $b(t)=0$. In this case, we renew all virtual storages to the initial state of having only the physical storage (Line~\ref{algline:renew}). The intuition is that with fully discharging the physical storage, we exhausted the capability of shifting the demand using the storage, hence we renew the process.


\begin{figure}[!t]
	\centering
	\subfigure[An illustration of function $G_{B_i}(p)$ and determining $x_i(t)$ as the procurement amount for virtual storage $i$]{\label{fig:procurement_amount}
		\includegraphics[width=0.23\textwidth]{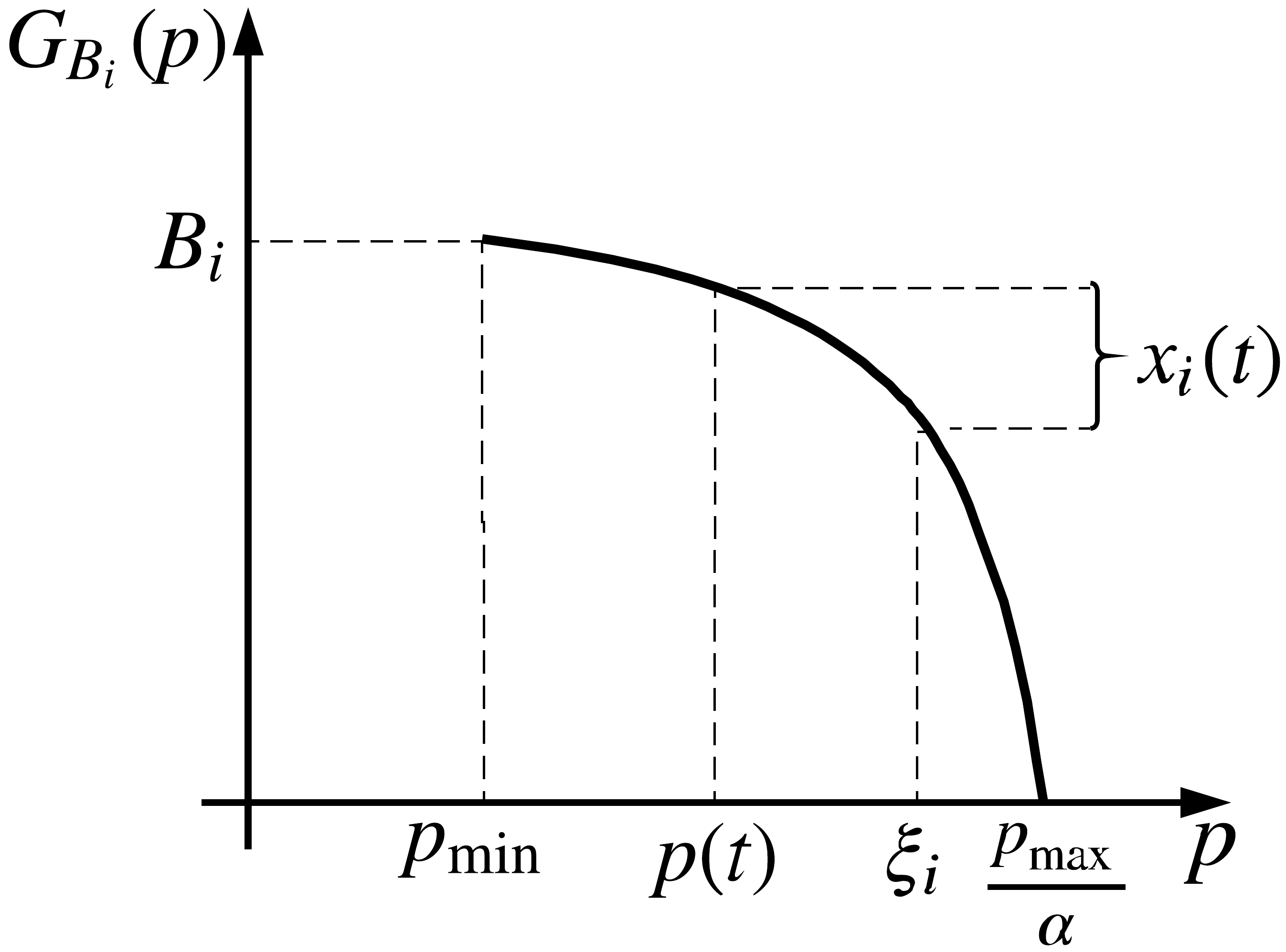}}\hspace{1mm}
	\subfigure[An illustration of competitive ratio in Equation~\eqref{eq:alpha} as a function of $\theta$  in comparison with $\sqrt{\theta}$~\cite{chau2016cost} and $\log \theta$.]{\label{fig:cr_theta}
		\includegraphics[width=0.22\textwidth]{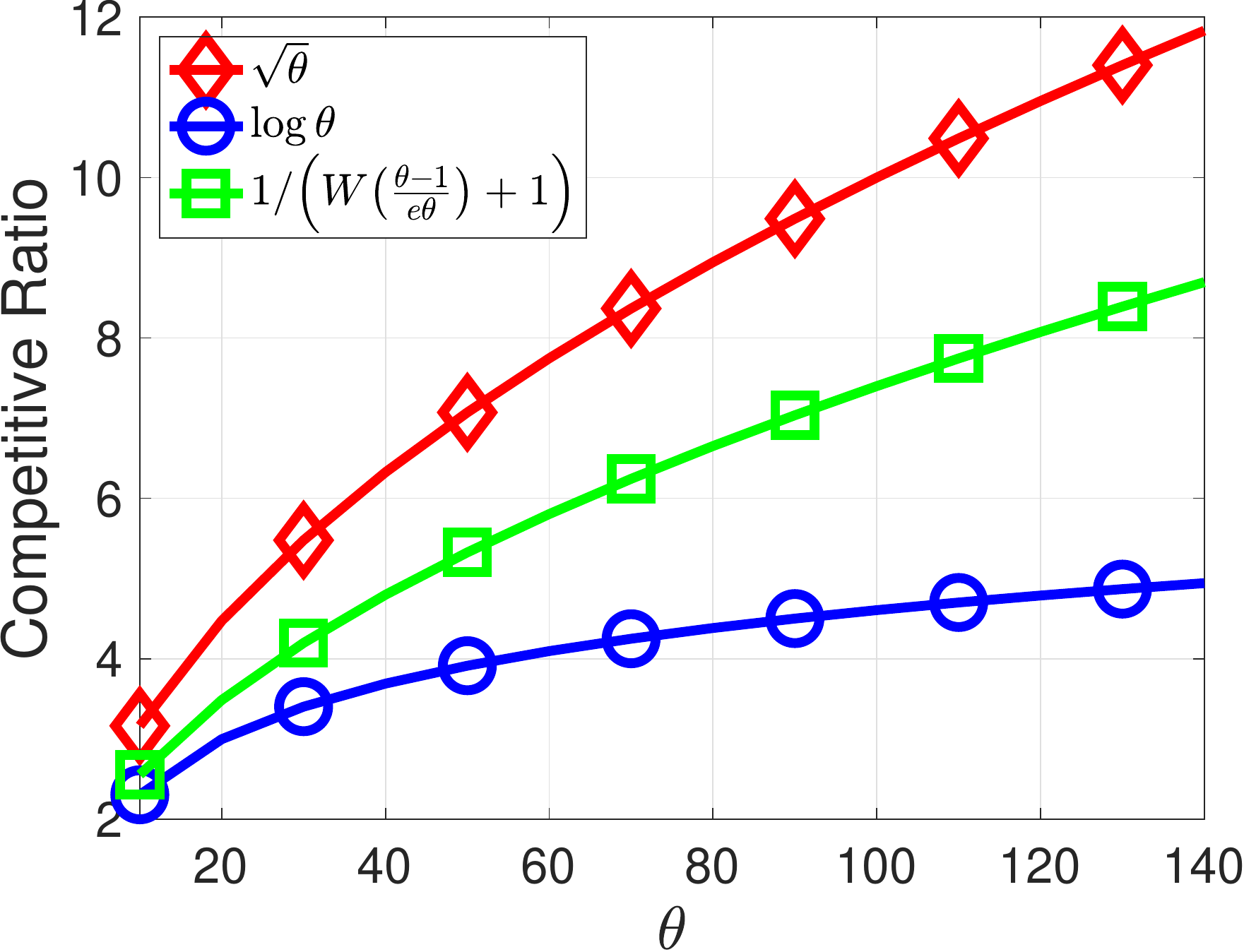}}
	\caption{Function $G_{B_i}(p)$ and growth of competitive ratio}
	\label{fig:cr_fun}
\end{figure}

\begin{theorem}
	\label{thm:feasibility}
	\ocm generates a feasible solution to \emph{\probeqs}.
\end{theorem}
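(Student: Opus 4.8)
The plan is to check the four constraints of \prob{} in its rate-unconstrained form (the setting \ocm{} targets): $x(t)\ge 0$, the covering constraint $x(t)\ge d(t)-b(t-1)$, the inventory dynamics $b(t)=b(t-1)+x(t)-d(t)$, and the capacity bounds $0\le b(t)\le B$. The dynamics hold verbatim by Line~\ref{algline:bt}. The easy parts come next: every per-storage quantity $x_i(t)=[G_{B_i}(p(t))-G_{B_i}(\xi_i)]^+$ is nonnegative, so $\sum_i x_i(t)\ge 0$ and hence $x(t)=\max\{\sum_i x_i(t),\,d(t)-b(t-1)\}\ge 0$ (Line~\ref{algline:xt}) as well as $x(t)\ge d(t)-b(t-1)$; substituting the latter into the dynamics gives $b(t)\ge 0$. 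What remains, and is the heart of the statement, is the capacity upper bound $b(t)\le B$.

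For $b(t)\le B$ I would rely on two observations. First, a telescoping identity: over the lifetime of any storage $i$ up to slot $t$, the reservation price $\xi_i$ starts at $p_{\max}/\alpha$ and is updated only downward (Line~\ref{algline:xii}), so it equals the minimum price seen so far; since $G_{B_i}$ is continuous and monotone decreasing with $G_{B_i}(p_{\max}/\alpha)=0$, the total amount logically allocated to storage $i$ up to $t$ equals $G_{B_i}(\xi_i(t))$. Because $\xi_i(t)\ge p_{\min}$ and, by the choice of $\alpha$ in \eqref{eq:alpha2}, $G_{B_i}(p_{\min})=B_i$, this total is at most $B_i$ --- where $B_1=B$ for the physical storage and $B_i=d(\tau_i)$ for the virtual storage created in its creation slot $\tau_i$. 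Second, whenever the demand term wins in Line~\ref{algline:xt}, i.e.\ $x(t)=d(t)-b(t-1)$, the dynamics force $b(t)=0$ and a renewal (Line~\ref{algline:renew}); equivalently, whenever $b(t)>0$ we must have $x(t)=\sum_i x_i(t)$.

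Combining these, fix $t$ and let $t_0<t$ be the last renewal slot before $t$, with $b(t_0)=0$ (take $t_0$ to be the initial slot if there is none). If $b(t)=0$ we are done; otherwise $b(t)>0$, and by the second observation $x(s)=\sum_i x_i(s)$ for every $s\in(t_0,t]$ --- for $s$ strictly inside the interval $b(s)>0$ by maximality of $t_0$, and for $s=t$ by assumption. Summing the dynamics from $t_0{+}1$ to $t$ with $b(t_0)=0$ gives $b(t)=\sum_{s=t_0+1}^{t}\sum_i x_i(s)-\sum_{s=t_0+1}^{t} d(s)$; the double sum equals $\sum_i G_{B_i}(\xi_i(t))$ by the telescoping identity (storages not yet created contribute $0$), which is at most $B+\sum_{i\ge 2}B_i=B+\sum_{s\in(t_0,t]:\,d(s)>0} d(s)=B+\sum_{s=t_0+1}^{t} d(s)$, since each positive-demand slot of the epoch spawns exactly one virtual storage whose capacity equals that slot's demand. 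The demand sums cancel, yielding $b(t)\le B$.

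The step I expect to be the real work is the third paragraph: stating precisely that no renewal occurs strictly inside $(t_0,t)$, that the telescoped sum $\sum_s\sum_i x_i(s)$ correctly accounts for virtual storages born partway through an epoch, and that the boundary slot $t$ and the initial condition $b=0$ are covered. The analytic facts about $G_{B_i}$ used above (monotonicity, $G_{B_i}(p_{\max}/\alpha)=0$, $G_{B_i}(p_{\min})=B_i$) are routine consequences of \eqref{eq:g_fun}--\eqref{eq:alpha2}, and out-of-range prices $p(t)>p_{\max}/\alpha$ are harmless since then $G_{B_i}(p(t))=0$, so $x_i(t)=0$ and $\xi_i$ is unchanged.
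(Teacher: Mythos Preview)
Your proposal is correct and follows essentially the same approach as the paper's own proof. The paper argues that within a reservation period the total purchased equals $\sum_{i=1}^v G_{B_i}(\xi_i)\le \sum_{i=1}^v B_i$, while the accumulated demand equals $\sum_{i=2}^v B_i$, so the physical storage level is at most $B_1=B$; your argument is exactly this computation, with the reservation period replaced by the interval $(t_0,t]$ since the last renewal, and with the telescoping identity and the ``demand term wins $\Rightarrow b(t)=0$'' step made explicit rather than implicit.
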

The proof for covering the demand is straightforward because of Equation~\eqref{eq:xt}. The range of function $G_{B_i}(p)$ along with the renewal process in Line~\ref{algline:renew} guarantees the respecting the capacity constraint of physical storage. The proof is formally given in  Appendix~\ref{app:thm3.1}.

\subsection{Competitive Analysis of \ocm}
\label{sec:analysis}
This section proves the result in Theorem~\ref{thm:cr}.
Before the formal proof, we state the following remark about the competitive ratio of \ocm.
In Figure~\ref{fig:cr_theta}, we depict the value of the optimal competitive ratio $\alpha$ as a function of price fluctuation ratio $\theta$, as compared to a sub-optimal competitive ratio of $\sqrt{\theta}$ achieved in~\cite{chau2016cost} in a slightly different setting, and $\log \theta$ as a baseline, and it shows that its growth is less than the $\sqrt{\theta}$. However, series expansion of Lambert-$W$ function shows that $\alpha$ as indicated in~\eqref{eq:alpha} is in order of $\Theta(\sqrt \theta)$. This is in contrast to the best competitive ratio for equivalent maximization problems, e.g., $k$-max search that achieves competitive ratio of like $\log \theta$~\cite{Yang17,Yaniv01,yang2017optimal}. This shows that minimization and maximization version of search problems behave differently in terms of best possible competitive algorithms.  

In what follows, we prove the result in Theorem~\ref{thm:cr}. First, we give the preliminaries (\S\ref{sec:def}). Second, we characterize an upper bound on the cost of \ocm (Lemma~\ref{lem:online_cost}). Third, a lower bound on the offline optimum is obtained (Lemma~\ref{lem:offline_cost}). Forth, we prove the competitive ratio by comparing these two values. Finally, we prove the optimality of the competitive ratio in \S\ref{sec:opt_cr}.

\subsubsection{Definitions and Preliminaries}
\label{sec:def}
First, to be consistent to the notations in this paper, ideally we must denote all the inputs and variables with index $t$. For notation brevity, however, we slightly abuse the notations by dropping index $t$ in the analysis. 

\begin{myDef}
Define $\omega\in \Omega$ as an input instance to \emph{\probeqs} including the price and demand, i.e.,
\begin{equation*}
\omega\overset{\textit{\text{def}}}{=}\left[\omega(t)=\langle p(t),d(t)\rangle\right]_{t\in \mathcal{T}}.
\end{equation*}
Moreover, $\emph{\textsf{cost}}_{\omega}(\ocm)$ is the cost of \ocm under instance $\omega$ and $\emph{\textsf{cost}}_{\omega}(\ofa)$ is the offline optimal cost under $\omega$. We drop subscript $\omega$ from the costs when we are not focusing on a particular $\omega$.
\end{myDef}

\begin{myDef}
	\ocm is $\alpha$-competitive, if for any $\omega\in \Omega$
	\begin{equation}
	\label{eq:cr_def}
	\emph{\textsf{cost}}_{\omega}(\ocm)\leq  \alpha \cdot \emph{\textsf{cost}}_{\omega}(\ofa) + \emph{\textsf{cons}},
	\end{equation}
	where $\emph{\textsf{cons}} \geq 0$ is a constant number.
\end{myDef}

In proofs, we set the value of ``$\textsf{cons}$'' to capture the special input scenarios to \prob. 

\begin{myDef}
	\label{def:res_period}
	Reservation and idle periods. The time horizon $T$ can be divided into two type of periods: the reservation period, which contains the interval between beginning to charge and fully discharging of the storage; and the idle period, which corresponds to the interval that lies between two adjacent reservation periods.
\end{myDef}

The following is the list of additional notations that we use in the analysis.
Consider an instance in which the executing of \ocm results in $n$ reservation periods totally.
During the $i$-th reservation period, $i\leq n$, we assume there are totally $\hat{v}_i$ virtual storage units created. Let $\hat{\xi}_{i,j}$ be the final reservation price of the $j$-th virtual storage during the $i$-th reservation period.
Let $B_{i,j}$ be the capacity of the $j$-th storage and $\hat{b}_{i,j}$ be the final storage level correspondingly.
Obviously, we have $\hat{b}_{i,j}=G_{B_{i,j}}(\hat{\xi}_{i,j})$. Let $D$ be the total demand during the time horizon, i.e., $D = \sum_{t\in\mathcal{T}} d(t)$, and $\hat{b}$ be the final storage level of the physical storage, i.e., $\hat{b} = b(T)$. In addition, let $F_i(\beta)$ be the minimum cost of purchasing $\beta$ units of asset during $i$-th reservation period, and $\tilde{p}$ be the minimum price during the idle periods. Finally, the inverse of function $G_{B_i}$ is defined as
 \begin{equation}
 \label{eq:g_inverse}
 G^{-1}_{B_i}(b)=p_{\max}\left[1-\left(1-\frac{1}{\alpha}\right)\exp\left(\frac{b}{\alpha B_i}\right)\right],~b\in[0,B_i].
 \end{equation}

\subsubsection{The Proof of Theorem~\ref{thm:cr}}
The following lemma characterizes an upper bound on the cost of \ocm, during the time horizon. The proof of all lemmas in this section are given in Appendix~\ref{app:proof_batman}.

\begin{lemma}
	\label{lem:online_cost}
	\emph{\costocm} is upper bounded by
	\begin{equation*}
		\emph{\costocm}\leq\sum_{i=1}^{n}\sum_{j=1}^{\hat{v}_i}\int_{0}^{\hat{b}_{i,j}}G^{-1}_{B_{i,j}}(b)db+\left(D-\sum_{i=1}^{n}\sum_{j=1}^{\hat{v}_i}\hat{b}_{i,j}+\hat{b}\right) p_{\max}.
	\end{equation*}
\end{lemma}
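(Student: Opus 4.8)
The plan is to decompose the total cost of \ocm into contributions from the reservation periods and a residual from the idle periods, and to bound each piece. During a reservation period, the purchases made by \ocm are of two kinds: the ``virtual-storage'' purchases $x_i(t) = [G_{B_i}(p(t)) - G_{B_i}(\xi_i)]^+$ that accumulate into the virtual storages, and any extra forced purchase needed to cover $d(t)$ when $\sum_i x_i(t) < d(t) - b(t-1)$. First I would argue that the cost of all virtual-storage purchases in the $i$-th reservation period, summed over the $\hat v_i$ virtual storages, is exactly $\sum_{j=1}^{\hat v_i} \int_0^{\hat b_{i,j}} G^{-1}_{B_{i,j}}(b)\,db$. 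This follows from the standard ``integral over the price function'' telescoping argument used in threshold-based online search (as in~\cite{lorenz2009optimal}): each virtual storage $j$ is filled monotonically as its reservation price $\xi_{i,j}$ decreases from $p_{\max}/\alpha$ down to $\hat\xi_{i,j}$, the amount held at reservation price $p$ is $G_{B_{i,j}}(p)$, and the marginal unit stored when the level is $b$ costs exactly $G^{-1}_{B_{i,j}}(b)$ by definition of the inverse in~\eqref{eq:g_inverse}. Since each increment $G_{B_{i,j}}(p(t)) - G_{B_{i,j}}(\xi_{i,j})$ is bought at price $p(t) = G^{-1}_{B_{i,j}}$ evaluated at the post-purchase level (or, more carefully, is at most that), Riemann-summing over the slots of the reservation period gives the integral as an upper bound.

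Next I would handle the ``leftover'' demand that is purchased but not via the virtual-storage rule. The total demand is $D$; of this, $\sum_{i=1}^n \sum_{j=1}^{\hat v_i} \hat b_{i,j}$ units are the amounts that ended up stored in virtual storages by the end of each reservation period (these are the units whose cost is captured by the integrals), and $\hat b = b(T)$ is whatever sits in the physical storage at the horizon's end — which was purchased but whose cost we still must account for. Every other purchased unit — the forced top-ups to meet $d(t) - b(t-1)$, and purchases during idle periods — is bounded above in price by $p_{\max}$. So the number of such units is at most $D - \sum_{i}\sum_j \hat b_{i,j} + \hat b$ (the $+\hat b$ because those $\hat b$ physical-storage units were counted among purchases but not yet charged), and their total cost is at most $\left(D - \sum_{i=1}^{n}\sum_{j=1}^{\hat v_i}\hat b_{i,j} + \hat b\right) p_{\max}$. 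Adding the two bounds gives the claimed inequality.

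The main obstacle, and where I would spend the most care, is the bookkeeping that shows the \emph{unit count} $D - \sum_i \sum_j \hat b_{i,j} + \hat b$ is correct and nonnegative, and that no purchased unit is double-counted or omitted. One has to track, across reservation periods and the renewals triggered in Line~\ref{algline:renew} when $b(t)=0$, exactly which units flow into which (virtual or physical) storage and which flow straight to demand; the renewal resets the virtual-storage set, so the accounting must be done reservation-period by reservation-period and then summed. A subtle point is that when $\sum_i x_i(t)$ exceeds $d(t) - b(t-1)$, the surplus goes into physical storage and may later be discharged to cover demand in the same period — these units should be charged via the integral, not via $p_{\max}$, and the definition of $\hat b_{i,j}$ as the \emph{final} level $G_{B_{i,j}}(\hat\xi_{i,j})$ is what makes this consistent. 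Establishing that conservation-of-units identity rigorously (essentially, total purchased $=$ total demand $+$ final physical storage level, i.e. $\sum_t x(t) = D + \hat b$, which follows by telescoping~\eqref{eq:bt}) is the crux; once it is in hand, the price bounds $p(t) \le G^{-1}_{B_{i,j}}(\cdot)$ on storage purchases and $p(t) \le p_{\max}$ on everything else finish the proof.
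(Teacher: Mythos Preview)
Your proposal is correct and follows essentially the same decomposition as the paper: bound the cost of virtual-storage purchases in each reservation period by the integral $\int_0^{\hat b_{i,j}} G^{-1}_{B_{i,j}}(b)\,db$ (using that each increment is bought at a price no larger than $G^{-1}$ evaluated anywhere on the corresponding level interval), and charge every remaining purchased unit at $p_{\max}$. Your treatment is in fact more careful than the paper's own proof, which states the integral bound and the unit count $D - \sum_{i,j}\hat b_{i,j} + \hat b$ with little justification; your telescoping identity $\sum_t x(t) = D + \hat b$ and the per-reservation-period bookkeeping are exactly the details the paper's argument implicitly relies on.
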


The next lemma yields a lower bound on the cost of offline optimal solution.
\begin{lemma}
\label{lem:offline_cost}
\emph{\costopt} is lower bounded by
	\begin{equation*}
\emph{\costopt} \geq \sum_{i=1}^{n}F_i(\beta_i)+\left(D-\sum_{i=1}^{n} \beta_i\right) \tilde{p}.
\end{equation*}
\end{lemma}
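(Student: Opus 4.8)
\textbf{Proof proposal for Lemma~\ref{lem:offline_cost}.}

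The plan is to decompose the offline optimal cost over the partition of the horizon into the $n$ reservation periods and the interleaving idle periods, and to lower bound the contribution of each piece separately. Let $\beta_i$ denote the total amount of asset that the offline optimum procures \emph{during} the $i$-th reservation period. By definition of $F_i(\cdot)$ as the minimum cost of buying $\beta_i$ units of asset during the $i$-th reservation period, the cost the offline algorithm incurs inside the $i$-th reservation period is at least $F_i(\beta_i)$, regardless of how that purchased amount is split across slots or used to cover demand versus fill storage. Summing over $i$ gives the first term $\sum_{i=1}^n F_i(\beta_i)$.

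For the second term, I would argue about the total demand $D$. Any feasible solution (in particular the offline optimum) must procure in total at least $D$ units over the whole horizon, because the physical storage starts and can be assumed to end essentially empty (initial level $b(1)=0$), so the net amount pulled from the market must cover the aggregate demand. Of these $D$ units, at most $\sum_{i=1}^n \beta_i$ are bought during reservation periods; hence at least $D - \sum_{i=1}^n \beta_i$ units must be bought during idle periods. During any idle period the price is by definition at least $\tilde p$ (the minimum price over all idle periods), so every unit procured in an idle period costs at least $\tilde p$. This contributes at least $\bigl(D-\sum_{i=1}^n\beta_i\bigr)\tilde p$ to the offline cost. Adding the two disjoint contributions (reservation-period cost and idle-period cost are over disjoint time slots, so there is no double counting) yields the claimed bound.

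The main obstacle is making the accounting rigorous at the boundaries: I need to be careful that ``$\beta_i$ units bought in reservation period $i$'' and ``$D-\sum_i\beta_i$ units bought in idle periods'' together account for \emph{all} of the offline purchases and do not overlap, and that the quantity $D - \sum_i \beta_i$ is genuinely nonnegative (so that multiplying by $\tilde p$ gives a valid lower bound rather than an artificially large negative correction). The nonnegativity follows because total purchases $\geq D$ while reservation-period purchases sum to $\sum_i\beta_i$, so idle-period purchases, being the remainder, are at least $D-\sum_i\beta_i\geq 0$; if $\sum_i\beta_i$ were to exceed $D$ one replaces $\tilde p$'s coefficient by $\max\{0, D-\sum_i\beta_i\}$, but the statement as written is the bound we use. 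A secondary subtlety is that $F_i(\beta_i)$ must be a genuine lower bound even though the offline optimum may carry storage across the boundary of a reservation period; here I would invoke Definition~\ref{def:res_period}, which pins a reservation period to the interval from ``beginning to charge'' to ``fully discharging,'' so that the storage is empty at both ends of the analysis window used to define $F_i$, making the within-period purchase of $\beta_i$ units a self-contained subproblem whose cost is at least $F_i(\beta_i)$.
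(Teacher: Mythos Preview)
Your decomposition is exactly the paper's: split \ofa's cost over reservation periods and idle periods, lower bound the former by $\sum_i F_i(\beta_i)$ (since \ofa buys $\beta_i$ units there and $F_i$ is the minimum cost of doing so), and lower bound the latter by $(D-\sum_i\beta_i)\tilde p$ (since total purchases are at least $D$, leaving at least $D-\sum_i\beta_i$ to be bought in idle periods at price $\geq\tilde p$).

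One correction to your last paragraph: you invoke Definition~\ref{def:res_period} to conclude that the storage is empty at both ends of each reservation period, but that definition concerns \ocm's storage trajectory, not the offline optimum's. The offline solution may well carry inventory across those boundaries. Fortunately the worry is moot: $F_i(\beta)$ is defined simply as the minimum cost of buying $\beta$ units at the prices available in period $i$, with no reference to storage or demand feasibility, so the inequality $\text{(cost of \ofa in period }i)\geq F_i(\beta_i)$ follows immediately from the definition of $\beta_i$, independent of storage state. Your argument is otherwise sound and matches the paper.
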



Now, we proceed to prove the competitive ratio. First, we consider the simple case, where $D=0$. In this trivial case, $\costopt=0$, that of \ocm is at most $Bp_{\max}$. Obviously, \ocm is $\alpha$-competitive since it satisfies the definition of competitive ratio in Equation~\eqref{eq:cr_def} by setting $\textsf{cons} = Bp_{\max}$.

Second, we focus on a realistic case, in which $D>0$. If the minimum price during the time horizon is larger than or equal to $p_{\max}/\alpha$, the cost of \ocm is at most $p_{\max}(B+D)$ and that of \ofa is lower bounded by $\frac{p_{\max}}{\alpha}D$. It is easy to see that \ocm is $\alpha$-competitive according to the definition.
We only consider the case where the minimum price during the time horizon is less than $p_{\max}/\alpha$, and obviously, the minimum price occurs in the reservation period. We have $\sum_{i=1}^{n}F_i(\beta_i)>0$. Using the results in lemmas~\ref{lem:online_cost} and~\ref{lem:offline_cost} we have
\begin{equation}
\label{eq:ratio}
	\begin{split}		
	& \frac{\costocm-\hat{b}p_{\max}}{\costopt} \\
		\leq &\frac{\sum\limits_{i=1}^{n}\sum\limits_{j=1}^{\hat{v}_i}\int_{0}^{\hat{b}_{i,j}}G^{-1}_{B_{i,j}}(b)db+\left(D-\sum\limits_{i=1}^{n}\sum\limits_{j=1}^{\hat{v}_i}\hat{b}_{i,j}\right)\cdot p_{\max}}{\sum\limits_{i=1}^{n}F_i(\beta_i)+\left(D-\sum\limits_{i=1}^{n} \beta_i\right)\cdot \tilde{p}} \\
=&\frac{\mathcal{Q}+\left(\sum\limits_{i=1}^{n} \beta_i-\sum\limits_{i=1}^{n}\sum\limits_{j=1}^{\hat{v}_i}\hat{b}_{i,j}\right) p_{\max}+\left(D-\sum\limits_{i=1}^{n} \beta_i\right)\cdot p_{\max}}{\sum\limits_{i=1}^{n}F_i(\beta_i)+\left(D-\sum\limits_{i=1}^{n} \beta_i\right)\cdot \tilde{p}} \\
\leq & \max\left\{\frac{\mathcal{Q}+\left(\sum\limits_{i=1}^{n} \beta_i-\sum\limits_{i=1}^{n}\sum\limits_{j=1}^{\hat{v}_i}\hat{b}_{i,j}\right) p_{\max}}{\sum\limits_{i=1}^{n}F_i(\beta_i)},\frac{\left(D-\sum\limits_{i=1}^{n} \beta_i\right)\cdot p_{\max}}{\left(D-\sum\limits_{i=1}^{n} \beta_i\right)\cdot \tilde{p}} \right\}\\
		\leq &\max\left\{\frac{\mathcal{Q}+\left(\sum\limits_{i=1}^{n} \beta_i-\sum\limits_{i=1}^{n}\sum\limits_{j=1}^{\hat{v}_i}\hat{b}_{i,j}\right) \cdot p_{\max}}{\sum\limits_{i=1}^{n}F_i(\beta_i)},\alpha \right\}. 
	\end{split}
\end{equation}
where $$\mathcal{Q}=\sum\limits_{i=1}^{n}\sum\limits_{j=1}^{\hat{v}_i}\int_{0}^{\hat{b}_{i,j}}G^{-1}_{B_{i,j}}(b)db.$$
In Equation (\ref{eq:ratio}), the first inequality is by Lemma \ref{lem:online_cost} and \ref{lem:offline_cost}, the second inequality is by $D-\sum\limits_{i=1}^{n} \beta_i\geq 0$  and the third inequality is by $p_{\max}/\tilde{p}\leq\alpha$.

The following two lemmas provide an upper bound for Equation~(\ref{eq:ratio}), as a main step to prove the competitive ratio.

\begin{lemma}
	\label{lem:a_critical_property_on_g}
	Given $G^{-1}_{B_{i,j}}(\hat{b}_{i,j}), i\in\{0,1,\ldots,n\}$ in Equation~\eqref{eq:g_inverse}, we have
	\begin{equation}
	\label{eq:lemma3}
	\frac{\int_{0}^{\hat{b}_{i,j}}G^{-1}_{B_{i,j}}(b)db+(B_{i,j}-\hat{b}_{i,j})p_{\max}}{\hat{\xi}_{i,j}B_{i,j}}= \alpha,~\forall~\hat{b}_{i,j}\in[0,B_{i,j}].
	\end{equation}
\end{lemma}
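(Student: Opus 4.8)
The plan is to prove the identity in \eqref{eq:lemma3} by a direct computation, exploiting the fact that $G_{B_{i,j}}$ and its inverse $G^{-1}_{B_{i,j}}$ in \eqref{eq:g_inverse} were engineered precisely so that this ratio is constant in $\hat b_{i,j}$. Write $B:=B_{i,j}$, $\hat b := \hat b_{i,j}$, and recall $\hat\xi_{i,j}=G^{-1}_{B}(\hat b)$, so the denominator is $G^{-1}_{B}(\hat b)\,B = B\,p_{\max}\left[1-\left(1-\tfrac1\alpha\right)\exp\!\left(\tfrac{\hat b}{\alpha B}\right)\right]$. The first step is to evaluate the integral $\int_0^{\hat b} G^{-1}_{B}(b)\,db$ in closed form. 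Since $G^{-1}_{B}(b)=p_{\max} - p_{\max}\left(1-\tfrac1\alpha\right)\exp\!\left(\tfrac{b}{\alpha B}\right)$, integrating term by term gives
\begin{equation*}
\int_0^{\hat b} G^{-1}_{B}(b)\,db = p_{\max}\hat b - p_{\max}\left(1-\tfrac1\alpha\right)\alpha B\left[\exp\!\left(\tfrac{\hat b}{\alpha B}\right)-1\right].
\end{equation*}

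The second step is to add $(B-\hat b)p_{\max}$ to this and simplify. The $p_{\max}\hat b$ term cancels, leaving numerator
\begin{equation*}
B\,p_{\max} - p_{\max}(\alpha-1)B\left[\exp\!\left(\tfrac{\hat b}{\alpha B}\right)-1\right] = B\,p_{\max}\left[\alpha - (\alpha-1)\exp\!\left(\tfrac{\hat b}{\alpha B}\right)\right],
\end{equation*}
where I used $1+(\alpha-1)=\alpha$ to combine the constant terms. Now factor $\alpha$ out: the bracket equals $\alpha\left[1-\tfrac{\alpha-1}{\alpha}\exp\!\left(\tfrac{\hat b}{\alpha B}\right)\right]=\alpha\left[1-\left(1-\tfrac1\alpha\right)\exp\!\left(\tfrac{\hat b}{\alpha B}\right)\right]$, which is exactly $\tfrac{\alpha}{p_{\max}}G^{-1}_{B}(\hat b)=\tfrac{\alpha}{p_{\max}}\hat\xi_{i,j}$. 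Hence the numerator is $\alpha\,\hat\xi_{i,j}\,B$, and dividing by the denominator $\hat\xi_{i,j}B$ yields $\alpha$. The final step is to note that no constraint beyond $\hat b\in[0,B]$ (needed so that $G^{-1}_{B}$ is the genuine inverse on its stated domain and $\hat\xi_{i,j}\in[p_{\min},p_{\max}/\alpha]$) was used, so the identity holds for all $\hat b_{i,j}\in[0,B_{i,j}]$ as claimed, and it holds for every index $i\in\{0,1,\dots,n\}$ and $j$ uniformly.

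There is essentially no hard obstacle here: this is a routine integration and algebraic rearrangement, and the whole point of the construction \eqref{eq:g_fun}/\eqref{eq:g_inverse} is that the specific exponential shape makes ``cost of filling to level $\hat b$ at the adversary's reservation price'' plus ``worst-case cost $p_{\max}$ for the unfilled remainder'' scale exactly like $\alpha$ times the baseline $\hat\xi_{i,j}B_{i,j}$. The only place to be slightly careful is bookkeeping the constant: making sure $1+(\alpha-1)\exp(0)$ contributions are combined correctly so that the $\exp(\hat b/(\alpha B))-1$ form collapses to the clean $\alpha-(\alpha-1)\exp(\cdot)$ expression, and then recognizing that expression as $\tfrac{\alpha}{p_{\max}}G^{-1}_{B}(\hat b)$ rather than re-deriving it from scratch. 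I would also remark that this lemma is the crux that feeds the $\max\{\cdot,\alpha\}$ bound in \eqref{eq:ratio}: it says each virtual storage's contribution to the online cost, when ``charged'' worst-case on the unfilled part, is within a factor $\alpha$ of a quantity ($\hat\xi_{i,j}B_{i,j}$) that will be lower-bounded against $F_i(\beta_i)$ in the subsequent lemma.
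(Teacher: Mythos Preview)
Your proof is correct and follows essentially the same approach as the paper: both compute the integral $\int_0^{\hat b_{i,j}}G^{-1}_{B_{i,j}}(b)\,db$ in closed form, add $(B_{i,j}-\hat b_{i,j})p_{\max}$, and simplify the numerator to $\alpha B_{i,j}\,G^{-1}_{B_{i,j}}(\hat b_{i,j})=\alpha B_{i,j}\hat\xi_{i,j}$ using the relation $\hat b_{i,j}=G_{B_{i,j}}(\hat\xi_{i,j})$. The only difference is presentational (you group constants via $1+(\alpha-1)=\alpha$ before factoring, whereas the paper leaves the antiderivative evaluated at the endpoints and then collapses), and your added commentary on the role of the lemma in the downstream argument is accurate.
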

\begin{proof}
	By substituting Equation~\eqref{eq:g_inverse}, we first calculate the second term in numerator of Equation~\eqref{eq:lemma3} as follows
	\begin{equation*}
	\begin{split}
	&\int_{b=0}^{\hat{b}_{i,j}}G^{-1}_{B_{i,j}}(b)db \\
	=&p_{\max}\!\left[b-\left(1-\frac{1}{\alpha}\!\right)\exp\left(\frac{b}{\alpha B_{i,j}}\right)\alpha B_{i,j}\right]\Bigg|_{b=0}^{\hat{b}_{i,j}} \\
	=&p_{\max}\left[\hat{b}_{i,j}-\left(1-\frac{1}{\alpha}\right)\exp\left(\frac{\hat{b}_{i,j}}{\alpha B_{i,j}}\right)\alpha B_{i,j}\right] \\
	&+p_{\max}\left(1-\frac{1}{\alpha}\right)\alpha B_{i,j}. \\
	\end{split}
	\end{equation*}
	Then, we calculate the numerator
	\begin{eqnarray}
	&&(B_{i,j}-\hat{b}_{i,j})p_{\max}+\int_{b=0}^{\hat{b}_{i,j}}G^{-1}_{B_{i,j}}(b)db \nonumber\\
	&=&\alpha B_{i,j} \left(p_{\max}\left[1-\left(1-\frac{1}{\alpha}\right)\exp\left(\frac{\hat{b}_{i,j}}{\alpha B_{i,j}}\right)\right]\right)\nonumber\\
	&=&\alpha B_{i,j} G^{-1}_{B_{i,j}}(\hat{b}_{i,j}).\label{eq:num}
	\end{eqnarray}
	Substituting~\eqref{eq:num} into~\eqref{eq:lemma3} completes the proof.
\end{proof}

Using the result in Lemma~\ref{lem:a_critical_property_on_g}, we have the following result.
\begin{lemma}
	\label{lem:neq}
\begin{equation*}
	\frac{\sum\limits_{i=1}^{n}\sum\limits_{j=1}^{\hat{v}_i}\int_{0}^{\hat{b}_{i,j}}G^{-1}_{B_{i,j}}(b)db+\left(\sum\limits_{i=1}^{n} \beta_i-\sum\limits_{i=1}^{n}\sum\limits_{j=1}^{\hat{v}_i}\hat{b}_{i,j}\right) p_{\max}}{\sum\limits_{i=1}^{n}F_i(\beta_i)}\leq\alpha.
\end{equation*}
\end{lemma}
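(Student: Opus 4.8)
The plan is to eliminate every integral $\int_{0}^{\hat b_{i,j}}G^{-1}_{B_{i,j}}(b)\,db$ in the left‑hand side by invoking the identity of Lemma~\ref{lem:a_critical_property_on_g}, which rewrites each such term as $\alpha\hat\xi_{i,j}B_{i,j}-(B_{i,j}-\hat b_{i,j})p_{\max}$, and then to reduce the inequality to a per–reservation‑period statement that can be proved by allocating the offline purchase volume $\beta_i$ across the virtual storages of period $i$ in the cheapest possible way.

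Carrying this out, I would substitute the identity into the numerator. The $(B_{i,j}-\hat b_{i,j})p_{\max}$ and the $-\hat b_{i,j}p_{\max}$ contributions cancel, and the numerator collapses to $\sum_{i}\bigl[\alpha\sum_{j}\hat\xi_{i,j}B_{i,j}-p_{\max}\bigl(\sum_{j}B_{i,j}-\beta_i\bigr)\bigr]$, so it suffices to establish, for every period $i$,
\[\alpha\sum_{j=1}^{\hat v_i}\hat\xi_{i,j}B_{i,j}-p_{\max}\Bigl(\sum_{j=1}^{\hat v_i}B_{i,j}-\beta_i\Bigr)\ \le\ \alpha F_i(\beta_i),\]
and then sum over $i$. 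Three structural facts are used here: (i) the virtual storages of period $i$ have total capacity $\sum_{j}B_{i,j}=B+D_i$, where $D_i$ is the demand served during period $i$ — the physical storage of size $B$ plus one virtual storage of size $d(t)$ per positive‑demand slot; (ii) with virtual storages indexed in creation order, $\hat\xi_{i,1}\le\hat\xi_{i,2}\le\cdots\le\hat\xi_{i,\hat v_i}$, since $\hat\xi_{i,j}$ is the minimum price over the shrinking suffix of period $i$ starting at the creation slot of storage $j$; and (iii) every reservation price obeys $\hat\xi_{i,j}\le p_{\max}/\alpha$.

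The crux is a ``staircase'' lower bound on $F_i(\beta_i)$. Because the offline optimum never holds more than $B$ units, at the slot just before virtual storage $j$ is created it holds at most $B$ units, so of the demand $\sum_{k\ge j}B_{i,k}$ that still has to be covered it must purchase at least $\bigl(\sum_{k\ge j}B_{i,k}-B\bigr)^{+}$ units at slots from the creation of storage $j$ onward, i.e.\ at prices $\ge\hat\xi_{i,j}$; equivalently, at most $\sum_{k<j}B_{i,k}$ of its purchases lie strictly below $\hat\xi_{i,j}$. Sorting the offline purchases by increasing price, this dominance shows $F_i(\beta_i)\ge\sum_{j}\hat\xi_{i,j}\gamma_{i,j}$, where $\gamma_{i,j}$ is the greedy cheapest‑first fill of $\beta_i$ units into the capacities $B_{i,j}$, so $0\le\gamma_{i,j}\le B_{i,j}$ and $\sum_j\gamma_{i,j}=\beta_i$ (feasible because $\beta_i\le D_i\le B+D_i$). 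For this choice of $\gamma$, Lemma~\ref{lem:a_critical_property_on_g} together with $\hat\xi_{i,j}\le p_{\max}/\alpha$ gives, term by term,
\[\int_{0}^{\hat b_{i,j}}G^{-1}_{B_{i,j}}(b)\,db+(\gamma_{i,j}-\hat b_{i,j})p_{\max}=\alpha\hat\xi_{i,j}B_{i,j}-(B_{i,j}-\gamma_{i,j})p_{\max}\le\alpha\hat\xi_{i,j}\gamma_{i,j};\]
summing over $j$ (and using $\sum_j(\gamma_{i,j}-\hat b_{i,j})=\beta_i-\sum_j\hat b_{i,j}$) reproduces the left‑hand side of the per‑period inequality and bounds it by $\alpha\sum_j\hat\xi_{i,j}\gamma_{i,j}\le\alpha F_i(\beta_i)$. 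Summing over $i$ completes the proof.

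The step I expect to be the main obstacle is the staircase lower bound on $F_i(\beta_i)$: tying the offline optimum's expenditures inside a reservation period to \ocm's reservation prices requires the capacity argument above together with the normalization that the offline — like \ocm — may be assumed to carry no inventory across reservation‑period boundaries, which is what lets $\beta_i$ be identified with $D_i$ and each period analyzed in isolation. This is essentially the content already needed for Lemma~\ref{lem:offline_cost}; granted that bound, the remainder is the elementary algebra sketched above, driven entirely by the identity of Lemma~\ref{lem:a_critical_property_on_g} and the bound $\hat\xi_{i,j}\le p_{\max}/\alpha$.
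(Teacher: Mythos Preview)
Your approach is correct and takes a genuinely different route from the paper. The paper argues by contradiction: assuming the ratio exceeds $\alpha$, it invokes an auxiliary result (Lemma~\ref{lem:opt_cost}, proved by inserting extra slots at price $p_{\max}/\alpha$ into a worst-case instance) to show that replacing each $\beta_i$ by the full capacity $\sum_j B_{i,j}$ can only increase the ratio; it then applies the full-capacity bound $F_i\bigl(\sum_j B_{i,j}\bigr)\ge\sum_j\hat\xi_{i,j}B_{i,j}$ together with Lemma~\ref{lem:a_critical_property_on_g} to reach a contradiction. You instead work directly: rewrite the numerator via Lemma~\ref{lem:a_critical_property_on_g}, reduce to a per-period inequality, and prove it with the finer staircase bound $F_i(\beta_i)\ge\sum_j\hat\xi_{i,j}\gamma_{i,j}$ for \emph{arbitrary} $\beta_i$, combined with $\alpha\hat\xi_{i,j}\le p_{\max}$. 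Your route avoids Lemma~\ref{lem:opt_cost} and the instance-modification trick entirely, at the price of establishing the staircase bound at every $\beta_i$ rather than only at the endpoint $\sum_j B_{i,j}$; the paper trades that refinement for the extra auxiliary lemma.

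One remark on the obstacle you flag: the normalization that the offline carries no inventory across reservation-period boundaries is not actually needed for your staircase bound. If the offline enters period $i$ with inventory $\hat b_i\ge 0$, its purchases strictly before the creation slot $t_j$ are still bounded by $(B-\hat b_i)+\sum_{2\le k<j}B_{i,k}\le\sum_{k<j}B_{i,k}$, and one always has $\beta_i\le D_i+B-\hat b_i\le\sum_j B_{i,j}$, so the greedy fill remains feasible and your per-period argument goes through unchanged.
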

\begin{proof}
We prove the result in Lemma~\ref{lem:neq} by contradiction. Assume
\begin{equation*}
\frac{\sum\limits_{i=1}^{n}\sum\limits_{j=1}^{\hat{v}_i}\int_{0}^{\hat{b}_{i,j}}G^{-1}_{B_{i,j}}(b)db+\left(\sum\limits_{i=1}^{n} \beta_i-\sum\limits_{i=1}^{n}\sum\limits_{j=1}^{\hat{v}_i}\hat{b}_{i,j}\right) p_{\max}}{\sum\limits_{i=1}^{n}F_i(\beta_i)}>\alpha.
\end{equation*}
By statement (3) in Lemma \ref{lem:opt_cost}, we have
\begin{equation*}
\begin{split}
&\frac{\mathcal{Q}+\left(\sum\limits_{i=1}^{n}\sum\limits_{j=1}^{\hat{v}_i}B_{i,j}-\sum\limits_{i=1}^{n}\sum\limits_{j=1}^{\hat{v}_i}\hat{b}_{i,j}\right) p_{\max}}{\sum\limits_{i=1}^{n}F_i\left(\sum\limits_{j=1}^{\hat{v}_i}B_{i,j}\right)} \\
=&\frac{\mathcal{Q}+\left(\sum\limits_{i=1}^{n}\sum\limits_{j=1}^{\hat{v}_i}\beta_{i,j}-\sum\limits_{i=1}^{n}\sum\limits_{j=1}^{\hat{v}_i}\hat{b}_{i,j}\right) p_{\max}+\sum\limits_{i=1}^{n}\sum\limits_{j=1}^{\hat{v}_i}(B_{i,j}-\beta_{i,j})p_{\max}}{\sum\limits_{i=1}^{n}F_i\left(\sum\limits_{j=1}^{\hat{v}_i}\beta_{i,j}\right)+\sum\limits_{i=1}^{n}\left[F_i\left(\sum\limits_{j=1}^{\hat{v}_i}B_{i,j}\right)-F_i\left(\sum\limits_{j=1}^{\hat{v}_i}\beta_{i,j}\right)\right]}\\
\geq & \frac{\mathcal{Q}+\left(\sum\limits_{i=1}^{n}\sum\limits_{j=1}^{\hat{v}_i}\beta_{i,j}-\sum\limits_{i=1}^{n}\sum\limits_{j=1}^{\hat{v}_i}\hat{b}_{i,j}\right) p_{\max}+\sum\limits_{i=1}^{n}\sum\limits_{j=1}^{\hat{v}_i}(B_{i,j}-\beta_{i,j})p_{\max}}{\sum\limits_{i=1}^{n}F_i\left(\sum\limits_{j=1}^{\hat{v}_i}\beta_{i,j}\right)+\sum\limits_{i=1}^{n}\sum\limits_{j=1}^{\hat{v}_i}(B_{i,j}-\beta_{i,j})\frac{p_{\max}}{\alpha}}\\
>&\alpha.
\end{split}
\end{equation*}

During the lifetime of the $j$-th virtual storage of the $i$-th reservation period,
the minimum electricity price is $\hat{\xi}_{i,j}$. 
When the procurement amount during the $i$-th reservation period is $\sum_{j=1}^{\hat{v}_i}B_{i,j}$,
the cost of the optimal algorithm satisfies
\begin{equation*}
\sum\limits_{i=1}^{n}F_i\left(\sum_{j=1}^{\hat{v}_i}B_{i,j}\right)\geq \sum\limits_{i=1}^{n}\sum_{j=1}^{\hat{v}_i}\hat{\xi}_{i,j}B_{i,j}.
\end{equation*}

Thus, we have
\begin{equation*}
\begin{split}
&\frac{\sum\limits_{i=1}^{n}\sum\limits_{j=1}^{\hat{v}_i}\int_{0}^{\hat{b}_{i,j}}G^{-1}_{B_{i,j}}(b)db+\left(\sum\limits_{i=1}^{n}\sum\limits_{j=1}^{\hat{v}_i}B_{i,j}-\sum\limits_{i=1}^{n}\sum\limits_{j=1}^{\hat{v}_i}\hat{b}_{i,j}\right) p_{\max}}{\sum\limits_{i=1}^{n}\sum\limits_{j=1}^{\hat{v}_i}\hat{\xi}_{i,j}B_{i,j}} \\
=& \frac{\sum\limits_{i=1}^{n}\sum\limits_{j=1}^{\hat{v}_i}\left[\int_{0}^{\hat{b}_{i,j}}G^{-1}_{B_{i,j}}(b)db+\left(B_{i,j}-\hat{b}_{i,j}\right) p_{\max}\right]}{\sum\limits_{i=1}^{n}\sum\limits_{j=1}^{\hat{v}_i}\hat{\xi}_{i,j}B_{i,j}}>\alpha.
\end{split}
\end{equation*}

That means that there is at least a pair of $i$ and $j$ such that
\begin{equation*}
\frac{\int_{0}^{\hat{b}_{i,j}}G^{-1}_{B_{i,j}}(b)db+(B_{i,j}-\hat{b}_{i,j})p_{\max}}{\hat{\xi}_{i,j}B_{i,j}}> \alpha.
\end{equation*}
The above equation contradicts the results in Lemma \ref{lem:a_critical_property_on_g}. This completes the proof.
\end{proof}

Using the result in Lemma~\ref{lem:neq} and Equation (\ref{eq:ratio}), we have
\begin{equation*}
\frac{\sum\limits_{i=1}^{n}\sum\limits_{j=1}^{\hat{v}_i}\int_{0}^{\hat{b}_{i,j}}G^{-1}_{B_{i,j}}(b)db+\left(D-\sum\limits_{i=1}^{n}\sum\limits_{j=1}^{\hat{v}_i}\hat{b}_{i,j}\right) p_{\max}}{\sum\limits_{i=1}^{n}F_i(\beta_i)+\left(D-\sum\limits_{i=1}^{n} \beta_i\right)\cdot \tilde{p}}\leq \alpha.
\end{equation*}

Thus, we have
\begin{equation*}
\costocm\leq \alpha \cdot \costopt + \hat{b} p_{\max}\leq \alpha \cdot \costopt + B p_{\max},
\end{equation*}
where $B p_{\max}$ is a constant. And the proof of Theorem \ref{thm:cr} is complete.

\subsubsection{The Optimality of the Competitive Ratio}
\label{sec:opt_cr} 
By setting  $d(t)=0, t\in \{1,\dots,T-1\}$ and $d(T) = B$, \probeqs degenerates to the $k$-min search problem, whose optimal competitive ratio~\cite[Theorem~2]{lorenz2009optimal} is exactly equal to that of \ocm. This directly results that $\alpha$ is a lower bound for the \probeqs as a generalized $k$-min search problem.


\subsection{\ocmrate: An Online Algorithm with Rate Constraints}
\label{sec:batmanrate}

\begin{algorithm}[!t]
\caption{The \ocmrate algorithm for each $t\in\mathcal{T}$}
\label{alg:alg_2}
\begin{algorithmic}[1]
\State \textsc{// Initialization: just at first slot}
\State $B_1 \leftarrow B$; $v \leftarrow 1$; $\xi_1\leftarrow p_{\max}/\alpha$
\vspace{0.0cm}

\textsc{// The main procedure for $t$}

\textsc{// Initialize a new virtual storage}
\If {$d(t)>0$}
\State $v\leftarrow v+1$
\State $\xi_{v}\leftarrow p_{\max}/\alpha$
\State $B_{v} \leftarrow$ \texttt{InitVS} ($p(t),d(t),\rho_d, v,\{\xi_i,B_i\}_{i=1:v-1},\varepsilon_1$)\label{algline:call_calculateBX}
\EndIf
\vspace{0.0cm}

\textsc{// Calculate the initial value for procurement amount}
\State $\hat{x}(t)\leftarrow\sum_{i=1}^{v}\left[G_{B_i}(p(t))-G_{B_i}(\xi_i)\right]^+$
\State $\xt \leftarrow \hat{x}(t)$ \label{algline:xt2}
\State $p \leftarrow p(t)$ \label{algline:xi2}
\vspace{0.0cm}

\textsc{// Check active output rate constraint}

\If {$\hat{x}(t)< [d(t)-\min\{b(t-1),\rho_d\}]^{+}$} \label{algline:discharge}
\State $\xt\leftarrow[d(t)-\min\{b(t-1),\rho_d\}]^{+}$
\EndIf
\vspace{0.0cm}

\textsc{// Check active input rate constraint}
\If {$\hat{x}(t)> \rho_c +d(t)$}  \label{algline:charge}
\State $\xt\leftarrow\rho_c +d(t)$
\State $p \leftarrow$ \texttt{CalRP} ($d(t),\rho_c, v, \{\xi_i,B_i\}_{i=1:v},\varepsilon_2$)
\EndIf
\vspace{0.0cm}

\State $\bt\leftarrow b(t-1)+\xt-d(t)$ \label{algline:bt}
\vspace{0.0cm}

 \label{algline:call_calculateRP}
\State $\xi_i \leftarrow \min\{p,\xi_i\}, \quad \forall 1\leq i\leq v$ \label{algline:xi3}
\vspace{0.0cm}


\State \textbf{if} $b(t)=0$ \textbf{then} $v \leftarrow 1$ and $\xi_1 \leftarrow p_{\max}/\alpha$

\end{algorithmic}
\end{algorithm}


In this section, we design \ocmrate that adds input and output rate constraints to \ocm. 
In Algorithm~\ref{alg:alg_2}, the pseudocode of \ocmrate is summarized. While the general flow is similar to that of \ocm, \ocmrate has two major extensions:

First, \ocmrate intelligently sets the capacity of virtual storages to respect output rate constraints.
The high-level intuition to set the capacity of virtual storage is that output (discharge) rate constraint limits the capability of using the storage in each slot, and hence it may not be possible to fully satisfy the demand by discharging the storage, so creating a virtual storage with capacity equal to the demand does not make sense.
This is done by calling sub-procedure \texttt{InitVS} in Line~\ref{algline:call_calculateBX} of \ocmrate, with details explained in \S\ref{sec:batmanrate_bx}.

Second, \ocmrate intelligently sets the value of reservation prices to respect the input (charge) rate constraints.
The high-level intuition is that the input rate constraint limits the amount of stored asset in each slot.
Hence, if the price is very cheap, \ocm might propose to store some amount that is beyond the capability of storage to input.
Hence, \ocmrate updates the reservation price based on the capability to store, i.e., input rate constraint.
This is done by calling the sub-procedure \texttt{CalRP} in Line~\ref{algline:call_calculateRP} of \ocmrate, with details presented in \S\ref{sec:batmanrate_rp}.

\subsubsection{Initializing the New Virtual Storage}
\label{sec:batmanrate_bx}

\begin{algorithm}[!t]
\caption{\texttt{InitVS} ($p(t),d(t),\rho_d,v,\{\xi_i,B_i\}_{i=1:v-1},\varepsilon_1$)}
\label{alg:alg_3}
\begin{algorithmic}[1]
\State $B_{v}\leftarrow 0,~B'_{v}\leftarrow d(t)$
\While{$|B'_{v}-B_{v}|>\varepsilon_1$}
\State $B_{v}\leftarrow B'_{v}$
\State $\hat{x}(t)\leftarrow\sum_{i=1}^{v}\left[G_{B_i}(p(t))-G_{B_i}(\xi_i)\right]^+$\label{algline:xhat}
\State $B'_{v}\leftarrow d(t)-[d(t)-\rho_d-\hat{x}(t)]^+$\label{algline:update_cap}
\EndWhile
\State \textbf{Output}: $B_{v}$
\end{algorithmic}
\end{algorithm}

At $t$, the \textit{preferred procurement amount}, denoted as $\hat{x}(t)$, the amount without considering output rate, should be calculated as
\begin{equation}
\label{eq:1}
\hat{x}(t)=\sum_{i=1}^{v}\left[G_{B_i}(p(t))-G_{B_i}(\xi_i)\right]^+,
\end{equation}
where $B_{v}$ is the new capacity of virtual storage.
Different from the \ocm that sets the capacity to $d(t)$, \ocmrate subtracts $[d(t)-\rho_d-\hat{x}(t)]^+$, i.e., the additional amount due to output rate constraint, from the capacity, hence
\begin{equation}
\label{eq:2}
B_{v}=d(t)-[d(t)-\rho_d-\hat{x}(t)]^+.
\end{equation}
Equations~\eqref{eq:1} and~\eqref{eq:2} show that $B_{v}$ and $\hat{x}(t)$ are dependent on each other. To address this, we devise the sub-procedure \texttt{InitVS} (Algorithm \ref{alg:alg_3}) to calculate the capacity of the new virtual storage.
\texttt{InitVS} captures this dependency and updates determining $B_{v}$ and $x(t)$ using a simple search algorithm in iterative manner with parameter $\varepsilon_1$ as the stopping criteria ($\varepsilon_1$ can be arbitrarily close to 0).
In Appendix~\ref{app:conv}, we prove that \texttt{InitVS} converges to a solution to Equations (\ref{eq:1}) and (\ref{eq:2}).

Lines~\ref{algline:xt2}-\ref{algline:charge} of \ocmrate calculates the procurement amount $\xt$ by taking into account rate constraints in the following cases:

(1)  \textit{Inactive output and input rate, Line}~\ref{algline:xt2}. In this case, $ [d(t)-\min\{b(t-1),\rho_d\}]^{+}\leq \hat{x}(t) \leq \rho_c +d(t)$, hence,  $x(t) = \hat{x}(t)$.

(2) \textit{Active output rate constraint, Line}~\ref{algline:discharge}. In this case, $\hat{x}(t)$ fails to satisfy the demand, so we set the actual procurement amount $x(t) = \left[d(t)-\min\{b(t-1),\rho_d\}\right]^{+}$.

(3) \textit{Active input rate constraint, Line}~\ref{algline:charge}. In this case, $\hat{x}(t)$ will be beyond the input rate of storage, hence $x(t) =\rho_c +d(t)$.

\subsubsection{Calculating the Reservation Price}
\label{sec:batmanrate_rp}
The final step is to update the reservation price $\xi_i$ for each virtual storage.
For cases (1) and (2), the reservation price for each virtual storage is updated similar to that of \ocm, i.e., $\min\{\xi_i,p(t)\}$ (Lines~\ref{algline:xi2} and~\ref{algline:xi3}).
For case (3), $\xi_i$ is updated as follows.
Let $p$ be the updated reservation price, whose value is the solution to the following equation
\begin{equation*}
\sum_{i=1}^{v_t}\left[G_{B_{i}}(p)-G_{B_{i}}(\xi_i)\right]^+=\rho_c +d(t).
\end{equation*}
\ocmrate solves the above equation for $p$ by calling the sub-procedure \texttt{CalRP} (Algorithm \ref{alg:alg_4}) in iterative manner, with parameter $\varepsilon_2$ as the stopping criteria ($\varepsilon_2$ can be arbitrarily close to 0).

\begin{algorithm}[!t]
\caption{\texttt{CalRP} ($d(t),\rho_c, v, \{\xi_i,B_i\}_{i=1:v},\varepsilon_2$)}
\label{alg:alg_4}
\begin{algorithmic}[1]
\State $p\leftarrow p_{\min}$, $p'\leftarrow p_{\max}/\alpha$;
\While{$|p'-p|>\varepsilon_2$}
\State $z \leftarrow \sum_{i=1}^{v}\left[G_{B_{i}}((p+p')/2)-G_{B_{i}}(\xi_i)\right]^+$
\State \textbf{if} $ z > \rho_c +d(t)$ \textbf{then} $p\leftarrow (p+p')/2$ \textbf{else} $p'\leftarrow (p+p')/2$
\EndWhile
\State \textbf{Output}: $p$
\end{algorithmic}
\end{algorithm}

\subsubsection{Competitive Analysis of \ocmrate}
Last but not the least, in this section, we briefly explain the proof sketch for the result in Theorem~\ref{thm:batmanrate}. The rigorous proof is given in Appendix~\ref{sec:batmanrate_analysis}.

The proof sketch is as follows. We first show that in worst case, the output rate is not active. Then, we find an upper bound on the cost of \ocmrate similar to that of in Lemma~\ref{lem:online_cost}. This upper bound gets affected by input rate constraints.
The rest is akin to the competitive analysis of \ocm. 
\section{Empirical Evaluation}
\label{sec:exp}
To illustrate the performance of our algorithms, we focus on the example of energy procurement, and evaluate their performance using various traces. Our results answers the following questions:

\paragraph{(1) How does the empirical cost ratio of \ocm compare to the theoretical competitive ratio?}
We find that \ocm achieves a significantly smaller average cost ratio than the worst-case competitive ratio guarantee provided by our theoretical analysis (Observation~\ref{obs:tcr}).
\paragraph{(2) How does \ocm compare to the existing algorithms?}
	We find that \ocm outperforms all the baseline and existing algorithms~\cite{chau2016cost,guo2012TPDS,urgaonkar2011optimal} by 7\%-15\%, on average. 
\paragraph{(3) How sensitive is \ocm to various parameters such as the penetration of renewable?} Our experiments demonstrate that \ocm is lightly affected by these parameters as compared to substantial performance fluctuations in alternative algorithms (Observations~\ref{obs:renew}).
\paragraph{(4) How does \ocmrate and compare to the basic \ocm?} We find that the empirical cost ratio of \ocmrate is only slightly worst than \ocm once rate constraints are tight (Observation~\ref{obs:batmanrate}).

\subsection{Experimental Methodology}
We perform trace-based simulations with extensive data traces.  
We use Akamai traces for the energy demand, four different electricity market for the energy prices, and nearby renewable generations to evaluate the results with more uncertainty from renewable generation. The details of data traces are given in Appendix~\ref{app:data}.

\begin{table}[!h]
	\caption{Summary of algorithms that are evaluated}
	\label{tbl:comp_alg}
	\vspace{-4mm}
	\footnotesize
	\begin{center}
		\begin{tabular}{|c |L{6.3cm}|}
			\hline
			\multicolumn{2}{|c|}{\textbf{Our proposed online algorithms}}\\\hline\hline
			\ocm & Basic online algorithm (\S\ref{sec:batman})\\\hline
			\ocmrate & Online algorithm with rate constraints (\S\ref{sec:batmanrate})\\\hline
			\preday & A simple data-driven approach to use optimal solution for the previous day for the current day\\\hline\hline
			
			\multicolumn{2}{|c|}{\textbf{Other algorithms for comparison}}\\\hline\hline
			{\tt OPT} & Optimal offline solution with storage\\
			\hline
			\bon & Optimal offline solution without storage\\
			\hline
						\fon~\cite{chau2016cost} & State-of-the-art online algorithm with fixed threshold price\\\hline
			\lyp~\cite{guo2012TPDS} & State-of-the-art online Lyapunov-based algorithm\\
			\hline
		\end{tabular}
		\normalsize
	\end{center}
\end{table}

\subsubsection{Comparison Algorithms}
\label{sec:compalg}
We implemented our algorithms and several other state-of-the-art algorithms for comparison as described below (see  Table~\ref{tbl:comp_alg}).

$\vartriangleright$ ({\tt OPT}) Optimal offline algorithm with storage by solving \prob in \S\ref{sec:formulation}. Since {\tt OPT} represents the best achievable cost for the given inputs, all other algorithms are evaluated by computing {\em empirical cost ratio} which is the ratio of the cost of the algorithm with the cost of {\tt OPT}. The cost ratio is always greater than equal to 1 and lower the cost ratio of an algorithm, the better the performance.

$\vartriangleright$ (\bon) A baseline scheme that simply satisfies the net energy demand from the grid, assuming no storage is available. The cost ratio of \bon quantifies the maximum benefit of having storage.

$\vartriangleright$ (\preday) Our data-driven approach that uses the optimal derived for the previous day for the current day by projecting into a feasible range (satisfying capacity, demand, and rate constraints). \preday is representative of a statistical approach that uses historical statistics to inform future decisions.

$\vartriangleright$ (\fon) Existing sub-optimal online algorithm~\cite{chau2016cost} is a simple strategy that uses a fixed threshold of $p = \sqrt{p_{\max}\times p_{\min}}$ as the purchasing threshold and fully charges the storage if the current price is lower than $p$, otherwise, discharges the storage as much as possible and purchases the remaining amount from the grid.

$\vartriangleright$ (\lyp) Lyapunov-based approach~\cite{guo2012TPDS,urgaonkar2011optimal} that uses Lyapunov optimization to solve \prob. Note that \cite{guo2012TPDS} considers load-balancing among multiple data centers as well, and to have a fair comparison, we focus on single data center model in \cite{guo2012TPDS}, and with this reduction both algorithms in \cite{guo2012TPDS,urgaonkar2011optimal} become similar.
\paragraph{Parameter Settings}
Unless otherwise mentioned, we set the length of each slot to $5$ minutes, according to FERC rule. The time horizon is 1 day, hence, $T=12\times 24 = 288$. We set the time horizon to 1 day to potentially see the impact of daily patterns in \preday.
The capacity of energy storage is set to $C= 18 \times \max_{t\in\mathcal{T}}d(t)$, sufficient to power the data center for 1.5 hours at max net demand.
In experiments with renewable, the renewable penetration for solar and wind is $50\%$. 
Finally, each data point in figures and tables corresponds to the average results of 30 runs (days) over a month, each with the corresponding demand, renewable generation, and market prices.

\subsection{\ocm vs. Alternative Algorithms}
\label{sec:micro}

In Table~\ref{tbl:comp}, the empirical cost ratio of 5 algorithms (\bon, \preday, \lyp, \fon, and \ocm) are reported across a broad set of settings: (i) four different locations; (ii) four different seasons; (iii) and with/without renewables. We report the notable observations:

\begin{table*}[!t]
	\caption{The empirical cost ratio of different algorithms in different markets and different seasons}
\vspace{-4mm}
	\label{tbl:comp}
	\centering
	\scriptsize
	\setlength\tabcolsep{4pt}
	\begin{tabular}{|c|c|c|c||c|c|c|c|c||c|c|c|c|c||c|c|c|c|c|}
		\hline
		&\multirow{2}{*}{\textbf{City/Market}} & \multirow{2}{*}{$\theta$} & \multirow{2}{*}{$\alpha$, Eq.~\eqref{eq:alpha}} & \multicolumn{5}{c|}{\textbf{Cost ratio for no renewables}} & \multicolumn{5}{c|}{\textbf{Cost ratio for 50\% wind penetration}} & \multicolumn{5}{c|}{\textbf{Cost ratio for 50\% solar penetration}}\\
		\hhline{~~~~---------------}
		&&& & \bon & \preday & \lyp & \fon &  \ocm  & \bon & \preday & \lyp & \fon &  \ocm  & \bon & \preday & \lyp & \fon &  \ocm  \\
		\hline \hline
		\multirow{5}{*}{\rotatebox[origin=c]{90}{ Winter}} & \textbf{Los Angles/CAISO} & 110.00 & 7.74 & 1.88 & 1.60 & 1.79 & 1.53 & 1.44 & 2.06 & 2.13 & 1.74 & 1.68 & 1.51 & 1.93 & 1.67 & 1.71 & 1.56 & 1.44 \\
		
		&\textbf{New York/NYISO} & 26.89 & 3.99 & 1.52 & 1.46 & 1.47 & 1.45 & 1.29 & 1.60 & 1.54 & 1.49 & 1.55 & 1.33 & 1.55 & 1.49 & 1.42 & 1.48 & 1.29 \\
		
		&\textbf{Dallas/ERCOT} & 15.83 & 3.13 & 1.23 & 1.15 & 1.13 & 1.13 & 1.15 & 1.26 & 1.16 & 1.15 & 1.16 & 1.13 & 1.26 & 1.17 & 1.16 & 1.15 & 1.13 \\
		
		&\textbf{Frankfurt/DE} & 2.22 & 1.36 & 1.11 & 1.03 & 1.10 & 1.09 & 1.07 & 1.13 & 1.04 & 1.12 & 1.10 & 1.08 & 1.11 & 1.03 & 1.10 & 1.09 & 1.07 \\
		\hline\hline
		&\textbf{Average} & \textbf{38.73} & \textbf{4.05} & \textbf{1.43} & \textbf{1.31} & \textbf{1.37} & \textbf{1.29} & \textbf{1.24} & \textbf{1.51} & \textbf{1.47} & \textbf{1.37} & \textbf{1.36} & \textbf{1.26} & \textbf{1.46} & \textbf{1.34} & \textbf{1.35} & \textbf{1.31} & \textbf{1.23} \\
		
		\hline\hline
		\multirow{5}{*}{\rotatebox[origin=c]{90}{ Spring}} & \textbf{Los Angles/CAISO} & 96.95 & 7.29 & 1.99 & 1.87 & 1.51 & 1.54 & 1.34 & 2.05 & 1.73 & 1.68 & 1.63 & 1.39 & 2.01 & 1.94 & 1.82 & 1.56 & 1.35 \\
		
		&\textbf{New York/NYISO} & 28.79 & 4.11 & 1.54 & 1.42 & 1.47 & 1.44 & 1.33 & 1.58 & 1.45 & 1.49 & 1.49 & 1.34 & 1.57 & 1.45 & 1.52 & 1.49 & 1.34 \\
		
		&\textbf{Dallas/ERCOT} & 10.09 & 2.56 & 1.27 & 1.17 & 1.22 & 1.15 & 1.15 & 1.33 & 1.21 & 1.27 & 1.19 & 1.17 & 1.29 & 1.16 & 1.20 & 1.17 & 1.14 \\
		
		&\textbf{Frankfurt/DE} & 2.04 & 1.31 & 1.07 & 1.03 & 1.07 & 1.07 & 1.05 & 1.08 & 1.03 & 1.07 & 1.08 & 1.05 & 1.08 & 1.03 & 1.07 & 1.08 & 1.05 \\
		\hline\hline
		&\textbf{Average} & \textbf{34.47} & \textbf{3.81} & \textbf{1.47} & \textbf{1.37} & \textbf{1.32} & \textbf{1.30} & \textbf{1.22} & \textbf{1.51} & \textbf{1.35} & \textbf{1.38} & \textbf{1.34} & \textbf{1.24} & \textbf{1.49} & \textbf{1.39} & \textbf{1.40} & \textbf{1.32} & \textbf{1.22} \\	
		
		\hline \hline
		\multirow{5}{*}{\rotatebox[origin=c]{90}{ Summer}} & \textbf{Los Angles/CAISO} & 25.10 & 3.86 & 1.56 & 1.36 & 1.51 & 1.41 & 1.32 & 1.56 & 1.37 & 1.51 & 1.42 & 1.33 & 1.60 & 1.39 & 1.54 & 1.44 & 1.34 \\
		
		&\textbf{New York/NYISO} & 19.96 & 3.48 & 1.33 & 1.26 & 1.31 & 1.34 & 1.24 & 1.35 & 1.29 & 1.28 & 1.38 & 1.26 & 1.35 & 1.27 & 1.30 & 1.38 & 1.24 \\
		
		&\textbf{Dallas/ERCOT} & 5.91 & 2.03 & 1.17 & 1.07 & 1.14 & 1.12 & 1.09 & 1.20 & 1.10 & 1.14 & 1.13 & 1.09 & 1.18 & 1.07 & 1.14 & 1.13 & 1.07 \\
		
		&\textbf{Frankfurt/DE} & 2.31 & 1.37 & 1.08 & 1.03 & 1.08 & 1.07 & 1.06 & 1.09 & 1.04 & 1.09 & 1.08 & 1.06 & 1.09 & 1.03 & 1.08 & 1.08 & 1.07 \\
		\hline\hline
		&\textbf{Average} & \textbf{13.32} & \textbf{2.68} & \textbf{1.29} & \textbf{1.18} & \textbf{1.26} & \textbf{1.23} & \textbf{1.18} & \textbf{1.30} & \textbf{1.20} & \textbf{1.25} & \textbf{1.25} & \textbf{1.19} & \textbf{1.30} & \textbf{1.19} & \textbf{1.27} & \textbf{1.25} & \textbf{1.18} \\\hline \hline
		
		\multirow{5}{*}{\rotatebox[origin=c]{90}{Fall}} & \textbf{Los Angles/CAISO} & 51.84 & 5.42 & 1.58 & 1.70 & 1.39 & 1.42 & 1.29 & 1.64 & 1.89 & 1.43 & 1.47 & 1.31 & 1.63 & 1.80 & 1.44 & 1.45 & 1.30 \\
		
		&\textbf{New York/NYISO} & 36.04 & 4.57 & 1.71 & 1.71 & 1.67 & 1.61 & 1.40 & 1.81 & 1.77 & 1.75 & 1.75 & 1.43 & 1.74 & 1.72 & 1.64 & 1.65 & 1.40 \\
		
		&\textbf{Dallas/ERCOT} & 7.26 & 2.22 & 1.22 & 1.12 & 1.20 & 1.13 & 1.08 & 1.23 & 1.15 & 1.16 & 1.14 & 1.09 & 1.23 & 1.12 & 1.16 & 1.13 & 1.08 \\
		
		&\textbf{Frankfurt/DE} & 2.12 & 1.33 & 1.12 & 1.05 & 1.11 & 1.09 & 1.09 & 1.15 & 1.08 & 1.12 & 1.11 & 1.10 & 1.12 & 1.06 & 1.12 & 1.09 & 1.09 \\
		\hline\hline
		&\textbf{Average} & \textbf{24.31} & \textbf{3.38} & \textbf{1.41} & \textbf{1.40} & \textbf{1.34} & \textbf{1.31} & \textbf{1.21} & \textbf{1.46} & \textbf{1.47} & \textbf{1.36} & \textbf{1.36} & \textbf{1.23} & \textbf{1.43} & \textbf{1.43} & \textbf{1.34} & \textbf{1.33} & \textbf{1.22} \\\hline\hline
		
		\multirow{5}{*}{\rotatebox[origin=c]{90}{Year}} & \textbf{Los Angles/CAISO} & 70.97 & 6.28 & 1.75 & 1.63 & 1.55 & 1.47 & 1.35 & 1.83 & 1.78 & 1.59 & 1.55 & 1.39 & 1.79 & 1.70 & 1.63 & 1.50 & 1.36 \\
		
		&\textbf{New York/NYISO} & 27.92 & 4.06 & 1.52 & 1.46 & 1.48 & 1.46 & 1.31 & 1.58 & 1.51 & 1.50 & 1.54 & 1.34 & 1.55 & 1.48 & 1.47 & 1.50 & 1.32 \\
		
		&\textbf{Dallas/ERCOT} & 9.77 & 2.53 & 1.22 & 1.13 & 1.17 & 1.13 & 1.12 & 1.25 & 1.15 & 1.18 & 1.16 & 1.12 & 1.24 & 1.13 & 1.17 & 1.14 & 1.11 \\
		
		&\textbf{Frankfurt/DE} & 2.17 & 1.34 & 1.09 & 1.04 & 1.09 & 1.08 & 1.07 & 1.11 & 1.05 & 1.10 & 1.09 & 1.07 & 1.10 & 1.04 & 1.09 & 1.08 & 1.07 \\
		\hline\hline
		&\textbf{Average} & \textbf{27.71} & \textbf{3.76} & \textbf{1.40} & \textbf{1.31} & \textbf{1.32} & \textbf{1.28} & \textbf{1.21} & \textbf{1.44} & \textbf{1.37} & \textbf{1.34} & \textbf{1.33} & \textbf{1.23} & \textbf{1.42} & \textbf{1.34} & \textbf{1.34} & \textbf{1.30} & \textbf{1.21} \\
		
		\hline	
	\end{tabular}
\end{table*}

\begin{table*}[!t]
	\caption{Comparison of different algorithms using different energy storage technologies}
	\vspace{-1mm}
	\label{tbl:comp_rate}
	\centering
	\scriptsize
	\setlength\tabcolsep{4pt}
	\begin{tabular}{|c|c||c|c|c|c|c||c|c|c|c|c||c|c|c|c|c|}
		\hline
		\multirow{2}{*}{\textbf{City/Market}} & \multirow{2}{*}{$\theta$} & \multicolumn{5}{c|}{\textbf{Lithium-Ion} ($\rho_c/B = 0.35$)} & \multicolumn{5}{c|}{\textbf{Lead-Acid} ($\rho_c/B = 0.2$)} & \multicolumn{5}{c|}{\textbf{Compressed Air Energy Storage} ($\rho_c/B = 0.05$)}\\
		\hhline{~~---------------}
		&&  \bon & \preday & \lyp & \fon &  \ocmrate  & \bon & \preday & \lyp & \fon &  \ocmrate  & \bon & \preday & \lyp & \fon &  \ocmrate  \\
		\hline \hline
		\textbf{Los Angles/CAISO} & 25.10 &  1.49  &  1.34  & 1.44 &   1.33  &  1.31 & 1.47 & 1.33 & 1.44 & 1.32 &1.32 &1.43  &  1.31  &  1.42  &  1.29  &  1.32 \\
		
		\textbf{New York/NYISO} & 19.96 & 1.27 &   1.21 &   1.24  &  1.24  & 1.21 &  1.25 & 1.19 &    1.22 &  1.22  &  1.20 &     1.23  &  1.17  &  1.20 & 1.19  &  1.18 \\
		
		\textbf{Dallas/ERCOT} & 7.26 & 1.14  &  1.10  &  1.11  &  1.13  &  1.07
		&  1.14   & 1.10  &  1.13  &  1.13  &  1.07 &   1.13  &  1.09  &  1.13  &  1.12  &  1.07 \\
		
		\textbf{Frankfurt/DE} & 2.12 &   1.07  &  1.03  &  1.05  &  1.08  &  1.05
		&   1.06  &  1.04   & 1.05  &  1.08  &  1.05 &  1.06  &  1.03  &  1.05  &  1.08 &   1.05 \\
		\hline\hline
		\textbf{Average} & \textbf{13.32} & \textbf{1.24} & \textbf{1.17} & \textbf{1.21} & \textbf{1.19} & \textbf{1.16} & \textbf{1.23} & \textbf{1.16} & \textbf{1.21} & \textbf{1.18} & \textbf{1.16} & \textbf{1.21} & \textbf{1.15}& \textbf{1.20} & \textbf{1.17} & \textbf{1.15}\\
		\hline
	\end{tabular}
\end{table*}



\begin{myObs}
	\label{obs:tcr}
\ocm achieves a significantly smaller average cost ratio than the worst-case competitive ratio guarantee provided by our theoretical analysis. \emph{The average theoretical competitive ratio ($\alpha$ in Equation~\eqref{eq:alpha}) for year-round experiments over four locations is 3.76, while the empirical cost ratios for \ocm are much smaller, i.e., 1.21 for no renewable and with solar, and 1.23 for wind.}
\end{myObs}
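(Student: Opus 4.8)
The plan is to establish Observation~\ref{obs:tcr} empirically, via the trace-based simulation of \S\ref{sec:exp}, by separately computing the two quantities the statement contrasts: the worst-case bound $\alpha$ of Theorem~\ref{thm:cr} and the realized cost ratio of \ocm. First I would, for each of the four markets (CAISO, NYISO, ERCOT, DE Market), read off $p_{\max}$ and $p_{\min}$ from the price trace, form $\theta = p_{\max}/p_{\min}$, and evaluate $\alpha$ through Equation~\eqref{eq:alpha} with a numerical Lambert-$W$ routine; averaging the four resulting values over the year-long traces yields the $3.76$ quoted in the statement. This step is a deterministic computation once the traces are fixed.

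Second, for the empirical side, I would run \ocm on each day's instance $\omega = [\langle p(t),d(t)\rangle]_{t\in\mathcal{T}}$ with $T = 288$ five-minute slots, storage capacity $C = 18\max_{t} d(t)$, and the three demand regimes (no renewable, 50\% wind, 50\% solar), recording \costocm for each day. In parallel I would solve the offline linear program~\eqref{eq:ep2eq} to optimality, obtaining \costopt, which is tractable because \prob is an LP; by Theorem~\ref{thm:feasibility} the ratio of \costocm to \costopt is well defined and at least $1$. Averaging this ratio over the $30$ days of a month and over the four locations reproduces the ``Year / Average'' entries of Table~\ref{tbl:comp}, namely $1.21$ for no renewable and for solar, and $1.23$ for wind.

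The conceptual content of the observation --- and the part I expect to be the crux of the exposition rather than of the arithmetic --- is \emph{why} the two numbers differ by roughly a factor of three. The bound $\alpha$ is tight only on the adversarial price sequences behind the lower bound of \S\ref{sec:opt_cr} (the $k$-min search hard instance), where prices drift downward in exactly the pattern that forces the reservation function $G_{B_i}(\cdot)$ to commit purchases just before a cheaper price appears, so that \ocm pays close to $p_{\max}$ on the residual term of Lemma~\ref{lem:online_cost} while \ofa pays near $p_{\min}$. Real electricity prices in these four markets, whether or not netted against renewable generation, exhibit mean-reverting daily structure with bounded within-day fluctuation; they are far from this worst case, so the integral term $\int_0^{\hat b_{i,j}} G^{-1}_{B_{i,j}}(b)\,db$ stays close to $F_i(\beta_i)$ and the $p_{\max}$-priced residual demand is small. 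The only remaining obstacle is methodological: feeding identical inputs to \ocm and \ofa, averaging over enough runs to make the reported means meaningful, and using for $\alpha$ the same $\theta$ that the traces induce --- safeguards already built into the protocol of \S\ref{sec:exp}.
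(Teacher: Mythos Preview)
Your proposal is correct and matches the paper's approach: the observation is established purely by reading off the ``Year / Average'' row of Table~\ref{tbl:comp}, which aggregates the per-location $\theta$, the induced $\alpha$ from Equation~\eqref{eq:alpha}, and the empirical cost ratios produced by running \ocm and {\tt OPT} on the traces under the methodology of \S\ref{sec:exp}. Your third paragraph, explaining \emph{why} real traces fall far from the adversarial $k$-min search instance, is a welcome addition that the paper itself does not spell out.
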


\begin{myObs}
	\label{obs:comp}
\ocm outperforms the alternative algorithms, when averaged across the entire year and all four locations, and with/without renewables and it is close to offline optimal \emph{\texttt{OPT}}. \emph{For example, in the case with wind as the renewable source, the last row of Table~\ref{tbl:comp} shows that \ocm outperforms \bon by $15\%$, \preday by $10\%$, \lyp by $8\%$, and \fon by $7.5\%$. Further, on average over the whole year \ocm achieves a cost ratio of 1.21 to 1.23, i.e., a cost that is within 21--23\% of the cost of {\tt OPT}.
However. there are a few settings that other algorithms outperform \ocm, e.g., \preday in Frankfurt/DE. The reason is
investigated in the next observation.}
\end{myObs}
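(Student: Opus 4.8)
The plan is to establish Observation~\ref{obs:comp} empirically through the trace-based simulations of \cref{sec:exp}, since it is a statement about measured cost ratios rather than a worst-case bound. First I would fix the experimental harness described above: Akamai demand traces for each of the four cities, the matching real-time price series from CAISO, NYISO, ERCOT, and the DE market, and---in the renewable experiments---co-located solar or wind generation at $50\%$ penetration, with slot length $5$ minutes, horizon $T=288$, and storage capacity $C = 18\max_{t}d(t)$. For each of the $4$ locations, each of the three renewable settings (none, $50\%$ wind, $50\%$ solar), and each season, I would run $30$ independent days; on each day I compute (i) the cost of \texttt{OPT} by solving the linear program \prob offline, and (ii) the cost of each of \bon, \preday, \lyp, \fon, and \ocm under the identical input. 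Dividing (ii) by (i) gives the per-day empirical cost ratio, and averaging over the $30$ days yields the entries of Table~\ref{tbl:comp}. One care point is that the Lyapunov baseline \lyp of~\cite{guo2012TPDS} also handles load balancing across data centers, so for a fair comparison I would restrict it to the single-data-center instance, in which case it coincides with~\cite{urgaonkar2011optimal}.

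The core of the argument is then to read off the year-round, all-location averages in the last block of Table~\ref{tbl:comp} and take relative differences. In the $50\%$-wind column these averages are $1.44$ for \bon, $1.37$ for \preday, $1.34$ for \lyp, $1.33$ for \fon, and $1.23$ for \ocm, so \ocm reduces cost by $(1.44-1.23)/1.44 \approx 15\%$, $(1.37-1.23)/1.37 \approx 10\%$, $(1.34-1.23)/1.34 \approx 8\%$, and $(1.33-1.23)/1.33 \approx 7.5\%$ relative to \bon, \preday, \lyp, and \fon respectively; the no-renewable and $50\%$-solar columns give an \ocm cost ratio of $1.21$, i.e.\ within $21\%$ of \texttt{OPT}. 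Since every algorithm's cost ratio is at least $1$ by definition of \texttt{OPT}, the value $1.21$--$1.23$ simultaneously certifies the ``close to \texttt{OPT}'' part of the claim. I would additionally check that the same ordering holds season by season (with ERCOT and DE entries compressed toward $1$ because their price fluctuation ratio $\theta$ is small), so that the year-round average is not driven by a single season.

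The only place where the statement must be qualified rather than computed is the existence of configurations in which a competitor beats \ocm---most visibly \preday at Frankfurt/DE, where $\theta \approx 2.17$ and the price exhibits a strong, stable daily pattern. The explanation, deferred in detail to Observation~\ref{obs:renew}, is that in such near-deterministic regimes there is essentially no adversarial price swing for \ocm's worst-case reservation rule to hedge against, whereas replaying yesterday's optimal schedule is nearly optimal, so the constant worst-case guarantee of \ocm is conservative. Hence the claim is an ``on average'' statement, and its proof consists of exhibiting the tabulated averages together with this caveat. I expect the main obstacle to be not mathematical but methodological: ensuring the traces, storage sizing, and especially the reduction of \lyp are chosen so that the head-to-head comparison is apples-to-apples, and confirming that the reported $7.5\%$--$15\%$ gaps are stable across the $30$-day averaging windows rather than sensitive to a few outlier days.
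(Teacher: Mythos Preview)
Your proposal is correct and matches the paper's approach exactly: this observation is not a theorem but an empirical claim, and the paper ``proves'' it precisely by running the trace-based experiments you describe and reading off the year-round, all-location averages in the last block of Table~\ref{tbl:comp}, with the same arithmetic for the $7.5\%$--$15\%$ gaps and the same Frankfurt/DE caveat. One minor cross-reference slip: the explanation of the \preday exception is in the observation immediately following (the unlabeled one about low-uncertainty markets with recurring daily patterns), not in Observation~\ref{obs:renew}.
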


\begin{myObs}
\preday is the best algorithm in settings with low price uncertainty and recurring daily price patterns.  \emph{To elaborate this observation, we need to further investigate the dynamics of the real-time prices in DE market. Figure~\ref{fig:market_prices} shows the real-time prices in three days in August 2017 for NYISO with high fluctuation ratio and DE Market with low price fluctuations. Once can see that in DE Market the prices do not fluctuate a lot and there is almost a regular daily pattern. This regular daily pattern is the key to \preday's good performance since it uses the previous day values to derive the procurement plan for today. However, the irregular pattern in NYISO and other markets (as shown in Figure~\ref{fig:market_prices}) motivates our general \ocm approach, since relying on the past information or stochastic modeling is less effective in these real-world markets. Predictably, \preday does not perform as well in CAISO and NYISO  markets in Table~\ref{tbl:comp_alg} (the cost ratio of 1.63 and 1.46 for \preday as compared to 1.35 and 1.31 for \ocm).}
\end{myObs}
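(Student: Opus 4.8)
The claim is empirical, so the plan is to prove a quantitative version of it on a stylized instance family that formalizes ``low price uncertainty and recurring daily price patterns.'' Index the slots of day $k$ by $(k,t)$ and say an instance $\omega$ is $\varepsilon$-periodic if $|p_k(t)-p_{k-1}(t)|\le \varepsilon\, p_{\min}$ and $|d_k(t)-d_{k-1}(t)|\le \varepsilon$ for all $k,t$; ``low price uncertainty'' is the extra assumption that $\theta=p_{\max}/p_{\min}$ is only moderately above $1$. The target statement is: on a generic sub-family of such instances (made precise below), the cost ratio of \preday relative to \ofa is $1+O(\varepsilon)$, while those of \ocm, \fon and \lyp are at least $\varphi(\theta)$ for some $\varphi$ with $\varphi(\theta)>1$; hence for $\varepsilon$ small enough (as a function of $\theta$), \preday is the best of the candidate algorithms in this regime.

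First I would pin down \ofa and \preday on a perfectly periodic ($\varepsilon=0$) instance. Because every day presents the same prices and demands, a symmetrization argument on the offline program~\eqref{eq:ep2eq} (its feasible set is a polytope, the objective is linear, and shifting a schedule by one day preserves feasibility in steady state) shows the offline optimum is itself one-day-periodic, with a well-defined steady-state start-of-day storage level $b^{\dagger}\in[0,B]$, and that on the generic sub-family --- those instances where carrying storage across the day boundary yields no arbitrage gain --- this periodic optimum coincides with the repeated single-day optimum. Since \preday replays the previous day's single-day optimum, a fixed-point argument ($b\mapsto$ next-day start storage is a continuous self-map of $[0,B]$) shows it settles onto exactly this schedule, so its cost ratio is $1$. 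Extending to the $\varepsilon$-periodic case, the schedule \preday uses on day $k$ is the day-$(k-1)$ optimum, which is only $\varepsilon$-infeasible and $\varepsilon$-suboptimal for day $k$; a sensitivity bound for~\eqref{eq:ep2eq}, together with a bound on the cost of the feasibility-restoring projection (which need only absorb the per-slot demand drift and the slack in the rate constraints~\eqref{eq:dt_cover}--\eqref{eq:dt_pack}), yields cost ratio $1+O(\varepsilon)$.

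In parallel I would lower-bound \ocm (and \fon, \lyp) on the same family. The structural reason such algorithms cannot match \preday is that \ocm procures through the reservation function $G_{B_i}(\cdot)$, which reacts only to the lowest price \emph{seen so far within the current reservation period} and never anticipates the recurring daily trough. So whenever that trough lies in the interior of a reservation period --- the realistic case, e.g.\ an overnight dip --- \ocm is forced to buy a constant fraction of the storage \emph{before} reaching the trough, at prices strictly above $p_{\min}$, whereas \ofa buys everything at the trough. Quantifying this --- essentially the algebra behind \ocm's worst-case $\Theta(\sqrt{\theta})$ ratio, but instantiated on a benign periodic instance --- gives that \costocm is at least $\varphi(\theta)$ times \costopt with $\varphi(\theta)>1$; the same over-conservatism argument handles the fixed threshold $\sqrt{p_{\max}p_{\min}}$ of \fon, and a drift-plus-penalty calculation handles \lyp. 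Comparing this with the $1+O(\varepsilon)$ bound for \preday gives the observation.

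The main obstacle is the \ocm \emph{lower} bound: every tool in \S\ref{sec:analysis}, notably Lemma~\ref{lem:online_cost}, produces \emph{upper} bounds on \costocm, and turning it into a matching lower bound that is bounded away from $1$ \emph{precisely on benign near-periodic, low-$\theta$ instances} --- not on the adversarial instances used for the optimality proof in \S\ref{sec:opt_cr} --- and keeping that bound above \preday's $1+O(\varepsilon)$ as $\theta\to 1$, is the crux. Two secondary difficulties are (i) carving out the genuinely ``generic'' sub-family, since on degenerate periodic instances --- e.g.\ where each day's price starts at its daily minimum, so no anticipation is needed --- \ocm is itself already optimal and the separation collapses; and (ii) the non-smoothness of the $\min\{\cdot,\cdot\}$ terms in the rate constraints and of \preday's projection operator, which forces the $\varepsilon$-perturbation argument to proceed by a case analysis on the active constraint set of the day-$(k-1)$ optimum rather than by a naive first-order expansion.
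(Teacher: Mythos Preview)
You have misread the nature of the statement. This is an \emph{Observation} in the Empirical Evaluation section (\S\ref{sec:exp}), not a theorem, and the paper does not prove it in any formal sense. The paper's entire support for the claim is (i) the numbers in Table~\ref{tbl:comp}, where \preday's cost ratio in Frankfurt/DE is $1.03$--$1.05$ and beats \ocm's $1.05$--$1.09$, while in CAISO and NYISO the ranking reverses; and (ii) the price traces in Figure~\ref{fig:market_prices}, used informally to attribute the DE result to a near-periodic daily pattern and the CAISO/NYISO result to irregular fluctuations. That is the full argument --- a reading of experimental outputs plus a one-paragraph qualitative explanation --- and nothing in \S\ref{sec:analysis} or the appendices is invoked.

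Your proposal is therefore not a different route to the same proof; it is an attempt to turn an empirical finding into a theorem, which is a different project altogether. Even on its own terms the plan has the gap you yourself flag: you need a lower bound on \costocm that stays bounded away from $1$ on \emph{benign} near-periodic, low-$\theta$ instances, but the paper provides no such tool (Lemmas~\ref{lem:online_cost}--\ref{lem:neq} all go the other way), and as $\theta\to 1$ the competitive ratio $\alpha\to 1$ as well, so any separation $\varphi(\theta)>1$ you extract must vanish in exactly the regime the observation is about. Combined with the degeneracy you note --- periodic instances whose daily minimum arrives first make \ocm optimal too --- there is no clean theorem here without substantial additional structural assumptions the paper never states. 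The honest response is to treat the observation as what it is: a summary of trace-driven simulation results, supported by the table and the figure, with no formal proof expected or supplied.
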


\begin{myObs}
	\label{obs:renew}
With increased uncertainty due to renewable penetration, the performance of \ocm is robust, however, the performance of \preday degrades substantially. \emph{The performance of \ocm slightly degrades with injection of 50\% renewable. The performance of \preday, however, degrades substantially (e.g., from 1.31 to 1.37 for wind). To further elaborate this, in Figure~\ref{fig:penetration}, we compare the performance of \ocm and \preday in different seasons and locations as the penetration level varies. The result signifies the robust performance of \ocm and degradation of \preday with increased renewable penetration.}
\end{myObs}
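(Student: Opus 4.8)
The plan is to establish this observation as an average-case, empirical statement backed by a structural explanation of the two algorithms' contrasting behavior, rather than as a deductive theorem. The primary vehicle is a controlled penetration sweep: I would vary the renewable penetration from 0\% up to (and beyond) 50\% while holding fixed the slot length, the capacity $B$, the horizon $T=288$, and the underlying Akamai demand and market-price traces, and at each penetration level report the empirical cost ratio $\textsf{cost}(\cdot)/\textsf{cost}(\texttt{OPT})$ for both \ocm and \preday, averaged over the $30$ days of a representative month. Running this sweep separately for each of the four markets (CAISO, NYISO, ERCOT, DE) and each season isolates the effect of penetration from location- and season-specific idiosyncrasies; this is precisely the comparison Figure~\ref{fig:penetration} is meant to display, with the fixed-penetration slice already tabulated in Table~\ref{tbl:comp}.

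The second ingredient, which gives the observation its force, is an explanation of why the curves diverge. For \ocm, robustness is a direct consequence of Theorem~\ref{thm:cr}: the competitive ratio $\alpha$ is a worst-case bound that holds for every input instance $\omega = [\langle p(t),d(t)\rangle]_{t\in\mathcal{T}}$, with no assumption on the regularity or predictability of the demand sequence. Renewable injection only reshapes the net demand $d(t)=\text{load}(t)-\text{renewable}(t)$; it does not enlarge the price fluctuation ratio $\theta$ on which $\alpha$ depends. Because the virtual-storage construction absorbs arbitrary demand dynamics under the same $\alpha$ guarantee, I expect the \ocm curve to stay nearly flat and comfortably below $\alpha$. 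For \preday, by contrast, performance hinges on the hypothesis that yesterday's optimal plan stays near-optimal today, which holds only when net demand is strongly correlated across days (as the raw Akamai load is). I would make precise the claim that renewables break this correlation: solar and wind output are intermittent and weakly correlated day-to-day, so the net demand inherits high day-over-day variance, and the projection of the stale plan onto today's feasible region becomes a progressively worse proxy as penetration grows.

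The main obstacle is that an empirical observation cannot be proved deductively, so the argument must rule out the possibility that \preday's degradation is an artifact of particular traces or of the projection scheme. The hard part will be showing that the trend is \emph{systematic}: that \preday's cost ratio rises consistently with penetration across every season/location panel while \ocm's does not, rather than in a handful of cherry-picked cases. I would address this by verifying that the qualitative ordering is preserved in all panels of Figure~\ref{fig:penetration}, and by tying \preday's degradation quantitatively to a measurable drop in the day-to-day autocorrelation of the net-demand trace as penetration increases. A secondary subtlety is to confirm that the degradation reflects suboptimal purchase timing rather than mere infeasibility of the stale plan: since \preday already projects onto the feasible set (respecting capacity, demand, and rate constraints), any residual cost gap is attributable to mistimed procurement under the shifted demand, which is exactly the effect the observation asserts.
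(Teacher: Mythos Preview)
Your proposal is correct and matches the paper's approach: this is an empirical observation, and the paper supports it exactly as you describe, by reading the $1.31\to1.37$ shift for \preday (and the near-constancy of \ocm) off the year-average row of Table~\ref{tbl:comp} and then pointing to the penetration sweep in Figure~\ref{fig:penetration} across seasons and locations. The paper offers no further argument beyond that.

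Where you differ is in the explanatory layer. The paper states the observation and lets the figure do the work; you add a structural account---\ocm's robustness via the demand-independence of the competitive ratio in Theorem~\ref{thm:cr}, and \preday's fragility via the collapse of day-to-day net-demand autocorrelation under renewable injection. None of that appears in the paper, but it is a sound and useful strengthening: the competitive-ratio argument is exactly right (renewable penetration perturbs only $d(t)$, not $\theta$, so $\alpha$ is untouched), and the autocorrelation mechanism is the natural explanation for why a replay-yesterday heuristic fails. Your proposed check that the ordering persists in every panel, and that \preday's degradation tracks a measurable autocorrelation drop, would make the observation more convincing than the paper's bare presentation; just note that Figure~\ref{fig:penetration} in the paper shows only three of the four locations (New York, Dallas, Frankfurt), so a full ``every panel'' claim would require the Los Angeles panel as well.
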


\vspace{-1\baselineskip}
\begin{figure}[!h]
	\begin{center}
		\subfigure[New York]{\label{fig:penetration_4season_NY}
			\includegraphics[angle=0,scale=0.15]{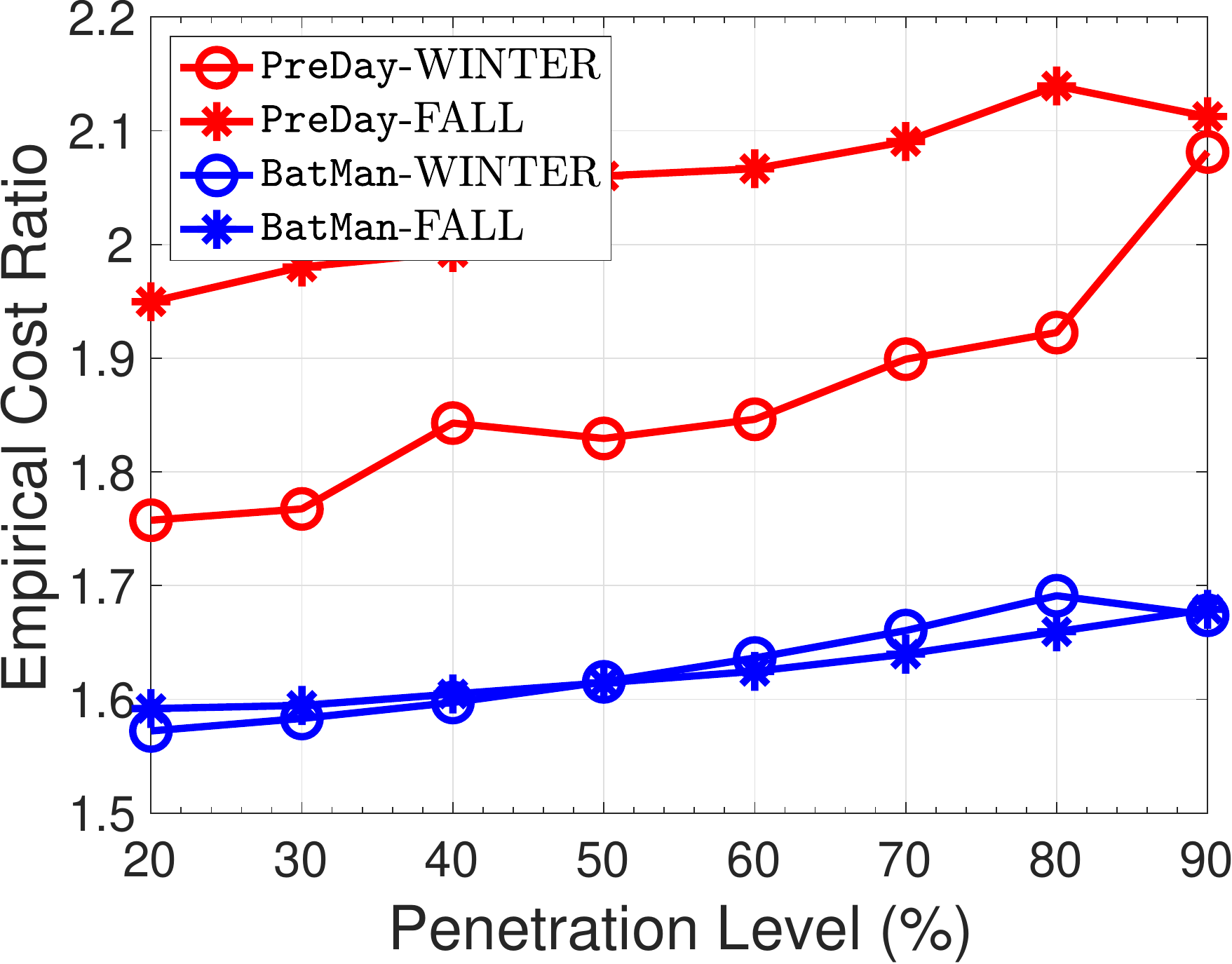}}\hspace{-0.5mm}
		\subfigure[Dallas]{\label{fig:penetration_4season_TX}
			\includegraphics[angle=0,scale=0.15]{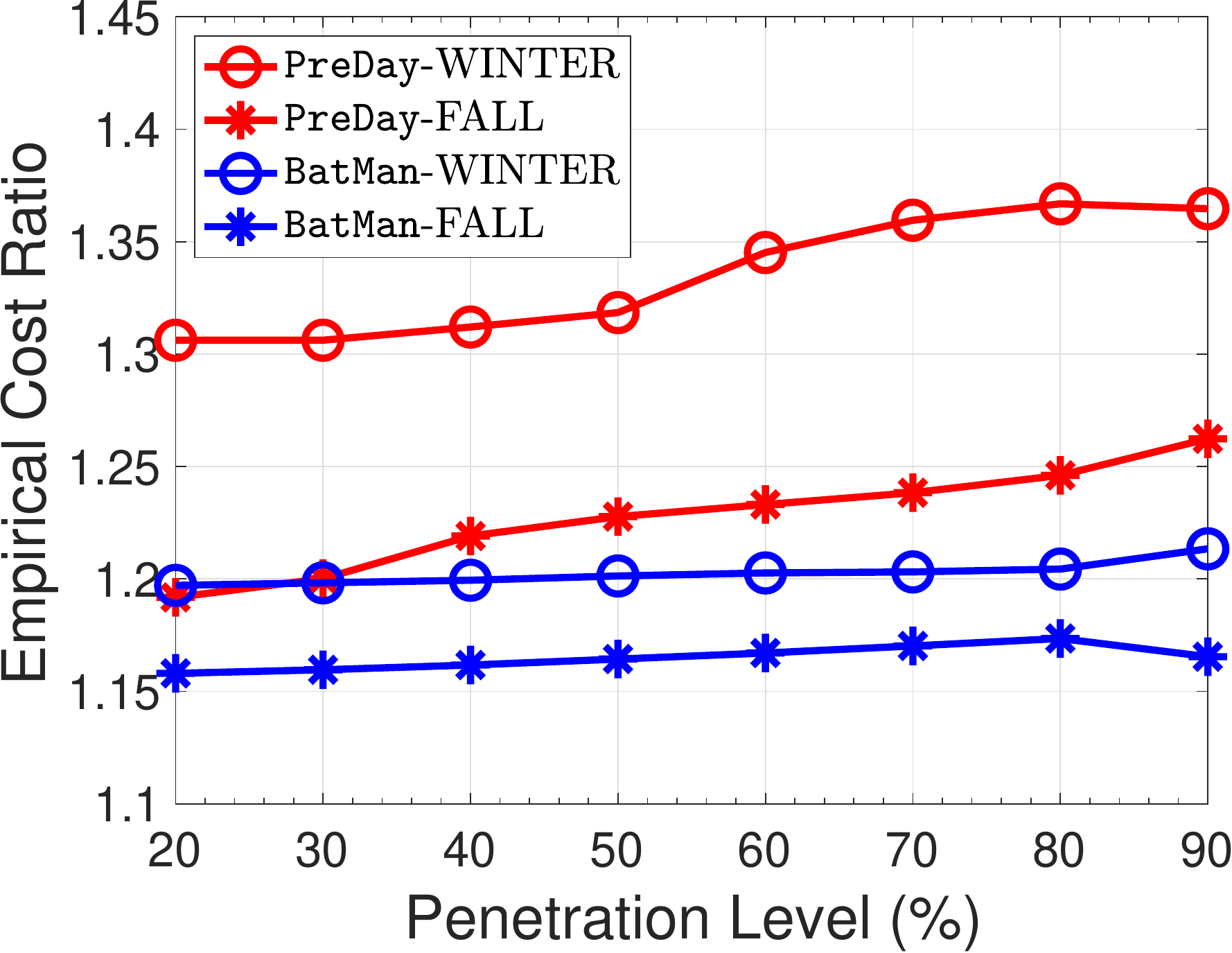}}\hspace{-0.5mm}
		\subfigure[Frankfurt]{\label{fig:penetration_4season_DE}
			\includegraphics[angle=0,scale=0.15]{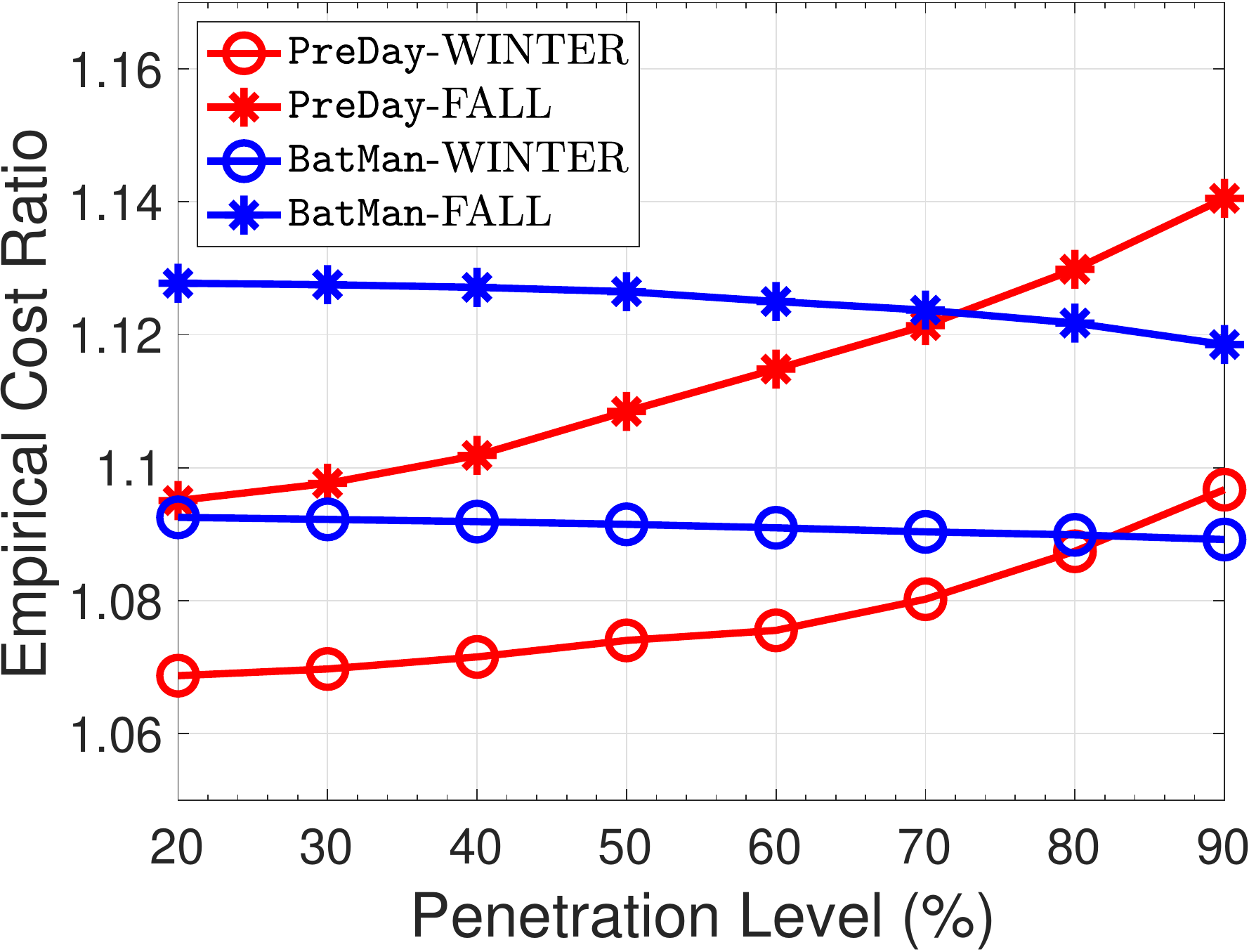}}\hspace{-0.5mm}
	\end{center}	
	\vspace{-5mm}
	\caption{The impact of renewable penetration}
	\label{fig:penetration}
\end{figure}
\vspace{-1\baselineskip}

\begin{myObs}
The seasonal and locational  patterns result in different degrees of uncertainty, thereby impact the performance of the algorithms substantially and increased uncertainty increases the cost ratio of algorithms. \emph{For example, in \ocm, the year-round cost ratio in Los Angles as the most uncertain location (highest $\theta = 70.97$, on average) is 1.47, while the same value for Frankfurt (lowest $\theta = 2.17$) as the least uncertain one is 1.07. As for seasonal variations, \ocm achieves the cost ratio of 1.18 in the summer as the least uncertain season (with average $\theta =13.32$), while this value is1.24 in the winter as the most uncertain scenario (with $\theta = 38.73$).}
\end{myObs}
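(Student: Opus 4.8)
The plan is to establish this observation empirically, anchored by a monotonicity fact about the competitive ratio that explains \emph{why} $\theta$ should be the governing parameter. First I would fix $\theta = p_{\max}/p_{\min}$ as the single scalar summary of price uncertainty for a given trace, and recall from Equation~\eqref{eq:alpha} that the worst-case competitive ratio $\alpha = \left(W\!\left(-\frac{\theta-1}{\theta\exp(1)}\right)+1\right)^{-1}$ is strictly increasing in $\theta$ on $[1,\infty)$: at $\theta=1$ the Lambert-$W$ argument is $0$, so $\alpha=1$, while as $\theta\to\infty$ the argument approaches $-1/\exp(1)$, where $W(-1/\exp(1))=-1$, forcing $\alpha\to\infty$ (indeed $\alpha=\Theta(\sqrt\theta)$, cf.\ Figure~\ref{fig:cr_theta}). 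This furnishes an a priori reason to expect the \emph{empirical} cost ratio of \ocm to grow with $\theta$, since the worst-case envelope it lives under does.

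The empirical step is a controlled trace-based measurement. I would choose four markets (CAISO, NYISO, ERCOT, DE) and four seasons, a grid of instances whose realized $\theta$ naturally spans roughly two orders of magnitude. For each (location, season) cell I would solve the offline \prob to obtain \costopt, run \ocm on the identical price/demand trace, and report the empirical cost ratio $\costocm/\costopt$, averaging over the $30$ days of the month to damp out trace-specific idiosyncrasies. Reading off the year-round and per-season aggregates summarized in Table~\ref{tbl:comp} then yields the cited anchor points: the most uncertain location (Los Angeles, $\theta=70.97$) attains cost ratio $1.47$ versus the least uncertain (Frankfurt, $\theta=2.17$) at $1.07$, and across seasons the ratio climbs from $1.18$ in the low-$\theta$ summer ($\theta=13.32$) to $1.24$ in the high-$\theta$ winter ($\theta=38.73$). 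To argue the \emph{trend} rather than two endpoints, I would verify monotonicity cell-by-cell along both axes of the table, and check that it persists in the with-renewable columns, establishing robustness of the $\theta$-to-cost-ratio relationship to the additional demand uncertainty.

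The main obstacle is that the empirical cost ratio is a functional of the \emph{entire} price (and demand) sequence, whereas $\theta$ collapses that sequence to a single extremal statistic; two traces sharing the same $\theta$ can yield different ratios depending on how frequently prices visit their extremes and how price correlates with demand. Consequently the clean ``higher $\theta\Rightarrow$ higher ratio'' claim is a statement about averages and trends, not a pointwise inequality, and the chief risk is confounding, since seasons and locations differ in demand shape and renewable availability as well as in $\theta$. I would mitigate this by (i) holding the algorithm and storage sizing fixed across all cells, (ii) isolating the $\theta$ effect twice---once by fixing the season and sweeping locations, once by fixing the location and sweeping seasons---so that a consistent monotone response in both slices is hard to attribute to a single lurking variable, and (iii) leaning on the theoretical monotonicity of $\alpha$ in $\theta$ to argue that $\theta$ is the governing parameter rather than merely a correlate of the observed degradation.
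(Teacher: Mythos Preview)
Your proposal is correct and aligns with the paper's approach: this is an empirical observation, and the paper's own ``proof'' is simply to read the cited numbers off Table~\ref{tbl:comp} without further argument. You go considerably further than the paper does---adding the monotonicity of $\alpha$ in $\theta$ as theoretical motivation, the double-slicing (fix season and vary location, then vice versa) to control for confounds, and the explicit caveat that $\theta$ is a coarse summary of the full price trajectory---none of which appears in the paper's treatment, which rests entirely on the tabulated averages.
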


\paragraph{Evaluation of \ocmrate}
\label{sec:exp_batmanerate}
To evaluate the performance of the second algorithm \ocmrate, we consider identical charging and discharge rates, i.e., $\rho_c = \rho_d$, and normalize it against the storage capacity. Hence, we define $ \rho = \rho_c/B$ as a measure of the rate at which an energy storage is charged/discharged relative to its capacity. 
A broad spectrum of storage technologies are integrated in data centers, each with a different $\rho$. To obtain practical values, we use the energy density as the normalized capacity, and power density as the normalized discharge rate from~\cite{Batteries}. We choose four common categories based on their $\rho$ values~\cite{chalise2015data,wang2012energy}: (1) Compressed Air Energy Storage with $\rho \approx 0.05$; (2) Lead-Acid with $\rho \approx 0.2$; (3) Lithium-Ion with $\rho \approx 0.35$, and (4) Flywheels $\rho \approx 1$. For Flywheels, there is no rate constraints and it reduces to \ocm. Hence,  we investigate the performance of \ocmrate for the first three technologies, and report the results in Table~\ref{tbl:comp_rate}.

\begin{myObs}
	\label{obs:batmanrate}
The performance of \ocmrate improves as $\rho$ increases, i.e., the rate constraints becomes more relax, while the performance of \preday exhibits no regular pattern. \emph{This observation is inferred from the results in Table~\ref{tbl:comp_rate} that reports the results for three representative energy storage technologies.}
\end{myObs}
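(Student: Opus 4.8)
\noindent\textbf{Proof proposal for Observation~\ref{obs:batmanrate}.}
The plan is to back up the empirical trend in Table~\ref{tbl:comp_rate} with a \emph{per-instance} monotonicity statement for \ocmrate, and, for \preday, to argue that no such structure exists. The starting point is that \ocmrate departs from \ocm only through two rate-driven corrections: (i) in \texttt{InitVS} the new virtual-storage capacity is $B_v = d(t) - [d(t)-\rho_d-\hat{x}(t)]^+$ instead of $d(t)$ (see Eq.~\eqref{eq:2} and Line~\ref{algline:call_calculateBX}), and (ii) whenever $\hat{x}(t) > \rho_c + d(t)$ the reservation price is raised by \texttt{CalRP} (Line~\ref{algline:charge}) rather than set to $\min\{\xi_i,p(t)\}$. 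Writing $\rho = \rho_c/B = \rho_d/B$, the first step is to show that the shortfall $[d(t)-\rho_d-\hat{x}(t)]^+$ is nonincreasing in $\rho_d$, so each $B_v$ is nondecreasing in $\rho$ and equals the \ocm value $d(t)$ once $\rho \ge 1$; and, symmetrically, that the event $\hat{x}(t) > \rho_c + d(t)$ becomes rarer as $\rho_c$ grows, so the output of \texttt{CalRP} converges to $\min\{\xi_i,p(t)\}$. Hence the set of procurement trajectories admissible to \ocmrate expands monotonically with $\rho$ and, in the limit, coincides with that of \ocm.

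The delicate point --- which I expect to be the main obstacle --- is that a larger per-slot feasible region does not automatically lower the \emph{realized} cost, because the adaptive reservation prices $\xi_i$ couple decisions across slots: relaxing a constraint in one slot changes which prices get locked in and hence perturbs the whole run, and it also moves the offline optimum in the denominator of the cost ratio. I would resolve this by a coupling/exchange argument: fix an instance and two rate levels $\rho' \le \rho''$, and show that the \ocmrate run at $\rho''$ can be made to shadow the run at $\rho'$ slot by slot without ever paying a strictly higher marginal price. Here I would lean on the result already proved for Theorem~\ref{thm:batmanrate} that in the worst case the output-rate constraint is inactive, so only the input-rate correction can bind; since that correction merely \emph{defers} cheap purchases, it can be reconciled against the larger admissible purchase available at $\rho''$. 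This would give that the cost of \ocmrate is nonincreasing in $\rho$ on every instance, which in turn yields the monotone (or flat) behavior of the averaged empirical cost ratio seen across the CAISO, NYISO, ERCOT, and DE rows of Table~\ref{tbl:comp_rate}.

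For \preday the argument runs the other way. Since \preday simply projects the previous day's offline-optimal profile onto the current day's feasible set (capacity, demand, and rate constraints), its cost is governed mostly by how well yesterday's price/demand pattern forecasts today's --- a quantity unrelated to $\rho$ --- while the feasibility projection perturbs it non-monotonically: a tighter $\rho$ can hurt (forcing extra grid purchases at bad times) or help (accidentally trimming an over-aggressive profile). So no consistent trend in $\rho$ should be expected, and to pin this down I would exhibit a small family of day-pairs for which the cost ratio both increases and decreases as $\rho \in \{0.05, 0.2, 0.35\}$, matching the ``no regular pattern'' claim. In summary: (a) prove per-instance monotonicity of the cost of \ocmrate in $\rho$ via the coupling argument above; (b) note that it carries over to the averaged cost ratios reported in Table~\ref{tbl:comp_rate}; and (c) for \preday, give an adversarial day-pair construction that is non-monotone in $\rho$.
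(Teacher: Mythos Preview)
The paper does not prove Observation~\ref{obs:batmanrate} at all: it is presented as a purely empirical finding, supported only by the averaged cost ratios in Table~\ref{tbl:comp_rate}. There is no coupling argument, no per-instance monotonicity lemma, and no adversarial construction for \preday; the sentence in italics in the statement is the entire justification. Your proposal is therefore attempting to establish something the paper neither claims analytically nor needs.

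More importantly, the analytical route you sketch has a genuine gap. You want to show that the \emph{cost} of \ocmrate is nonincreasing in $\rho$ on every instance and then deduce monotonicity of the \emph{cost ratio} reported in Table~\ref{tbl:comp_rate}. But the denominator of that ratio is \costopt, which is itself nonincreasing in $\rho$ (relaxing the rate constraints can only help \ofa), so monotonicity of the numerator does not transfer to the ratio. In fact the table itself contradicts per-instance monotonicity of the ratio: for New York the \ocmrate column reads $1.18$, $1.20$, $1.21$ as $\rho$ goes from $0.05$ to $0.2$ to $0.35$, i.e., the ratio \emph{increases} with $\rho$ there, so the per-instance statement you are trying to prove is simply false. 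A second, softer issue is your appeal to Theorem~\ref{thm:batmanrate}: the lemma that the output-rate constraint is inactive is established only for \emph{worst-case instances for the competitive ratio}, not for an arbitrary fixed instance, so it cannot be invoked in a per-instance coupling between two values of $\rho$.
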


%



\section{Conclusion}
We developed two competitive algorithms for online linear programming with inventory management constraints. We proved that both algorithms achieve the best possible competitive ratio. We evaluated the proposed algorithms using extensive data-traces from the application of energy procurement in data centers. 
As future work, we plan to extend the results and tackle maximization version of the problem, and general convex cost function.

\bibliographystyle{ieeetr}

\appendix

\section{Proof of Theorem~3.1}
\label{app:thm3.1}
To proof, we show that \ocm respects all the constraints in \probeqs.
First, it respects the demand covering constraint, i.e., $\xt \geq \dt - b(t-1)$, by the projection in Equation~\eqref{eq:xt}.
Second, we show that \ocm always respects the capacity constraints. At time slot $t$ which lies in a reservation period (see Definition~\ref{def:res_period}), the total amount of purchased asset from the beginning of current reservation period is equal to $\sum_{i=1}^{v}G_{B_i}(\xi_i)$, which is less than or equal to $\sum_{i=1}^{v}B_i$. The demand from the beginning of current reservation period is $\sum_{i=2}^{v}B_i$.
The asset stored in the physical storage is $\sum_{i=1}^{v}G_{B_i}(\xi_i)-\sum_{i=2}^{v}B_i$, which is less than or equal to $\sum_{i=1}^{v}B_i-\sum_{i=2}^{v}B_i=B_1$ according to the definition of $G_{B_i}(\xi_i)$.
$B_{1}$ is the capacity of the physical storage. That is, the amount of reserved power always respects the capacity constraint.

\section{Proofs Related to Analysis of \ocm}
\label{app:proof_batman}

\subsection{Proof of Lemma~\ref{lem:online_cost}}
\label{app:lem_upper_bound}
For each virtual storage, \ocm stores asset only if the market price is less than the reservation price. Hence, the cost of the stored assets in $j$-th storage is less than $\int_{0}^{\hat{b}_{i,j}}G^{-1}_{B_{i,j}}(b)db$.
By aggregation over $n$ reservation periods and virtual storages, we can compute the aggregate cost incurred by \ocm as
\begin{equation}
\label{eq:ub_1}
\sum_{i=1}^{n}\sum_{j=1}^{\hat{v}_i}\int_{0}^{\hat{b}_{i,j}}G^{-1}_{B_{i,j}}(b)db,
\end{equation}
where there are $\hat{v}_i$ virtual storage units at reservation period $i$,  $\hat{\xi}_{i,j}$ and $\hat{b}_{i,j}$ is the final storage level of virtual storage $j$ at reservation period $i$.
The additional amount of electricity needed to satisfy the demand in the idle period is equal to $D-\sum_{i=1}^{n}\sum_{j=1}^{\hat{v}_i}\hat{b}_{i,j}+\hat{b}$, where $D$ be the total demand during the time horizon, i.e., $D = \sum_{t\in\mathcal{T}} d(t)$, and $\hat{b}$ be the final storage level of the physical storage, i.e., $\hat{b} = b(T)$.
Hence, the cost of the online algorithm during the idle period is at most
\begin{equation}
	\label{eq:ub_2}
\left(D-\sum_{i=1}^{n}\sum_{j=1}^{\hat{v}_i}\hat{b}_{i,j}+\hat{b}\right) p_{\max},
\end{equation}
Adding~\eqref{eq:ub_1} and~\eqref{eq:ub_2} completes the proof.

\subsection{Proof of Lemma~\ref{lem:offline_cost}}
\label{app:offline_cost}
Similar to the online algorithm, the cost of an optimal offline solution, denoted as \textsf{cost}(\ofa),  can be also split  into two parts. To characterize a lower bound for the offline optimum, let us define $F_i(\beta)$ as the minimum cost of purchasing $\beta$ units of asset during the $i$-th reservation period.

Let $\beta_i$ be the asset purchased by the optimal offline solution during the $i$-th reservation period.
Obviously, the cost of the optimal offline solution during the $i$-th reservation period is at least $F_i(\beta_i)$. Let $\tilde{p}$ be the minimum price during the idle periods, then we have that the cost of the offline algorithm is lower bounded by $\sum_{i=1}^{n}F_i(\beta_i)+(D-\sum_{i=1}^{n} \beta_i)\times \tilde{p}$. This proof is complete.

\subsection{Additional Results Required to Prove Lemma~\ref{lem:neq}}
\label{app:neq}

We state the following lemma on the properties of function $G^{-1}_{B_i}(b)$ and $F_i(\beta)$ to facilitate the proof of Lemma~\ref{lem:neq}. Let $B_{i,j}$ be the capacity of the $j$-th storage and $\hat{b}_{i,j}$ be the final storage level at reservation period $i$.



\begin{lemma}
	\label{lem:opt_cost}
	Defining $\hat{b}_i$ as the initial state of the storage of the offline algorithm at $i$-th reservation period, there is a worst-case input instance such that for all $0<\beta<\beta'<\sum_{j=1}^{\hat{v}_i}B_{i,j}-\hat{b}_i$, we have
	\begin{enumerate}
		\item $F_i(0)=0$;
		\item $F_i(\beta')> F_i(\beta)$;
		\item $F_i(\beta')-F_i(\beta)\leq \frac{p_{\max}}{\alpha}(\beta'-\beta)$.
	\end{enumerate}
\end{lemma}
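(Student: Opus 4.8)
\textbf{Proof plan for Lemma~\ref{lem:opt_cost}.} The plan is to first isolate the price structure of a worst-case instance and then read the three properties off the shape of the cost-to-purchase function $F_i$. The starting observation is that \ocm never reserves any asset at a price exceeding $p_{\max}/\alpha$: if $p(t)>p_{\max}/\alpha$ then $G_{B_i}(p(t))=0$ for every (virtual) storage $i$, so Line~\ref{algline:xi} gives $x_i(t)=0$ and Line~\ref{algline:xt} reduces to $x(t)=[d(t)-b(t-1)]^+$; moreover the only way \ocm buys a positive amount at such a high price is when $b(t-1)<d(t)$, in which case $b(t)=0$ and the current reservation period terminates at slot $t$. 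Hence inside every reservation period, except possibly at its terminal slot, all of \ocm's purchases occur at prices in $[p_{\min},p_{\max}/\alpha]$, and any high-price terminal slot only serves to cover residual demand.

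Building on this, I would argue that for the purpose of upper-bounding $\costocm/\costopt$ one may restrict, without loss of generality, to a canonical worst-case instance in which every price occurring inside a reservation period lies in $[p_{\min},p_{\max}/\alpha]$ and is arranged in the standard non-increasing ``$k$-search'' pattern. The reduction step is an exchange argument: a slot inside a reservation period whose price exceeds $p_{\max}/\alpha$ contributes nothing to \ocm's reserved amount, so lowering that price down to $p_{\max}/\alpha$ (equivalently, reassigning such slots to an adjacent idle period) does not decrease \ocm's cost by more than it decreases the offline cost, hence does not decrease the ratio; one only has to check that the modification preserves the partition of $T$ into reservation and idle periods used in Lemmas~\ref{lem:online_cost} and~\ref{lem:offline_cost}.

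On the canonical instance, $F_i$ becomes the classic ``fill from the cheapest price first'' cost function on the interval $\beta\in[0,\sum_{j=1}^{\hat v_i}B_{i,j}-\hat b_i]$ (the right endpoint being the largest amount the offline can possibly purchase during reservation period $i$, given initial storage $\hat b_i$, capacity $B$, and the demand of that period). Consequently $F_i$ is convex and nondecreasing, and its right derivative at a quantity $\beta$ equals the price of the $\beta$-th cheapest unit of purchasing capacity available in the reservation period, a value lying in $[p_{\min},p_{\max}/\alpha]$. The three claims are then immediate: (1) $F_i(0)=0$ since buying nothing costs nothing; (2) for $\beta<\beta'$, integrating the derivative lower bound yields $F_i(\beta')-F_i(\beta)\ge(\beta'-\beta)p_{\min}>0$; (3) integrating the upper bound yields $F_i(\beta')-F_i(\beta)\le(\beta'-\beta)\,p_{\max}/\alpha$, where the hypothesis $\beta'<\sum_{j=1}^{\hat v_i}B_{i,j}-\hat b_i$ is precisely what keeps us within the range over which purchasing capacity is still available, so that the derivative bound holds throughout $[\beta,\beta']$.

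I expect the main obstacle to be the reduction in the second paragraph: rigorously showing that passing to the canonical instance loses no generality requires tracking simultaneously how \ocm's cost, its reservation/idle partition, and the offline optimum respond to capping the excess prices, and verifying that the competitive ratio never shrinks under this operation. Once that is established, the verification of (1)--(3) is routine.
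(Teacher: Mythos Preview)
Your route differs from the paper's in the instance transformation used to force property~(3). You cap every price exceeding $p_{\max}/\alpha$ inside a reservation period down to $p_{\max}/\alpha$; the paper instead \emph{inserts} a fresh slot $\langle p_{\max}/\alpha,\,0\rangle$ immediately before every original slot of $\omega$. Both operations leave \ocm's trajectory untouched (at price $p_{\max}/\alpha$ every $G_{B_i}$ vanishes, so each $x_i(t)=0$, no virtual storage is created since the inserted demand is zero, and the reservation-price update in Line~\ref{algline:xii} is a no-op) and can only help the offline, so both preserve worst-case-ness. The payoff of the paper's interleaving is that it places a purchase opportunity at price exactly $p_{\max}/\alpha$ adjacent to \emph{every} original slot: whatever additional $\beta'-\beta$ units the offline wishes to acquire, it can pick them up at inserted slots, which yields property~(3) directly without any structural analysis of $F_i$.

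The weak point of your plan is not where you locate it. The capping reduction of your second paragraph is routine to verify, and the further rearrangement into a ``non-increasing $k$-search pattern'' is neither needed nor justified and should be dropped. The real gap is the third paragraph: the assertion that on the canonical instance $F_i$ is the ``fill from the cheapest price first'' function, with right derivative equal to ``the price of the $\beta$-th cheapest unit of purchasing capacity'', does not hold in the presence of the storage constraint $0\le b(t)\le B$. The offline cannot buy arbitrarily much at the cheapest slot, and a cheap slot may be unusable at a given $\beta$ because storage is already at capacity there, so the marginal cost of $F_i$ is not a sorted price list and your derivation of (2)--(3) from that picture is unsupported. To salvage~(3) on your capped instance you need a direct feasibility argument: take an optimal plan for $\beta$ and purchase the extra $\beta'-\beta$ at the final slot of the period; this is feasible precisely because the hypothesis $\beta'<\sum_j B_{i,j}-\hat b_i$ keeps the resulting terminal storage below $B$, and the price there is at most $p_{\max}/\alpha$ after capping. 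That works, but it is a different argument than the sorted-price picture, and it is this step --- not the capping --- that carries the content of the lemma.
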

\begin{proof}
Statements (1) and (2) are straightforward.

Assume there is a worst instance $\omega = [\langle p(t),d(t)\rangle]_{t\in \mathcal{T}}$.
The adversary can construct a new instance $\omega'$ (as shown in Equation \eqref{eq:new_instance}) by adding one time slot before each time slot of instance $\omega$. Note that this is possible since the adversary can set the length of time horizon.
The market prices for the newly added time slots is $p_{\max}/\alpha$ and the demand is always equal to zero.

\begin{figure*}
	\begin{eqnarray}
	\label{eq:new_instance}
	\omega'=\left[\left\langle\frac{p_{\max}}{\alpha},0\right\rangle,\langle p(1),d(1)\rangle,\left\langle\frac{p_{\max}}{\alpha},0\right\rangle,\langle p(2),d(2)\rangle,\ldots ,\left\langle\frac{p_{\max}}{\alpha},0\right\rangle,\langle p(t),d(t)\rangle,\left\langle\frac{p_{\max}}{\alpha},0\right\rangle\right].
	\end{eqnarray}
	\hrule \hrulefill
\end{figure*}

In this way, we construct a new instance under which the cost of the online algorithm does not change, and that of the offline optimal solution will not increase. Thus, $\omega'$ is also the worst instance.
Let $F_{i}(\beta)$ be the minimum cost when buying $\beta$ units of asset during the $i$-th reservation period. When the optimal policy buys another $\beta'-\beta$, $\beta'<\sum_{j=1}^{\hat{v}_i}B_{i,j}-\hat{b}_i$, units of asset, the cost will not be larger than $\frac{p_{\max}}{\alpha}(\beta'-\beta)$, since the optimal policy can buy asset at any newly added time slots.
In this way we prove that, there is a worst instance, such that $F_{i}(\beta')-F_i(\beta)\leq p_{\max}/\alpha(\beta'-\beta),~\text{for}~0<\beta<\beta'<\sum_{j}\hat{b}_{i,j}-\hat{b}_i$.
This completes the proof.
\end{proof}

\section{Convergence of \texttt{InitVS}}
	\label{app:conv}
	\begin{theorem}
		Given a market price $p(t)\in [p_{\min},p_{\max}]$, \texttt{InitVS} converges to a feasible solution $B_{v}$ and $\hat{x}(t)$ which satisfy Equations (\ref{eq:1}) and (\ref{eq:2}) simultaneously.
	\end{theorem}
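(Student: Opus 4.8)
The plan is to collapse the mutually dependent pair \eqref{eq:1}--\eqref{eq:2} into a single scalar fixed-point equation in the unknown $B_v$, and then to show that the iterates generated by \texttt{InitVS} are exactly the orbit of that fixed-point map, which converges by a monotone-convergence argument.

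First I would use the structure of $G_{B_i}$. For every capacity $B_i$ we have $G_{B_i}(p) = B_i\,g(p)$, where $g(p) := \alpha\ln\!\big[(1-p/p_{\max})\frac{\alpha}{\alpha-1}\big]$ for $p\le p_{\max}/\alpha$ and $G_{B_i}(p)=0$ for $p> p_{\max}/\alpha$; in all cases $G_{B_i}(p) = B_i\max\{g(p),0\}$. The newly created storage has $\xi_v = p_{\max}/\alpha$, so $G_{B_v}(\xi_v)=0$, and therefore $\big[G_{B_v}(p(t))-G_{B_v}(\xi_v)\big]^+ = c\,B_v$ with $c := \max\{g(p(t)),0\}\in[0,1]$ a constant determined only by $p(t)$. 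Since the first $v-1$ storages are fixed inputs to \texttt{InitVS}, the quantity $A := \sum_{i=1}^{v-1}\big[G_{B_i}(p(t))-G_{B_i}(\xi_i)\big]^+\ge 0$ is also a constant. Hence \eqref{eq:1} reads $\hat x(t) = A + c\,B_v$, and plugging this into \eqref{eq:2} gives the scalar equation $B_v=\phi(B_v)$ with
\[
\phi(y) \;:=\; d(t) - \big[\,d(t)-\rho_d - A - c\,y\,\big]^{+}.
\]

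Next I would verify that $\phi$ is continuous, non-decreasing, and maps $[0,d(t)]$ into itself: non-decreasing because $y\mapsto d(t)-\rho_d-A-cy$ is non-increasing ($c\ge0$) and $z\mapsto[z]^{+}$ is non-decreasing; and a self-map because $[\,\cdot\,]^{+}\ge0$ gives $\phi(y)\le d(t)$, while $\rho_d+A+cy\ge0$ and $d(t)>0$ give $[d(t)-\rho_d-A-cy]^{+}\le d(t)$, hence $\phi(y)\ge0$. The values that $B_v$ assumes inside \texttt{InitVS} are precisely $B_v^{(1)}=d(t)$ and $B_v^{(k+1)}=\phi(B_v^{(k)})$. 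From $\phi(d(t))\le d(t)$ we get $B_v^{(2)}\le B_v^{(1)}$, and monotonicity of $\phi$ propagates this by induction, so $(B_v^{(k)})_k$ is non-increasing and bounded below by $0$; by the monotone convergence theorem it converges to some $B_v^{\star}\in[0,d(t)]$, and continuity of $\phi$ yields $B_v^{\star}=\phi(B_v^{\star})$. Setting $\hat x^{\star}(t)=A+c\,B_v^{\star}$, the pair $(B_v^{\star},\hat x^{\star}(t))$ satisfies \eqref{eq:1} and \eqref{eq:2} simultaneously. Finally, since a convergent sequence is Cauchy, $|B_v^{(k+1)}-B_v^{(k)}|\to0$, so the stopping test $|B'_v-B_v|\le\varepsilon_1$ triggers after finitely many iterations; thus \texttt{InitVS} terminates, and its output $B_v$ satisfies $|B_v-\phi(B_v)|\le\varepsilon_1$, i.e.\ it solves \eqref{eq:1}--\eqref{eq:2} up to an error $O(\varepsilon_1)$ that vanishes as $\varepsilon_1\to0$.

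The one genuinely non-routine step is the opening reduction: recognizing that the linearity of $G_{B_i}$ in the capacity, combined with $G_{B_v}(p_{\max}/\alpha)=0$, makes $\hat x(t)$ an affine function of $B_v$, which is exactly what decouples \eqref{eq:1} from \eqref{eq:2} and turns the problem into a one-dimensional monotone iteration. After that, care is only needed at the boundary cases --- the regime $p(t)>p_{\max}/\alpha$ (where $c=0$, the new storage receives nothing, and $B_v=\phi(B_v)$ still has a solution) and the verification that $A$ genuinely does not depend on $B_v$ --- with the remainder being a standard monotone-convergence argument.
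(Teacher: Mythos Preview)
Your proof is correct and rests on the same idea as the paper's --- monotone bounded iteration --- but you carry it out more carefully. You explicitly collapse \eqref{eq:1}--\eqref{eq:2} into the scalar fixed-point map $\phi(y)=d(t)-[d(t)-\rho_d-A-cy]^{+}$ (using that $G_{B_i}$ is linear in $B_i$ and that $G_{B_v}(\xi_v)=0$ for the freshly created storage), and then show the iterates $d(t),\phi(d(t)),\phi^{2}(d(t)),\dots$ are \emph{non-increasing} and hence convergent to a genuine fixed point. The paper's three-line argument asserts instead that $B'_v\ge B_v$ throughout and that $B_v$ \emph{increases} by at least $\varepsilon_1$ per step; for Algorithm~\ref{alg:alg_3} as written (which sets $B_v\leftarrow d(t)$ at the first pass and then applies $\phi$), this direction is reversed, so your monotonicity claim is the correct one. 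Either direction, once established, yields termination --- the paper's style of argument (a change of at least $\varepsilon_1$ per step within a bounded interval) gives the explicit iteration bound $d(t)/\varepsilon_1$, whereas your Cauchy-sequence argument additionally identifies the limit as an exact simultaneous solution of \eqref{eq:1}--\eqref{eq:2}, something the paper's proof does not address.
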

	\begin{proof}
		It is easy to see that $B'_{v}$ is always larger than or equal to $B_{v}$.
		And if the the value of $B'_{v}-B_{v}$ is larger than $\varepsilon_1$, the value of $B_{v}$ will increase by at least $\varepsilon_1$. Thus, there must be an iteration such that $B'_{v}-B_{v}\leq \varepsilon_1$.
	\end{proof}

\section{Competitive analysis of \ocmrate}
\label{sec:batmanrate_analysis}
If $D=0$, \ocmrate can be easily proved $\alpha$-competitive.
Thus, we focus our analysis on the case $D>0$.

Similar to the analysis for \ocm, we would like to upper bound the cost of \ocmrate. To achieve this, first we give the following two lemmas which characterize properties of the worst instance for \ocmrate.  Lemma \ref{lem:B_1} implies that in worst case, the output constraint is not active. Lemma~\ref{lem:online_cost_2} characterizes an upper bound on the cost of \ocmrate.

\begin{lemma}
\label{lem:B_1}
Under the worst case, $\tilde{x}(t)=0$, for $\forall t\in \mathcal{T}$, where $\tilde{x}(t) = [d(t)-\rho_d-\hat{x}(t)]^+$.
\end{lemma}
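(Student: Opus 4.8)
The plan is to exhibit, for an arbitrary input $\omega$, a modified input $\omega'$ on which \ocmrate never hits the output-rate constraint and for which the target guarantee $\textsf{cost}_{\omega}(\ocmrate)\le\alpha\,\textsf{cost}_{\omega}(\ofa)+\textsf{cons}$ on $\omega$ follows from the same guarantee on $\omega'$; once this reduction is in place one may carry out the rest of the competitive analysis of \ocmrate under the assumption $\tilde x(t)=0$ for all $t$. The transformation I have in mind relies on the same two devices already exploited for \ocm: the adversary is free to insert or subdivide slots (as in the proof of Lemma~\ref{lem:opt_cost}), and the reservation function is linear in capacity, $G_{cB_i}(p)=c\,G_{B_i}(p)$.

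First I would pin down exactly when $\tilde x(t)=[d(t)-\rho_d-\hat x(t)]^+>0$. By Lines~\ref{algline:discharge}--\ref{algline:bt} of Algorithm~\ref{alg:alg_2} this requires $\hat x(t)<d(t)-\rho_d$; when additionally $b(t-1)\ge\rho_d$ the algorithm sets $x(t)=d(t)-\rho_d$, discharges exactly $\rho_d$, and consequently keeps the $\tilde x(t)$ units it was forced to buy inside the physical storage rather than releasing them — an overpayment of $\tilde x(t)\,p(t)$ compared with its rate-free behaviour — while if $b(t-1)<\rho_d$ the storage empties and a renewal is triggered, a case that the reservation-period decomposition of Definition~\ref{def:res_period} already isolates.

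Second, I would replace each offending slot $\langle p(t),d(t)\rangle$ by a run of $N_t$ consecutive micro-slots $\langle p(t),d(t)/N_t\rangle$ with $N_t$ so large that $d(t)/N_t<\rho_d$; then \texttt{InitVS} assigns each micro-slot a virtual storage of capacity exactly $d(t)/N_t$, the output-rate constraint stays slack along the run, and the forced purchase disappears. Using linearity of $G_{B_i}$, the reservation purchases along the run aggregate to $G_{d(t)}(p(t))$ plus whatever the pre-existing virtual storages absorb at price $p(t)$, which one checks is never larger than what \ocmrate purchased in the original slot $t$; hence $\textsf{cost}_{\omega'}(\ocmrate)\le\textsf{cost}_{\omega}(\ocmrate)$, with the gap concentrated on the removed overpayments. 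The bookkeeping here must also verify that the more aggressive discharging along the run does not empty the physical storage too early and thereby alter the reservation-period structure; this is the one place where the linearity argument needs supplementing.

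Third, I would control \ofa. Subdividing slots only enlarges \ofa's feasible set, so $\textsf{cost}_{\omega'}(\ofa)\le\textsf{cost}_{\omega}(\ofa)$ is immediate, but the statement I actually need is the quantitative one $\textsf{cost}_{\omega}(\ocmrate)-\alpha\,\textsf{cost}_{\omega}(\ofa)\le\textsf{cost}_{\omega'}(\ocmrate)-\alpha\,\textsf{cost}_{\omega'}(\ofa)$, i.e.\ that \ofa's cost cannot fall by more than a $1/\alpha$ fraction of the overpayment \ocmrate shed. Since an active output-rate constraint of \ocmrate occurs only with an almost-full physical storage, on $\omega'$ the offline optimum inherits a comparably stocked battery, and comparing the two solutions reservation period by reservation period — in the spirit of Lemmas~\ref{lem:a_critical_property_on_g}--\ref{lem:neq} — the units \ofa discharges along the run in place of buying at $p(t)$ deliver the required saving. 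I expect this third step to be the main obstacle: it is not crude monotonicity but a genuine accounting of \ofa's storage trajectory and of the worst-case price $p(t)$ at which the rate constraint can bind, whereas Steps~1 and~2 are essentially bookkeeping made clean by the linearity of $G_{B_i}$ and the adversary's freedom to split time slots.
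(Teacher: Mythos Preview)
Your approach is quite different from the paper's, and the paper's is considerably simpler. The paper argues by a local demand perturbation with no change to the time structure: given a worst instance $\omega$ with $\tilde x(t)>0$ at some slot $t$, it lowers the demand at slot $t$ alone by a small $\delta<\tilde x(t)$, keeping the price. Because the output constraint stays active after this perturbation, \texttt{InitVS} returns the same virtual-storage capacity and the reservation prices are unchanged, so \ocmrate's cost falls by exactly $p(t)\delta$ while its behaviour at all other slots is untouched; \ofa, which must still cover the (reduced) demand under the \emph{same} rate constraints, saves at least $p(t)\delta$. Hence the cost ratio on the perturbed instance is no smaller, contradicting worst-case. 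No slots are inserted, so the rate constraints are never relaxed and the comparison with \ofa is immediate.

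Your slot-subdivision route has a genuine gap at Step~3, and it is not just the bookkeeping you flag. First, your verbal gloss of the displayed inequality is inverted: the formula you wrote is equivalent to requiring that \ofa's saving on $\omega'$ be \emph{at least} a $1/\alpha$ fraction of \ocmrate's saving, not ``at most''. Second, and more seriously, this lower bound on \ofa's saving fails in general. Take an instance where, just before slot $t$, \ofa's battery is empty while \ocmrate's holds at least $\rho_d$ units and the output constraint binds (nothing forbids this; the two trajectories are independent). Subdividing slot $t$ into $N_t$ micro-slots lets \ocmrate discharge across the run and shed its overpayment, but \ofa, having nothing stored, still pays $p(t)d(t)$ on $\omega'$ and saves zero; your reduction inequality then fails outright. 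You could try to rule such configurations out by first constraining \ofa's trajectory on a worst instance, but at that point you have reintroduced a worst-case argument anyway, and the paper's one-line demand perturbation gets there without the subdivision, without relaxing rate constraints, and without the Step~2 trajectory comparison.
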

\begin{proof}
We prove this lemma by contradiction.
Assume there is a worst instance $\omega = [\langle p(t),d(t)\rangle]_{t\in \mathcal{T}}$, where $\tilde{x}(t)>0$ for time slot $t$.
We can construct a new instance which is the same as $\omega$ except at the $t$-th time slot.
For time slot $t$, the demand is set to $x(t)-\delta$, where $\delta<\tilde{x}(t)$, and the market price is equal to $p(t)$.
In this way, the cost of \ocm at time slot $t$ will decrease by $p(t)\delta$. The costs on other time slots are unchanged, because the modification on the demand does not influence the capacity and reservation price of virtual storage according to the rules of \ocmrate.
On the other hand, the cost of \ofa at time slot $t$ will decrease by at least $p(t)\delta$, since the procurement amount of \ofa is larger than $\delta$.
In this case, we have a new instance $\omega'$ under which the cost ratio is larger than that of the worst instance, contradicting the assumption.
This completes the proof.
\end{proof}

Lemma \ref{lem:B_1} implies that under the worst case the output constraint is not active.
Then, we take into account the influence of the input rate constraint.
Recall that in the basic version, \ocm, the procurement amount is always larger than or equal to $\hat{x}(t)$, which is computed in Equation (\ref{eq:1}).
With input constraint, $\hat{x}(t)$ may not be satisfied, and the maximum procurement amount is limited by $\rho_c+d(t)$.
We define $\mathcal{T}_{r}\subset \mathcal{T}$ be the set of time slots at which the input rate truncates the procurement amount. That is, the following equation holds for $t \in \mathcal{T}_r$.
\begin{equation*}
\sum_{i\leq v}\left[G_{B_{i}}(p(t))-G_{B_{i}}(\xi_i)\right]^+> \rho_c +d(t).
\end{equation*}
For $t\in \mathcal{T}_r$, we define $p'(t)$ as the value which satisfies the following equation.
\begin{equation*}
\sum_{i\leq v}\left[G_{B_{i}}(p'(t))-G_{B_{i}}(\xi_i)\right]^+= \rho_c +d(t),~\text{for}~\forall t \in \mathcal{T}.
\end{equation*}
$p'(t)$ is the actual reservation price computed in Algorithm \ref{alg:alg_4}, and obviously, $p'(t)>p(t)$. Let $\mu(t)=p'(t)-p(t)$ denotes the
difference between $p'(t)$ and $p(t)$.

Denote $x(t)$ and $x^*(t)$ as the amount of reserved asset by the online algorithm and the offline algorithm at time slot $t$, respectively.
Similar to Lemma \ref{lem:online_cost}, we have the following lemma which upper bounds the cost of the \ocmrate.
\begin{lemma}
	\label{lem:online_cost_2}
The cost of \ocmrate is upper bounded by
\begin{equation*}
\textsf{cost}(\ocmrate)\leq\mathcal{Q}+\left(D-\sum\limits_{i=1}^{n}\sum\limits_{j=1}^{\hat{v}_i}\hat{b}_{i,j}+\hat{b}\right)\cdot p_{\max}-\sum\limits_{t\in \mathcal{T}_r}\mu(t) x(t),
\end{equation*}
where $\mathcal{Q}=\sum\limits_{i=1}^{n}\sum\limits_{j=1}^{\hat{v}_i}\int_{0}^{\hat{b}_{i,j}}G^{-1}_{B_{i,j}}(b)db$.
\end{lemma}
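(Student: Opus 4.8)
The plan is to reprove the bound of Lemma~\ref{lem:online_cost} almost verbatim and then subtract the one new effect: on the slots where the input rate truncates the purchase, \ocmrate pays the \emph{current} price $p(t)$ but, via Algorithm~\ref{alg:alg_4}, updates the reservation prices as if it had paid the higher price $p'(t)=p(t)+\mu(t)$, so the reservation-function bookkeeping overcharges by at least $\mu(t)$ per unit bought on such a slot. First I would invoke Lemma~\ref{lem:B_1} to reduce to a worst-case instance on which $\tilde{x}(t)=0$ for all $t$; there the output rate never binds, so \texttt{InitVS} returns $B_v=d(t)$, the reservation/idle decomposition of Definition~\ref{def:res_period} is defined exactly as for \ocm, and the split of the online cost into (a) the cost of the asset logically allocated to the virtual storages over the $n$ reservation periods and (b) the cost of the residual asset purchased during idle periods goes through as in the proof of Lemma~\ref{lem:online_cost}. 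Part (b) is bounded by $\bigl(D-\sum_{i=1}^{n}\sum_{j=1}^{\hat{v}_i}\hat{b}_{i,j}+\hat{b}\bigr)p_{\max}$ exactly as before, so the task reduces to bounding part (a) by $\mathcal{Q}-\sum_{t\in\mathcal{T}_r}\mu(t)x(t)$.

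For part (a) I would track each virtual storage $(i,j)$ over its lifetime: its level is raised by a sequence of increments, where an increment occurring at slot $t$ pushes the level from $G_{B_{i,j}}(\xi)$ up to $G_{B_{i,j}}(\pi_t)$ with $\pi_t=p(t)$ on an untruncated slot and $\pi_t=p'(t)$ on a slot $t\in\mathcal{T}_r$, matching the reservation price written in Line~\ref{algline:xi3} of Algorithm~\ref{alg:alg_2}. The money actually spent on that increment is $p(t)\bigl[G_{B_{i,j}}(\pi_t)-G_{B_{i,j}}(\xi)\bigr]$, whereas the matching slice of $\int_{0}^{\hat{b}_{i,j}}G^{-1}_{B_{i,j}}(b)\,db$ (recall $\hat{b}_{i,j}=G_{B_{i,j}}(\hat{\xi}_{i,j})$ with $\hat{\xi}_{i,j}$ the final, minimum, reservation price) is at least $\pi_t\bigl[G_{B_{i,j}}(\pi_t)-G_{B_{i,j}}(\xi)\bigr]$, since $G^{-1}_{B_{i,j}}$ is decreasing and equals $\pi_t$ at $b=G_{B_{i,j}}(\pi_t)$. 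Summing these slices over all increments of all virtual storages reconstructs $\mathcal{Q}$ and bounds the true cost of part (a) plus a nonnegative per-increment gap, with that gap at least $\mu(t)$ per unit on the increments that belong to a slot $t\in\mathcal{T}_r$. By the defining equation of $p'(t)$ solved by Algorithm~\ref{alg:alg_4}, the total amount added at such a $t$ over all virtual storages equals $\rho_c+d(t)=x(t)$, and since distinct increments occupy disjoint slices of the level ranges there is no double counting across $\mathcal{T}_r$; hence the gaps total at least $\sum_{t\in\mathcal{T}_r}\mu(t)x(t)$, giving cost of part (a) $\le\mathcal{Q}-\sum_{t\in\mathcal{T}_r}\mu(t)x(t)$. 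Adding the part (b) bound finishes the proof.

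I expect the main obstacle to be the middle step's bookkeeping rather than any single inequality: one must check that the increments attributed to the virtual storages over the whole horizon partition the relevant level ranges without overlap (so the per-increment inequalities sum cleanly into the single integral $\mathcal{Q}$), that on a truncated slot the increments genuinely sum to $x(t)=\rho_c+d(t)$ once Lemma~\ref{lem:B_1} forces the output rate inactive and \texttt{InitVS} to return $B_v=d(t)$, and that the residual-demand term is untouched by the input-rate modification. Once this is pinned down, the only analytic fact needed is the monotonicity of $G^{-1}_{B_i}$ already used in Lemma~\ref{lem:online_cost}.
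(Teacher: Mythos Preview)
Your proposal is correct and follows essentially the same approach as the paper's proof: invoke Lemma~\ref{lem:B_1} to eliminate the output-rate case, repeat the Lemma~\ref{lem:online_cost} decomposition to get the bound $\mathcal{Q}+\bigl(D-\sum_{i,j}\hat{b}_{i,j}+\hat{b}\bigr)p_{\max}$, and then observe that on each $t\in\mathcal{T}_r$ the reservation bookkeeping charges $p'(t)$ while the algorithm actually pays $p(t)=p'(t)-\mu(t)$ on the full $x(t)=\rho_c+d(t)$ units, yielding the subtracted correction. Your slice-by-slice justification of the last step is considerably more detailed than the paper's two-line version, but the structure and key idea are identical.
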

\begin{proof}
By Lemma \ref{lem:B_1}, we have that, under the worst case, $\tilde{x}(t)=0$ and the amount of reserved asset is always less than or equal to the value computed in Equation (\ref{eq:1}).
Based on the analysis in \ref{lem:online_cost}, we have that the cost of \ocmrate is upper bounded by $\mathcal{Q}+\left(D-\sum\limits_{i=1}^{n}\sum\limits_{j=1}^{\hat{v}_i}\hat{b}_{i,j}+\hat{b}\right)\cdot p_{\max}$.
Moreover, $\forall t\in \mathcal{T}_r$, the actual price is less than the reservation price by $\mu(t)$, so the above upper bound is further modified to $\mathcal{Q}+\left(D-\sum\limits_{i=1}^{n}\sum\limits_{j=1}^{\hat{v}_i}\hat{b}_{i,j}+\hat{b}\right)\cdot p_{\max}-\sum\limits_{t\in \mathcal{T}_r}\mu(t) x(t)$.
This completes the proof.
\end{proof}

With the above two lemmas, the competitive ratio of \ocmrate is upper bounded by

\begin{equation*}
	\begin{split}
&\frac{\costocmrate-\hat{b}p_{\max}}{\costopt} \\
\leq&\frac{\mathcal{Q}+\left(D-\sum\limits_{i=1}^{n}\sum\limits_{j=1}^{\hat{v}_i}\hat{b}_{i,j}\right)\cdot p_{\max}-\sum\limits_{t\in \mathcal{T}_r}\mu(t) x(t)}{\sum\limits_{i=1}^{n}F'_i(\beta_i)+\left(D-\sum\limits_{i=1}^{n} \beta_i\right)\cdot \tilde{p}} \\
\leq & \max\left\{\frac{\mathcal{Q}+\left(\sum\limits_{i=1}^{n} \beta_i-\sum\limits_{i=1}^{n}\sum\limits_{j=1}^{\hat{v}_i}\hat{b}_{i,j}\right) \cdot p_{\max}-\sum\limits_{t\in \mathcal{T}_r}\mu(t) x(t)}{\sum\limits_{i=1}^{n}F'_i(\beta_i)},\alpha \right\} \\
\leq &\max\left\{\frac{\mathcal{Q}+\left(\sum\limits_{i=1}^{n}\sum\limits_{j=1}^{\hat{v}_i}B_{i,j}-\sum\limits_{i=1}^{n}\sum\limits_{j=1}^{\hat{v}_i}\hat{b}_{i,j}\right) p_{\max}-\sum\limits_{t\in \mathcal{T}_r}\mu(t) x(t)}{\sum\limits_{i=1}^{n}F'_i\left(\sum\limits_{j=1}^{\hat{v}_i}B_{i,j}\right)},\alpha \right\},
	\end{split}
\end{equation*}
where $F'_i(\beta)$ is defined as the minimum cost of purchasing $\beta$ units of asset during the $i$-th reservation period. The definition of $F'_i(\beta)$ is similar to that of $F_{i}(\beta)$ for for the basic version of the problem and it also respects the properties listed in Lemma \ref{lem:opt_cost}.

During the lifetime of the $j$-th virtual storage of the $i$-th reservation period,
the minimum reservation price is $\hat{\xi}_{i,j}$. 
The cost of the optimal algorithm satisfies

\begin{equation*}
\sum\limits_{i=1}^{n}F'_i\left(\sum_{j=1}^{\hat{v}_i}B_{i,j}\right)\geq \sum\limits_{i=1}^{n}\sum_{j=1}^{\hat{v}_i}\hat{\xi}_{i,j}B_{i,j}-\sum\limits_{t\in \mathcal{T}_r} \mu(t) x^*(t).
\end{equation*}

Then, we have

\begin{equation*}
\begin{split}
&\textsf{cr}(\ocmrate) \\
\leq &\max\left\{\frac{\mathcal{Q}+\left(\sum\limits_{i=1}^{n}\sum\limits_{j=1}^{\hat{v}_i}B_{i,j}-\sum\limits_{i=1}^{n}\sum\limits_{j=1}^{\hat{v}_i}\hat{b}_{i,j}\right) p_{\max}-\sum\limits_{t\in \mathcal{T}_r}\mu(t) x(t)}{\sum\limits_{i=1}^{n}\sum\limits_{j=1}^{\hat{v}_i}\hat{\xi}_{i,j}B_{i,j}-\sum\limits_{t\in \mathcal{T}_r} \mu_t x^*(t)},\alpha \right\} \\
\end{split}
\end{equation*}


The following lemma characterizes a bound on $x(t)/x^{*}(t)$ .

\begin{lemma}
Under the worst case, we have that $x(t)/x^{*}(t)$ is less than or equal to the competitive ratio, for any $t\in \mathcal{T}_{r}$.
\end{lemma}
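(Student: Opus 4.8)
The plan is to prove this by a local perturbation of a worst-case instance, in the same spirit as the proof of Lemma~\ref{lem:B_1}. Assume for contradiction that there is a worst-case instance $\omega$ and a slot $t_0\in\mathcal{T}_r$ with $x(t_0)>\textsf{cr}(\ocmrate)\cdot x^*(t_0)$ (the degenerate sub-case $x^*(t_0)=0$ is subsumed below). I would construct a new instance $\omega'$ that agrees with $\omega$ everywhere except that the market price in slot $t_0$ is raised from $p(t_0)$ to $p(t_0)+\varepsilon$, where $\varepsilon>0$ is small enough that $p(t_0)+\varepsilon<\min\{p'(t_0),p_{\max}\}$, and then show that $\omega'$ has a strictly larger cost ratio than $\omega$, contradicting the optimality of $\omega$. (Raising, not lowering, $p(t_0)$ is the right move precisely because $x(t_0)/x^*(t_0)>\textsf{cr}(\ocmrate)$.)

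The heart of the argument is that this perturbation is invisible to \ocmrate at every slot other than $t_0$. Since $t_0\in\mathcal{T}_r$ we have $\hat{x}(t_0)>\rho_c+d(t_0)$, and for $\varepsilon<\mu(t_0)$ this still holds under $\omega'$; in that regime the sub-procedure \texttt{InitVS} still returns $B_v=d(t_0)$ (the truncation term $[d(t_0)-\rho_d-\hat{x}(t_0)]^+$ remains $0$), the procurement is still $x(t_0)=\rho_c+d(t_0)$, and the updated reservation price produced by \texttt{CalRP} — the root of $\sum_{i\le v}[G_{B_i}(p)-G_{B_i}(\xi_i)]^+=\rho_c+d(t_0)$ — does not depend on $p(t_0)$. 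Hence the set of virtual storages, all reservation prices, and the whole physical-storage trajectory after $t_0$ (including $\hat{b}$ and any later renewals) are unchanged, so $\textsf{cost}_{\omega'}(\ocmrate)=\textsf{cost}_{\omega}(\ocmrate)+\varepsilon\,x(t_0)$. On the offline side, running $\omega$'s optimal schedule on $\omega'$ changes only the bill in slot $t_0$, by $\varepsilon\,x^*(t_0)$, so $\textsf{cost}_{\omega'}(\ofa)\le\textsf{cost}_{\omega}(\ofa)+\varepsilon\,x^*(t_0)$. Using the ratio $\frac{\textsf{cost}(\ocmrate)-\hat{b}\,p_{\max}}{\textsf{cost}(\ofa)}$ that drives this section's analysis, together with the elementary fact that $\frac{a+\delta_1}{b+\delta_2}>\frac{a}{b}$ whenever $\delta_1/\delta_2>\frac{a}{b}$ for positive $a,b,\delta_1,\delta_2$, the hypothesis $x(t_0)/x^*(t_0)>\textsf{cr}(\ocmrate)\ge\frac{\textsf{cost}_{\omega}(\ocmrate)-\hat{b}\,p_{\max}}{\textsf{cost}_{\omega}(\ofa)}$ forces the cost ratio of $\omega'$ to exceed $\textsf{cr}(\ocmrate)$ strictly — the required contradiction. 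When $x^*(t_0)=0$ the offline cost is unaffected while the online cost strictly grows, which is an even more direct contradiction.

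I expect the main obstacle to be exactly the invisibility claim above: one must verify that raising $p(t_0)$ moves no other slot into or out of $\mathcal{T}_r$, leaves untouched the fixed points computed by the iterative routines \texttt{InitVS} and \texttt{CalRP}, and does not alter the physical-storage level at or after $t_0$ — the rate-constrained analogue of the ``the modification does not influence the capacity and reservation price of virtual storage'' step already used in Lemma~\ref{lem:B_1}. The remaining sub-cases ($x^*(t_0)=0$; $t_0$ lying at the boundary of a reservation period; keeping $\varepsilon$ below both $p'(t_0)-p(t_0)$ and $p_{\max}-p(t_0)$) are routine and only reinforce the conclusion. Once this lemma is in hand, the correction terms $\sum_{t\in\mathcal{T}_r}\mu(t)x(t)$ and $\sum_{t\in\mathcal{T}_r}\mu(t)x^*(t)$ in the competitive-ratio expression for \ocmrate derived above are controlled, completing the proof of Theorem~\ref{thm:batmanrate}.
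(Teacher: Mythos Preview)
Your proposal is correct and follows essentially the same approach as the paper: both argue by contradiction, raising the price at slot $t_0$ by some $\varepsilon\le\mu(t_0)$ so that \ocmrate's trajectory is unchanged (cost increases by $\varepsilon\,x(t_0)$) while \ofa's cost increases by at most $\varepsilon\,x^*(t_0)$, yielding a strictly worse instance. Your write-up is in fact more careful than the paper's, which asserts the ``invisibility'' of the perturbation without spelling out why \texttt{InitVS} and \texttt{CalRP} return the same values and does not separately treat the $x^*(t_0)=0$ case.
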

\begin{proof}
Let $\omega = [\langle p(t),d(t)\rangle]_{t\in \mathcal{T}}$ be the worst instance and at time slot $t$, there is $x(t)/x^{*}(t)>\textsf{cr}(\ocmrate)$. We can construct a new instance $\omega'$ by increasing the market price at time slot $t$ by $\delta$, where $\delta\leq \mu(t)$. That is

\begin{equation*}
\omega'=[\langle p(1),d(1)\rangle,\ldots,\langle p(t)+\delta,d(t)\rangle,\ldots,\langle p(T),d(T)\rangle].
\end{equation*}

Under instance $\omega'$, the cost of \ocm will increase by $x(t)\delta$, and that of \ofa increase by less than $\frac{x(t)\delta}{\textsf{cr}(\ocmrate) }$.
In this way, we can get a worse instance $\omega'$ than $\omega$, contradicting the assumption that $\omega$ is the worst instance.
This completes the proof.
\end{proof}
By the above lemma, we have that,

\begin{equation*}
\frac{\sum\limits_{t\in \mathcal{T}_r}\tilde{p}_t x(t)}{\sum\limits_{t\in \mathcal{T}_r} \tilde{p}_t x^*(t)}\leq \textsf{cr}(\ocmrate).
\end{equation*}
Then, there is,

\begin{equation*}
\textsf{cr}(\ocmrate)\leq \max\left\{\frac{\mathcal{Q}+\left(\sum\limits_{i=1}^{n}\sum\limits_{j=1}^{\hat{v}_i}B_{i,j}-\sum\limits_{i=1}^{n}\sum\limits_{j=1}^{\hat{v}_i}\hat{b}_{i,j}\right) p_{\max}}{\sum\limits_{i=1}^{n}\sum\limits_{j=1}^{\hat{v}_i}\hat{\xi}_{i,j}B_{i,j}},\alpha \right\}.
\end{equation*}

Combining with Lemma \ref{lem:a_critical_property_on_g}, we have $\textsf{cr}(\ocmrate)\leq \alpha$.

In this way, we prove that \ocmrate is also $\alpha$-competitive.

\section{Energy Demand, Price, and Renewable Data Traces}
\label{app:data}
\paragraph{Data Center Energy Demand}
We use a repository of demand traces from Akamai's server clusters in several data centers collected during a 31 day period from multiple locations around the world. The data includes the server load information from 973 data centers in 102 countries, collected every 5 minutes. To calculate energy consumption as a function of load, we use the standard linear model~\cite{barroso2007case}. Let $d_{\mathrm{idle}}$ and $d_{\mathrm{peak}}$ be the energy consumption by an idle and a fully utilized server, respectively. Then, the energy (in kWh) consumed by a server serving normalized load $l \in [0,1]$ is $d(l) = d_{\mathrm{idle}} + (d_{\mathrm{peak}} - d_{\mathrm{idle}}) \times l$. In our experiments, we use $d_{\mathrm{idle}} = 100$kWh, and $d_{\mathrm{peak}}=250$kWh, representing energy proportionality factor, i.e., defined as $(d_{\mathrm{peak}} - d_{\mathrm{idle}})/d_{\mathrm{peak}}$, of $0.6$~\cite{palasamudram2012using}.
We report the results of different algorithms for a selection of data centers in the four different cities: Los Angeles, New York, Dallas, Frankfurt. A representative 7-days snapshot of the energy consumption is depicted in Figure~\ref{fig:dc_power_demand}.

\begin{figure}[!h]
	\begin{center}
		\subfigure[CAISO, high fluctuations]{\label{fig:CAISO_bp}
			\includegraphics[angle=0,scale=0.18]{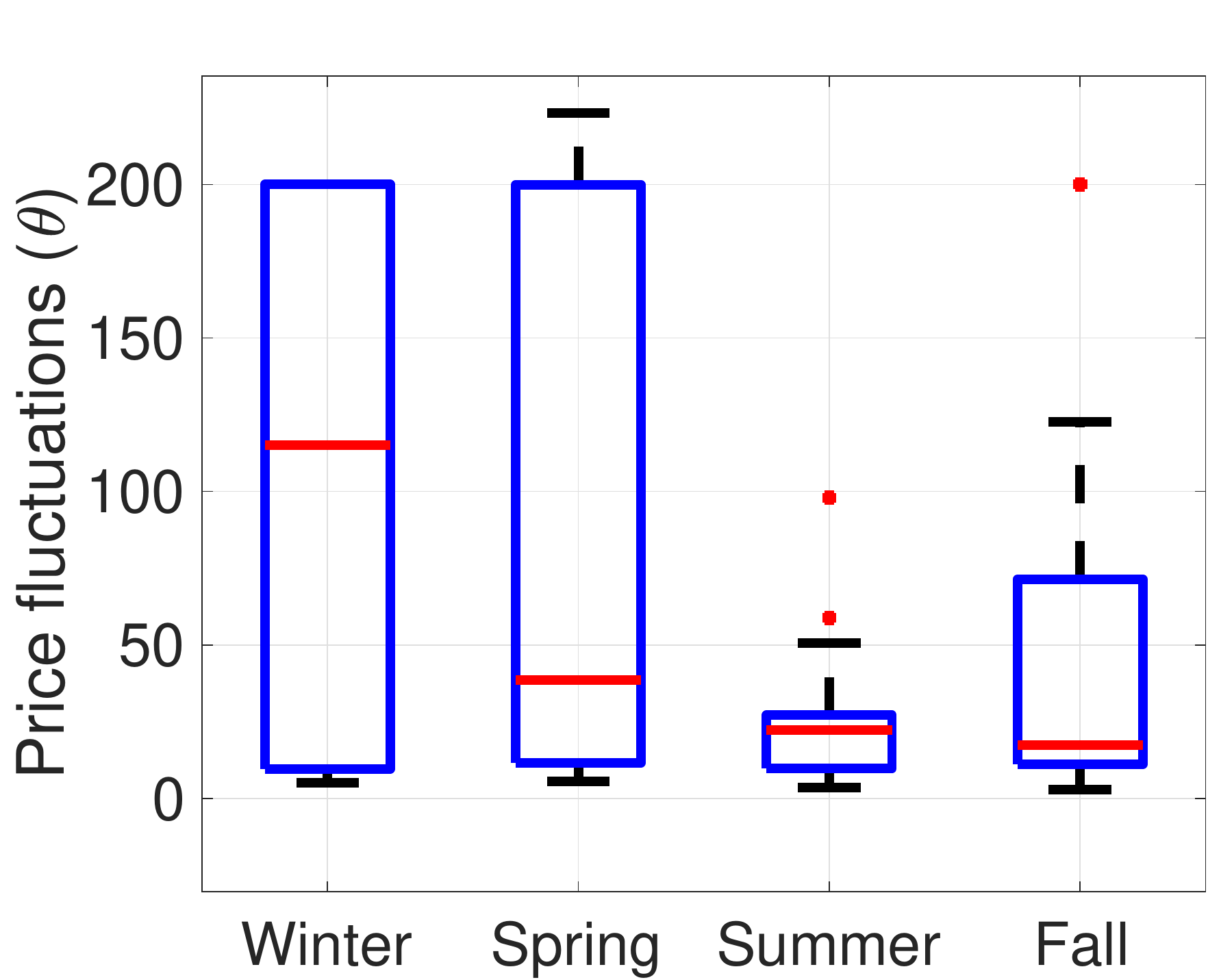}}\hspace{2mm}
		\subfigure[NYISO, high fluctuations]{\label{fig:NYISO_bp}
			\includegraphics[angle=0,scale=0.18]{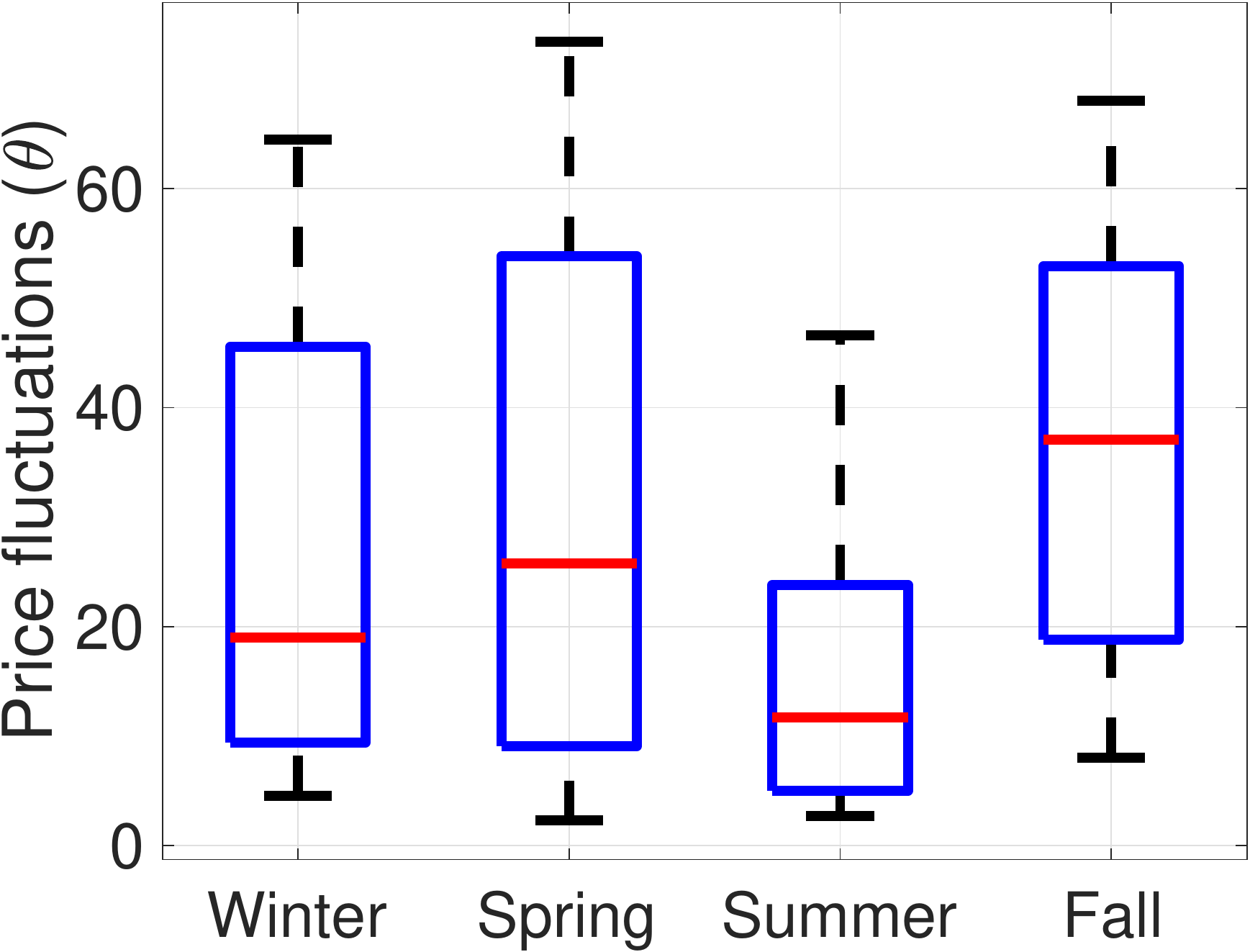}}\hspace{2mm}
		\subfigure[ERCOT, medium fluctuations]{\label{fig:ERCOT_bp}
			\includegraphics[angle=0,scale=0.18]{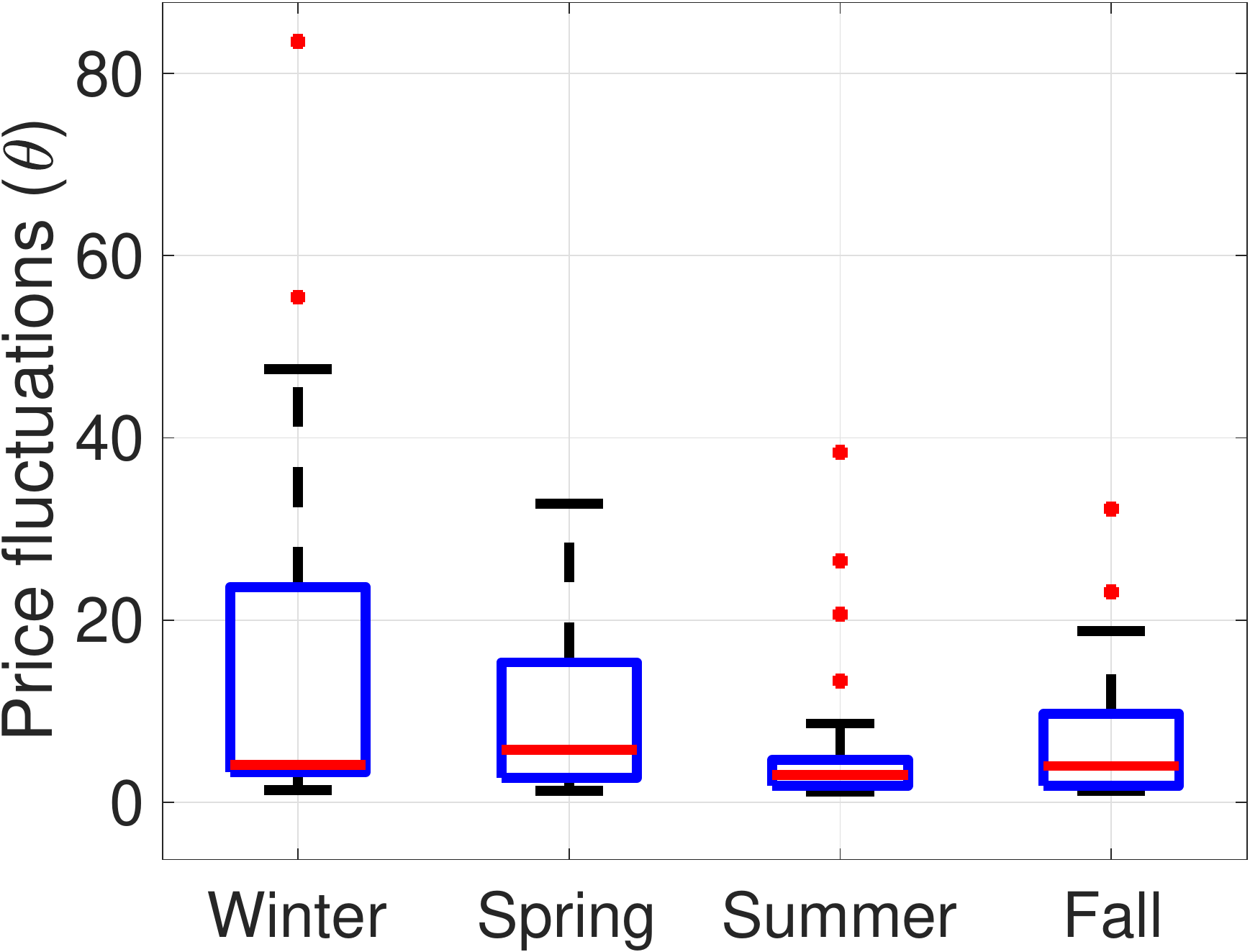}}\hspace{2mm}
		\subfigure[DE, low fluctuations]{\label{fig:DE_bp}
			\includegraphics[angle=0,scale=0.18]{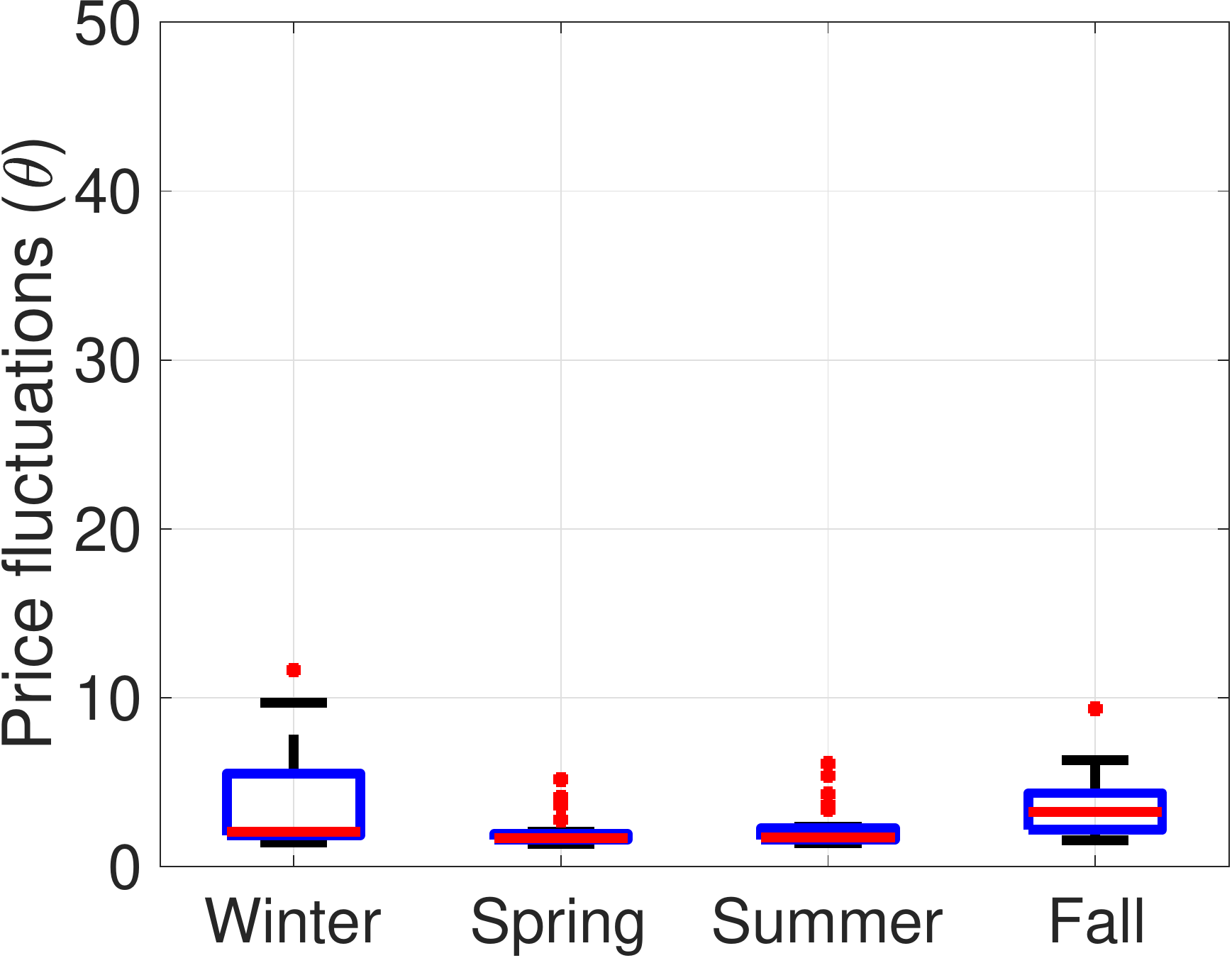}}\hspace{2mm}
	\end{center}	
	\vspace{-4mm}
	\caption{Price fluctuations in different seasons/locations}
	\label{fig:market_bp}
\end{figure}

\paragraph{Energy Prices}
We use the electricity prices from a local electricity market for each data center location, i.e., CAISO~\cite{CAISO} for Los Angeles, NYISO~\cite{NYISO} for New York, ERCOT~\cite{ERCOT} for Dallas, and German Electricity Market (abbreviated as DE in results) for Frankfurt.
Note that FERC is forcing the U.S. electricity markets to transition to real-time markets with 5-minutes settlement intervals~\cite{epri2016}. Currently CAISO and NYISO adapt this policy, and the rest are in the middle of this transition. To have a common settlement interval for all different markets, we set the length of each slot to 5 minutes, and for those that the current real-time market comes with different length (ERCOT with 15 minutes and DE with 1 hour intervals), we down-sample the market price readings to 5 minutes.

Recall that the performance of our algorithm is a function of parameter $\theta$ (see Equation~\eqref{eq:alpha}) as the price fluctuation ratio.
Note that different markets exhibit different price fluctuations in different seasons. In Figure~\ref{fig:market_bp}, the box plots for different markets in different seasons are shown. The results show that the fluctuations in spot prices in CAISO (Figure~\ref{fig:CAISO_bp}) and NYISO (Figure~\ref{fig:NYISO_bp}) are high, in ERCOT it is medium (Figure~\ref{fig:ERCOT_bp}), and in  DE it is low (Figure~\ref{fig:DE_bp}).
Consequently, to have a comprehensive experimental study in different fluctuation patterns, we compare the performance of different algorithms in different markets and different seasons.


\paragraph{Renewable Data Traces}
We evaluate the results of different algorithms in three different scenarios: (i) without any on-site renewable supply, (ii) with $50\%$ penetration on-site wind generation; and (iii) with $50\%$ penetration on-site solar generation. Note that with local renewable supply, the net energy demand, i.e., the total demand subtracted by the local renewable supply, must be procured from the grid with real-time pricing. Since the renewable supply is uncertain, the net demand in cases (ii) and (iii) will be more uncertain (as depicted in Figure~\ref{fig:LOSANGELES_wind}).

We use the solar data from PVWatts~\cite{dobos2014pvwatts} and obtain the hourly solar radiation in different seasons. We match each data center with solar readings from a location as close to it as possible. The exact distance from data center to the location from where the readings were obtained is show  in Table~\ref{tbl:solar_dist}. We scale the values such that $50\%$ of the total demand is satisfied by solar panels.
We set the parameters according to the default values~\cite[Table~2]{dobos2014pvwatts}. While the spot prices and energy demand readings are 5-minutes, the solar data is hourly. Hence, we make an assumption that the solar data is almost constant during each hour and use the hourly values for each 5 minute slots.
The wind traces for the U.S. locations are obtained from Eastern and Western data sets~\cite{wind}, and for European location are obtained from Open Power System Data~\cite{wind2}.
A summary of locations, markets, and distances to renewables is listed in Table~\ref{tbl:solar_dist}.
\begin{table}[!h]
	\caption{Summary of data center locations, markets, and nearby solar and wind power plants used in experiments}
	\vspace{-4mm}
	\label{tbl:solar_dist}
	\centering
	\footnotesize
	\begin{tabular}{|c|c|c|c|}
		\hline
		\textbf{City} & \textbf{Market} & \textbf{Dist. from solar} & \textbf{Dist. from wind}\\
		\hline \hline
		Los Angeles & CAISO & 80 mi. & 48 mi.\\
		\hline
		New York & NYISO & 37 mi. & 52 mi.\\
		\hline
		Dallas & ERCOT & 63 mi. & 145 mi.\\
		\hline
		Frankfurt & DE & 35 mi. & - \\
		\hline
	\end{tabular}
	\normalsize
\end{table}


\end{document}